\documentclass[10pt,twocolumn,twoside]{IEEEtran} 
\IEEEoverridecommandlockouts    
\overrideIEEEmargins

\usepackage{psfrag}
\usepackage{amsmath}
\usepackage{amssymb}
\usepackage{amsmath,amssymb,amsfonts}
\usepackage{amsfonts}
\usepackage{latexsym}
\usepackage{graphicx}
\usepackage{lettrine}
\usepackage{arydshln}
\usepackage{multirow}
\usepackage{accents}
\usepackage{mathtools}
\usepackage[subnum]{cases}
\usepackage{colortbl}
\usepackage{fancyhdr}
\usepackage{epstopdf}
\usepackage{float}
\usepackage{subfigure}
\usepackage{multirow}
\usepackage{rotating}
\IEEEoverridecommandlockouts
\usepackage{enumerate}
\usepackage{multicol}
\usepackage[ruled]{algorithm2e}
\usepackage[usenames,dvipsnames,svgnames,table]{xcolor}
\usepackage{soul}
\usepackage{verbatim}
\usepackage{hyperref}
\usepackage{pgfplots}
\pgfplotsset{compat=newest} 
\pgfplotsset{plot coordinates/math parser=false} 

\usepackage{caption}
\captionsetup[figure]{font=normalsize,labelfont=normalsize}

\usepackage{chngcntr}
\usepackage{apptools}
\AtAppendix{\counterwithin{lemma}{subsection}}
\AtAppendix{\counterwithin{theorem}{subsection}}
\AtAppendix{\counterwithin{assumption}{subsection}}


\usepackage{amsthm}
\newtheorem{theorem}{Theorem}[]
\newtheorem{definition}{Definition}
\newtheorem{assumption}{Assumption}
\newtheorem*{assumption*}{Assumption}
\newtheorem*{sassumption*}{Standing assumption}
\newtheorem{proposition}{Proposition}
\newtheorem{lemma}{Lemma}

\newtheorem{remark}{Remark}

\DeclareSymbolFont{bbold}{U}{bbold}{m}{n}
\DeclareSymbolFontAlphabet{\mathbbold}{bbold}

\newcommand{\vect}[1]{\mathbbold{#1}}
\newcommand{\zeros}[1][]{\vect{0}_{#1}}
\newcommand{\ones}[1][]{\vect{1}_{#1}}

\newcommand{\argmin}[1]{\underset{#1}{\operatorname{argmin}}\,}
\newcommand{\infinf}[1]{\underset{#1}{\operatorname{inf}}\,}
\newcommand{\supsup}[1]{\underset{#1}{\operatorname{sup}}\,}

\DeclarePairedDelimiter{\floor}{\lfloor}{\rfloor}

\makeatletter
\newcommand{\specialcell}[1]{\ifmeasuring@#1\else\omit$\displaystyle#1$\ignorespaces\fi}
\makeatother
\newcommand{\smallblacksquare}{\mbox{\rule[0pt]{1.3ex}{1.3ex}}}


\renewcommand\i{^i}
\renewcommand\j{^j}
\newcommand{\aag}{\infty}
\newcommand{\R}{\mathbb{R}}
\newcommand{\Z}{\mathbb{Z}}
\newcommand{\N}{N}
\newcommand{\mc}{\mathcal}
\newcommand{\X}{\mc{X}}
\newcommand{\K}{\mc{K}}
\newcommand{\G}{\mc{G}}
\newcommand{\Ginf}{\mc{G}_\infty}

\newcommand{\maxx}[1]{\underset{#1}{\operatorname{max}}\,}
\renewcommand{\t}{\theta}
\renewcommand{\tt}{\hat \theta}

\newcommand{\Finf}{F_\aag}
\newcommand{\Qinf}{Q_\aag}
\newcommand{\Ainf}{A_\aag}
\newcommand{\binf}{b}
\newcommand{\sinf}{\sigma_\infty}
\newcommand{\di}{n}
\newcommand{\Gnu}{\G_\nu}
\newcommand{\Fnu}{F_\nu}
\newcommand{\Qnu}{Q_\nu}
\newcommand{\Anu}{A_\nu}

\newcommand{\snu}{\sigma_\nu}

\newcommand{\Sinf}{S_\infty}
\newcommand{\nuSMON}{\nu_{\textup{SMON}}}
\newcommand{\nueps}{\nu_{\varepsilon}}
\newcommand{\be}{\begin{equation}}
\newcommand{\ee}{\end{equation}}

\newcommand{\defeq}{\coloneqq}
\newcommand{\eqdef}{\eqqcolon}

\definecolor{Gray}{gray}{0.9}
\definecolor{Granata}{rgb}{0.64,0,0}
\definecolor{QuasiBlue}{rgb}{0.03,0.3,0.72} 
\definecolor{PortlandGreen}{RGB}{99,166,63} 
\definecolor{YellowIntense}{RGB}{240,198,31} 
\definecolor{OrangeRed}{RGB}{255,69,0} 
\definecolor{Granata}{rgb}{0.64,0,0} 
\definecolor{ReadableGreen}{RGB}{0,120,0}

\newcommand{\basi}{\textcolor{ReadableGreen}}

\title{\LARGE \bf
A distributed algorithm for average aggregative games \\ with coupling constraints}

\IEEEoverridecommandlockouts
\author{Francesca Parise, Basilio Gentile and  John Lygeros%
\thanks{The first two authors contributed equally. F. Parise is with the Laboratory for Information and Decision Systems, MIT, Cambridge, MA, USA. Email: \texttt{parisef}\texttt{@mit.edu}. B. Gentile and J. Lygeros  are with the Automatic Control Laboratory, ETH Zurich, Switzerland.
Email: \{\texttt{gentileb, lygeros}\}\texttt{@control.ee.ethz.ch}.
Research partially supported by the European Commission under project DYMASOS (FP7-ICT 611281) and by the Swiss National Science Foundation (grant  P2EZP2\_168812).
}}
%
                  
\begin{document}
\setstcolor{PortlandGreen}
\setul{}{1.5pt}

\maketitle

\begin{abstract}
We consider the framework of average aggregative games, where the cost function of each agent depends on his own strategy and on the average population strategy.
We focus on the case in which the agents are coupled not only via their cost functions,
but also via constraints coupling their strategies.
We propose a distributed algorithm that achieves an almost-Nash equilibrium
by requiring only local communications of the agents, as specified by a sparse communication network.
The proof of convergence of the algorithm relies on the auxiliary class of network aggregative games and
exploits a novel result of parametric convergence of variational inequalities, which is applicable beyond the context of games.
We apply our theoretical findings to a multi-market Cournot game with transportation costs and maximum market capacity.

\end{abstract}
\section{Introduction}

\lettrine{A}VERAGE aggregative games are used to describe populations of non-cooperative agents where each agent
is not subject to one-to-one interactions with the others,
but is rather influenced by the average strategy of the entire population.
These games
are gaining more and more attention in the control community for their ability to model a vast number of technological applications
ranging from traffic~\cite{correa2004selfish} and wireless systems~\cite{alpcan2002cdma} to electricity~\cite{ma2013decentralized} and commodity markets~\cite{johari2004efficiency}.

Applying game theoretical concepts to engineering systems becomes challenging
in presence of a large number of agents with private cost functions and constraints,
willing to exchange information only with a (small) subset of the population.
Moreover, often the agents' strategies must collectively satisfy some physical coupling constraints,
as in electricity markets~\cite{kar2012distributed}, where the overall energy demand should not exceed the grid capacity,
or in communication networks~\cite{pan2009games}, where the package traffic should not exceed the congestion level.




\subsection*{Main contributions} 
To overcome the difficulties described above, in this paper we present a \textit{distributed algorithm} that guarantees convergence to an almost Nash equilibrium of a population average aggregative game \textit{with affine coupling constraints} by using only local communications over a \textit{sparse network}.
Our method works for populations of heterogeneous agents with generic local convex constraints and
generic convex cost functions
(i.e. we do not need to assume that the cost functions are quadratic,
as in \cite{grammatico:parise:colombino:lygeros:14}).
To prove algorithmic convergence we rely on two key steps:
i) we show that the Nash equilibrium of the average aggregative game of interest can be approximated to any desired precision by the Nash equilibrium of an auxiliary network aggregative game (as defined in~\cite{parise2015network});
ii) we propose a distributed algorithm to compute the Nash equilibrium of the auxiliary game.

As a side contribution, to tackle i) we prove a result on convergence of parametric variational inequalities (VIs) which is  applicable beyond the context of games.
A detailed explanation of how our result improves the existing literature on convergence of parametric VIs is given in Appendix~B.  Moreover, to tackle ii), we derive a distributed  algorithm to find a Nash equilibrium of any
network aggregative game.
Both these results can find application beyond the specific context of this paper and are thus of interest on their own. 

To illustrate our theoretical findings,  we study in detail a Cournot game with transportation costs, as introduced in~\cite[Section 1.4.3]{facchinei2007finite}.
We note that such setup extends the Cournot game~\cite[Section 2]{jensen2010aggregative} and
the multi-market Cournot game~\cite[Section 7.1]{yi2017distributed} by introducing transportation costs.
Our novel contributions consist in
introducing coupling constraints,
performing a specific study of the properties of a fundamental operator associated with the game,
and focusing on  distributed  convergence.

\subsection*{Comparison with the literature} 

The  problem of coordinating the agents to a Nash equilibrium is  of great importance for  control purposes and  it has consequently been addressed by many authors in the last few years. In the following we review such rapidly growing literature by distinguishing whether the proposed algorithms
1) can only be applied when the agent strategy sets are decoupled or allow for  constraints coupling the agents' strategies 
and 2) require  a central coordinator (decentralized algorithms) or employ only local communications (distributed algorithms).
We note that in some applications distributed algorithms are preferable to decentralized ones,
for reasons of privacy, security or for the lack of a central operator.

A vast literature  focused on the case of \textit{decoupled} strategy sets,
where the feasible strategy set of each agent is not affected by the strategies of the other agents.
Decentralized algorithms relying on the presence of a central operator
that can gather and broadcast information to the agents
are suggested in~\cite{grammatico:parise:colombino:lygeros:14,dario2015aggregative}.
On the other hand, distributed algorithms relying only on local communications among the agents
are suggested in~\cite{koshal2012gossip,chen2014autonomous,parise2015networkA,koshal2016distributed}.

All the previously mentioned algorithms cannot be applied  to the case of  \textit{coupling constraints}, 
because they build on the core assumption that the strategy sets are decoupled.
Consequently, including coupling constraints requires a fundamental rethinking of the algorithms suggested in the literature,
as we highlight in detail in Section~\ref{sec:literature_relation}.
A decentralized algorithm overcoming this issue for average aggregative games has been recently suggested in~\cite{paccagnan2016distributed,gentile2017nash},
by exploiting a primal-dual reformulation of the Nash equilibrium problem originally introduced for generic games
in~\cite{rosen1965existence} and~\cite[Theorem 3.1]{facchinei2007generalized_2}.
To the best of our knowledge,
the only distributed algorithm available in the literature for average aggregative games with coupling constraints is~\cite{liang2016distributed}.
However such algorithm is only applicable if the coupling constraints can be expressed as
the solution set of a system of linear equations~\cite[eq. (5)]{liang2016distributed}.
This is a very restrictive assumption that prevents the applicability of the algorithm suggested in \cite{liang2016distributed} to any of the practical cases discussed before.

We finally note that our work has some affinity with the distributed algorithms suggested in \cite{yi2017distributed,frazzoli,yin2011nash} to compute a Nash equilibrium of generic games (i.e., games that do not have the aggregative structure considered here) with coupling constraints.
The term ``distributed'' in all these references, however,  refers to the fact that any specific agent is only allowed to communicate with the agents that  affect his  cost function.
In average aggregative games the cost function of each agent is affected by the strategy of all the other agents, because it is affected by the average population strategy.
Consequently, the schemes proposed in~\cite{frazzoli,yin2011nash,yi2017distributed} can theoretically  be applied to average aggregative games, but they would require communications among all the agents.

The average aggregative game (AAG) considered in this paper also relates to the class of mean field games (MFGs),
since both in AAGs and in MFGs the agents are influenced only by an aggregate of
the population strategies~\cite{huang2007large,lasry2007mean,bensoussan2016linear,huang2012social,bensoussan2013mean}.
There are however some fundamental differences that make results derived for MFGs not applicable in our context. Firstly, in MFGs the strategies of the agents are unconstrained: the coupling constraints that we consider here cannot be handled with the mean field approach based on partial differential equations. Secondly, MFG results are derived for the limit of infinite population size: we here instead consider populations of any size. Thirdly, in MFGs agents are typically homogeneous or have prior (probabilistic) information of the parameters of the other agents: we instead assume that the agents have no information on the rest of the population and rely on local communications over a network. Finally, MFGs are stochastic dynamic games while AAGs are deterministic static games.

\textit{Organization:}
In Section~\ref{sec:agg_games} we formulate the game setup, we present our main algorithm and we state our main result.
In Section \ref{sec:auxiliary_results} we present some preliminary results, needed to prove our main theorem.
The main result, stated in Section~\ref{sec:agg_games}, is proven
in Sections~\ref{sec:relation} and~\ref{sec:iterative_scheme}.
Specifically, in Section~\ref{sec:relation} we study the relation between the Nash equilibrium of the auxiliary game (which is parametrized by $\nu\in\mathbb{N}$) and the Nash equilibrium of the original average aggregative game, which can be seen as the limiting game when $\nu\rightarrow \infty$. In Section~\ref{sec:iterative_scheme} we prove convergence of our algorithm to the Nash equilibrium of the auxiliary game.
Section~\ref{sec:application} focuses on the application, while Section~\ref{sec:conclusion} presents some  some generalizations of the previous theory, which were omitted to keep the exposition simple, and  future research directions. 
Appendix B is a standalone section containing the result on convergence of parametric VIs.
Appendices A and C contain some auxiliary definitions and results, respectively.
We report our main results as theorems, our auxiliary results as lemmas and auxiliary results that already exist in the literature as propositions.

\textit{Notation:} 
\fontdimen2\font=3pt
$||x||$ is the 2-norm of $x \in \R^n$.
$I_n \in \R^{n \times n}$ is the identity matrix, $\ones[n] \in \R^{n}$ is the vector of unit entries, $\zeros[n] \in \R^{n}$ is the vector of zero entries,
$e_i\in\R^n$ is the $i^\text{th}$ canonical basis vector.
Given $A\in\mathbb{R}^{n\times n}$, $A\succ0$ ($\succeq0$) $\Leftrightarrow$ $x^\top A x>0~(\ge0),$ $\forall x\neq 0$;
$\| A \|$ is the induced 2-norm of $A$.
$\textup{diag}(A)$ is the diagonal matrix which has the same diagonal of $A$.
$\textup{blkdiag}(A_1,\dots,A_N)$ is the block diagonal matrix whose blocks are the matrices $A_1, \dots, A_N$.
Given $\N$ vectors each in $\R^{n}$, $[x^1;\ldots;x^\N] \defeq [x^i]_{i=1}^\N\defeq[{x^1}^\top,\ldots ,{x^\N}^\top]^\top \in \R^{\N n}$ and $x^{-i}\coloneqq[x_1;\dots;x_{i-1};x_{i+1};\dots;x_\N]\in \R^{(\N-1)n}$.
Given $g(x):\mathbb{R}^n \rightarrow \mathbb{R}^m$ we define $\nabla_x g(x) \in \mathbb{R}^{n\times m}$ with
$[\nabla_x g(x)]_{i,j}\coloneqq \frac{\partial g_j(x)}{\partial x\i}$.
Given $g(x):\mathbb{R} \rightarrow \mathbb{R}$, we denote $g'(x) \defeq \frac{\partial g(x)}{\partial x}$.
Given sets $\mathcal{X}^1,\dots, \mathcal{X}^{\N} \subseteq \R^n$,
we denote $\frac{1}{\N} \sum_{i=1}^\N \mc{X}^i \defeq \{z \in \R^n \vert z = \frac{1}{\N} \sum_{i=1}^{\N} x\i, \text{for some } x\i \! \in \! \mc{X}\i \}$,
the convex hull of $\mc{X}^1,\dots,\mc{X}^N$ as $\text{conv}(\mc{X}^1,\dots,\mc{X}^N)$
and $\X^{-i} \defeq \mathcal{X}^1\times\ldots\mathcal{X}^{i-1}\times\mathcal{X}^{i+1}\times\ldots\mathcal{X}^N$.
$\Pi_{\mathcal{X}}[x]$ is the projection of the vector $x$ onto the set $\mathcal{X}$.
For $a, b \in \Z$, $a \leq b$, $\Z[a,b] \defeq [a,b] \cap \Z$.

\section{Problem formulation and main result}
\label{sec:agg_games}

\subsection{Average aggregative games}
Consider a population of $N \in \mathbb{N}$ agents,
where agent $i \in \Z[1,N]$ chooses his decision variable $x^i$ in his individual constraint set $\mathcal{X}^i \subseteq \R^n$,
and interacts with the other agents via the average of their strategies.
The aim of agent $i$ is to minimize his cost function
$ \textstyle J^i\left(x^i, \sigma_\infty(x) \right)=\textstyle J^i\left(z_1, z_2\right)\mid_{z_1=x^i,z_2=\sigma_\infty(x)}$, where $J^i\left(z_1, z_2 \right) : \mc{X}^i \times \text{conv}(\mc{X}^1,\dots,\mc{X}^N) \rightarrow \R$
and\footnote{The subscript $\infty$ does not refer to an infinite population, but to the fact that the agents interact through the \textit{exact} average $\frac 1N\sum_{j=1}^N x^j$, see also the organization section.}
\begin{equation*}
\sigma_\infty(x) \defeq \frac 1N\sum_{j=1}^N x^j.
\end{equation*}
%
We denote by $x \defeq [x^1;\ldots;x^\N] \in \X \defeq \mc{X}^1 \times \dots \times \mc{X}^N \subset \R^{Nn}$ and
we assume that, besides the individual constraints,
the agents have to satisfy a linear coupling constraint expressed on the average strategy
\begin{equation}
x \in \mc{C}_\infty  \defeq \{x \in \R^{\N n}\,\vert\, \hat A \sigma_\infty(x) \le \hat b \},
\label{eq:coupling_constraints}
\end{equation}
with $\hat A \in\R^{m\times{n}}$, $\hat b \in \R^{m}$, for some $m>0$. 
The coupling constraints in~\eqref{eq:coupling_constraints} can model the fact that the usage level for a certain commodity cannot exceed a fixed capacity,
as in~\cite{kar2012distributed} and in~\cite[Fig. 4]{paccagnan2016distributed}.
The strong modeling flexibility of linear coupling constraints is further discussed in~\cite[Remark 3.1]{yi2017distributed}.

The cost and constraints just introduced give rise to the average aggregative game (AAG)
\begin{equation}
 \mc{G}_\infty \defeq \left\{ 
 \begin{aligned}
&\! \textup{ agents}:  \; && (1,\dots,N) \\
&\! \textup{ cost of agent } i:\quad &&J^i(x^i,\sigma_\infty(x)) \\
&\! \textup{ individual constraint} : &&\mc{X}^i\\
&\! \textup{ coupling constraint} :  &&\mc{C}_\infty.
\end{aligned}\right. 
\label{eq:GNEP}
\end{equation}
%
The following conditions on cost functions and constraints of $\G_\infty$  are assumed to hold throughout the rest of the paper.
\begin{sassumption*}
For each agent $i,$
the individual constraint set $\mathcal{X}^i \subset \R^n$ is convex, compact and has non-empty interior. 
The cost function $J^i(x^i,\sigma_\infty(x))$  is convex in $x^i$ for all $x^{-i}
\in \X^{-i}$
and $J^i(z_1,z_2)$ is continuously differentiable  in $z_1,z_2$. \hfill $\square$
\end{sassumption*}

Let us denote
$\mc{Q}_\infty \defeq \mc{X}\cap\mc{C}_\infty $,
and 
\begin{equation*}
\begin{aligned}
\mc{Q}_\infty ^i(x^{-i}) &\coloneqq \{x^i\in\mc{X}^i\, \vert \, x\in\mathcal{C}_\infty \}\\
&= \{x^i\in\mc{X}^i\, \vert \, \hat A \sigma_\infty(x) \le \hat b\}.
\end{aligned}
\end{equation*}
\noindent The well-known concept of Nash equilibrium for games with coupling constraints~\cite{facchinei2007generalized} can be specialized to $\Ginf$ as follows.
%

\begin{definition}[Nash Equilibrium]\label{def:NE}
A set of strategies $\bar x = [\bar x^1; \dots; \bar x^N] \in\mc{Q}_\infty$
is an $\varepsilon$-Nash equilibrium of $\Ginf$
if for all $ i\in \Z[1,N] $ and all $ x^i \! \in \! \mc{Q}_\infty^i(\bar x^{-i})$ 
\begin{align}
 J^i(\bar x^i,\sigma_\infty(\bar x)) \! \le \textstyle \! J^i\left( \! x\i, \frac1N x\i \! + \! \sum_{j \neq i} \frac1N \bar x^j \right) + \varepsilon\,. 
\label{eq:def_GNE}
\end{align}
If~\eqref{eq:def_GNE} holds with $\varepsilon = 0$ then $\bar x$ is a Nash equilibrium. 
\hfill $\square$
\end{definition}

In words, a feasible set of strategies $\left\{ \bar x^i \right\}_{i=1}^\N$ is a Nash equilibrium if
no agent can improve his cost by changing his strategy,
if the strategies of the other agents are fixed.
A Nash equilibrium for a game with coupling constraints is usually called a generalized Nash equilibrium to denote the fact that the set of feasible strategies $\mathcal{Q}^i_\infty$ for each agent $i$ depends on the strategies $x^{-i}$ of the other agents~\cite{facchinei2007generalized};
in the following we omit the word generalized for brevity.

\subsection{Communication limitations}

Our main objective is to coordinate the agents' strategies to a Nash equilibrium by using
a distributed algorithm that only requires communications over a pre-specified (sparse) communication network.
We model such network by its adjacency matrix $T \in [0,1]^{N \times N}$,
where the element $T_{ij}\in\left[0,1\right]$ is the
weight that agent $i$ assigns to communications received from agent $j$,
with $T_{ij}=0$ representing no communication.
For brevity, we refer to $T$ as communication network, even though it is the adjacency matrix of the communication network.
Agent $j$ is an in-neighbor of $i$ if $T_{ij}>0$ and an out-neighbor if $T_{ji}>0$.
We denote the sets of in- and out-neighbors of agent $i$ as $\mathcal{N}^i_{\textup{in}}$ and $\mathcal{N}^i_{\textup{out}}$, respectively.

We introduce the following assumption on the communication network $T$.
\begin{assumption}[Communication network]
\label{ass:primitive_doubly}
The communication matrix $T$  is primitive and doubly stochastic.
\hfill $\square$
\end{assumption}
The definitions of primitive and doubly-stochastic can be found in~\cite{olfati2007consensus},
where  graph theoretical conditions guaranteeing  Assumption~\ref{ass:primitive_doubly} are also presented.
Loosely speaking, Assumption~\ref{ass:primitive_doubly} ensures that if the agents communicate a sufficiently large number of times over $T$,
they are able to recover the average of the strategies across the entire population.

\subsection{Main result: A distributed algorithm for $\Ginf$}
To compute an almost Nash equilibrium in a distributed fashion,
we propose the following Algorithm~\ref{alg:NAG_nu},
where at iteration $k$ each agent $i$ updates four variables:
\begin{itemize}
\item[-] his strategy $x^i_{(k)}$,
\item[-] a dual variable $\lambda^i_{(k)}$ relative to the coupling constraint $\mathcal{C}_\infty$,
\item[-] a local  average of his in-neighbors' strategies $\sigma^i_{\nu,(k)}$,
\item[-] a local average of his out-neighbors' dual variables $\mu^i_{\nu,(k)}$.
\end{itemize}
To overcome the fact that the communication network is sparse we assume that to compute $\sigma^i_{\nu,(k)}$ and $\mu^i_{\nu,(k)}$ the agents  communicate  not once but multiple times over the network $T$.
The  number of communications per update is denoted by $\nu\in\mathbb{N}$ and is a  tuning parameter of the algorithm. 
The personal strategy (or primal variable) and the dual variable, in turn, are updated by a gradient-like step that depends on a second tuning parameter $\tau>0$.
In particular, the strategy update step is similar to that of the standard projection algorithm~\cite[Algorithm 12.1.1]{facchinei2007finite}.
We finally note that both tuning parameters $\nu$ and $\tau$ are  decided a priori and do not change during the algorithm execution.

\begin{algorithm}
\caption{Distributed algorithm for $\Ginf$}
\label{alg:NAG_nu}

\textbf{Initialize}: 
Agent $i$ with state $x^i_{(0)}\in \X\i$ and dual variable $\lambda_{(0)}^i \! \in \! \R^m_{\ge0}$.
Set $\tau>0$, $\nu \in \mathbb{N}$,  $k =0$, $\sigma^i_{\nu,(0)} =x^i_{(0)}$.
\vspace{0.1cm}

\textbf{Iterate until convergence}:
\vspace*{-0.1cm}
{\small
\begin{subequations}
\label{eq:apa_inner}
\begin{align*}
& \hspace{-0.2cm}\mbox{{\textit{Communication: Dual}}}\\
&\left\lfloor \begin{array}{l}
\mu^i_{\nu,(k)}  \textstyle \leftarrow \lambda^{i }_{(k)}, \forall \, i \\
\mbox{repeat } \nu  \mbox{ times}\\
\qquad  \mu^i_{\nu,(k)}  \textstyle \leftarrow \sum_{j \in \mc{N}^i_\text{out}} T_{ji} \, \mu^j_{\nu,(k)}, \forall \, i
\hspace*{1.7cm} \\ 
\end{array}	 \right. \\[-0cm]
& \hspace{-0.2cm}\mbox{{\textit{Update: Primal }}}\\
&\left\lfloor \begin{array}{l}
F^i_{\nu,(k)}\leftarrow  \nabla_{\!\! z_1} \!  J^i(x^i_{(k)},\sigma^i_{\nu,(k)}\! ) + [T^\nu]_{ii} \nabla_{\!\! z_2}
J^i(x^i_{(k)},\sigma^i_{\nu,(k)}), \forall \, i \\
x^{i}_{(k+1)}\leftarrow \! \Pi_{\mathcal{X}^i}[x^{i }_{(k)} \! - \! \tau ( F^i_{\nu,(k)} + \! {\hat A}^\top \! \mu^i_{\nu,(k)}) ], \forall \, i \\
\end{array}	 \right. \\
& \hspace{-0.2cm}\mbox{{\textit{Communication: Primal}}}\\
&\left\lfloor \begin{array}{l}
\sigma^i_{\nu,(k+1)} \textstyle \leftarrow x^{i }_{(k+1)}, \forall \, i  \\
\mbox{repeat } \nu  \mbox{ times}\\
\qquad \sigma^i_{\nu,(k+1)}  \textstyle \! \leftarrow \sum_{j \in \mc{N}^i_\text{in}} T_{ij} \sigma^j_{\nu,(k+1)}, \forall \, i \hspace*{-0.3cm}\\
\end{array}	 \right. \\
& \hspace{-0.2cm}\mbox{{\textit{Update: Dual}}}\\
&\left\lfloor \begin{array}{l}
\lambda^i_{(k+1)}  \leftarrow \! \Pi_{\mathbb{R}^{m}_{\ge0}}[\lambda^i_{(k)}-\tau (\hat b-2\hat A\sigma^i_{\nu,(k+1)} +\hat A\sigma^i_{\nu,(k)})], \forall \, i \label{eq:apa_inner_c}\end{array}	 \right. \\
& \hspace{-0.15cm} k \leftarrow k+1 \\[-0.9cm]
\end{align*}
\end{subequations}}
\end{algorithm}
  
The main objective of the paper is to prove convergence of Algorithm \ref{alg:NAG_nu} to an $\varepsilon_\nu$-Nash equilibrium of  $\mathcal{G}_\infty$,
where $\varepsilon_\nu\rightarrow 0$ as $\nu\rightarrow \infty$.
To this end, we  make use of the following additional assumptions.

\begin{assumption}[Coupling constraints]\label{ass:R2}
The matrix $\hat A$ and the vector $\hat b$ in \eqref{eq:coupling_constraints} are such that the following implication holds.
\begin{equation*}
\{ \hat A^\top  \hat s= 0 ,\quad   \hat b^\top  \hat s\le 0    , \quad  \hat s\ge 0  \} \quad \Rightarrow  \quad  \hat s=0.
\tag*{$\square$}
\end{equation*}
\end{assumption}

It can be shown that if the coupling constraint~\eqref{eq:coupling_constraints} consists of an upper and a lower bound for $\sinf(x)$,
i.e. it is of the form $b_1 \le \sinf(x) \le b_2$, with $b_1 < b_2$, then Assumption~\ref{ass:R2} is satisfied.
We show in the application Section~\ref{sec:application}
another example of coupling constraint satisfying Assumption~\ref{ass:R2}.

\begin{assumption}[Regularity of cost functions]
\label{ass:F_Fnu_SMON}
The operator $\Finf: \, \X \to \R^{Nn}$, defined as
\begin{equation}
x \mapsto F_\infty(x) \defeq [\nabla_{x^i} J^i(x^i, \sinf(x) )]_{i=1}^N
\label{eq:Finf}
\end{equation}
is  strongly monotone.
Further, the cost function $J\i(z_1,z_2)$ is twice continuously differentiable for all $i \in \Z[1,N]$.
\hfill $\square$
\end{assumption}
\noindent We recall in the following the definition of strong monotonicity.

\begin{definition}[Strong monotonicity]
\label{def:SMON}
An operator $H: \K \subseteq \R^n \to \R^n$ is said to be strongly monotone if there exists $\alpha_H > 0$ such that
for all $x,y \in \K$
\begin{equation*}
(H(x)-H(y))^\top(x-y) \ge \alpha_H \| x - y \|^2.
\tag*{$\square$}
\end{equation*}
\end{definition}

We note that sufficient conditions for Assumption \ref{ass:F_Fnu_SMON} to hold
have been discussed in~\cite[Lemmas 3 and 4, Corollaries 1 and 2]{gentile2017nash}
for specific instances of average aggregative games.

Our main result is stated in the following theorem.

\begin{theorem}
\label{thm:main}
If Assumptions~\ref{ass:primitive_doubly},~\ref{ass:R2} and~\ref{ass:F_Fnu_SMON} hold, then
for every precision $\varepsilon > 0$, there exists a minimum number of communications $\nueps > 0$ such that, for every $\nu > \nueps$ and for every initial condition $(x_{(0)},\lambda_{(0)}) \in \X \times \R^{Nm}_{\ge 0}$,
the sequence $(x_{(k)})_{k=1}^\infty$ produced by Algorithm~\ref{alg:NAG_nu} converges to an $\varepsilon$-Nash equilibrium of $\G_\infty$ for $\tau$ small enough. \hfill $\square$
\end{theorem}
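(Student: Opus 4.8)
The plan is to prove Theorem~\ref{thm:main} in two conceptually separate steps, bridged by the auxiliary \emph{network aggregative game} $\G_\nu$ in which agent $i$ reacts to the local weighted average $\sigma^i_\nu(x)\defeq\sum_{j=1}^{N}[T^\nu]_{ij}x^j$ rather than to the exact average $\sinf(x)$, subject to the local coupling constraint $\hat A\,\sigma^i_\nu(x)\le\hat b$. Since $T$ is primitive and doubly stochastic (Assumption~\ref{ass:primitive_doubly}), $T^\nu\to\frac1N\ones[N]\ones[N]^\top$ as $\nu\to\infty$, so $\sigma^i_\nu(x)\to\sinf(x)$ uniformly on the compact set $\X$ and $\G_\nu$ recovers $\G_\infty$ in the limit. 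Theorem~\ref{thm:main} then follows from: (i) the Nash equilibrium $x^\star_\nu$ of $\G_\nu$ is an $\varepsilon_\nu$-Nash equilibrium of $\G_\infty$, with $\varepsilon_\nu\to0$ as $\nu\to\infty$; and (ii) Algorithm~\ref{alg:NAG_nu} converges to $x^\star_\nu$ from every initial condition provided $\tau$ is small enough.

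For step (i) I would first reformulate the variational Nash equilibrium of $\G_\nu$ as the solution of the VI defined by the pseudo-gradient operator $\Fnu(x)\defeq[\nabla_{z_1}J^i(x^i,\sigma^i_\nu(x))+[T^\nu]_{ii}\nabla_{z_2}J^i(x^i,\sigma^i_\nu(x))]_{i=1}^N$ over the feasible set $\mc{Q}_\nu$; this is exactly the operator whose $i$-th block the algorithm evaluates as $F^i_{\nu,(k)}$. Using continuous differentiability of $J^i$, compactness of $\X$, and $T^\nu\to\frac1N\ones[N]\ones[N]^\top$, one shows $\Fnu\to\Finf$ uniformly; since $\Finf$ is strongly monotone (Assumption~\ref{ass:F_Fnu_SMON}), a perturbation argument yields strong monotonicity of $\Fnu$ for all $\nu$ above some threshold, hence uniqueness of $x^\star_\nu$ (and of the limiting $x^\star_\infty$). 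I would then invoke the parametric VI convergence result of Appendix~B to conclude $x^\star_\nu\to x^\star_\infty$; the subtlety handled there is that the feasible sets $\mc{Q}_\nu$ themselves vary with $\nu$, because the constraint acts on the local rather than the exact average, so classical VI stability theory does not apply directly. Finally, substituting $x^\star_\nu$ into the Nash inequality~\eqref{eq:def_GNE} of $\G_\infty$ and bounding the mismatch via $\norm{\sigma^i_\nu(x^\star_\nu)-\sinf(x^\star_\nu)}\to0$ together with continuity of $J^i$ produces the vanishing gap $\varepsilon_\nu$.

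For step (ii) I would read Algorithm~\ref{alg:NAG_nu} as a distributed projected primal--dual (forward--backward type) iteration for the KKT system of the VI associated with $\G_\nu$, in which each agent keeps a local copy $\lambda^i$ of the multiplier of the coupling constraint and the repeated communications implement the averaging operator $T^\nu$ on the primal and dual variables, coupling the agents' updates. Stacking primal and dual variables into $z\defeq[x;\lambda]$, the update takes the form $z_{(k+1)}=\Pi_\Omega[z_{(k)}-\tau\,\Phi_\nu(z_{(k)})]$ on $\Omega\defeq\X\times\R^{Nm}_{\ge0}$, where $\Phi_\nu$ combines the strongly monotone primal block $\Fnu$ with the affine constraint terms $\hat A^\top\!\mu_\nu$ and $\hat b-\hat A\sigma_\nu$; these enter with opposite signs and therefore contribute a monotone, skew-symmetric component, while the extrapolation $2\sigma^i_{\nu,(k+1)}-\sigma^i_{\nu,(k)}$ in the dual step is what tames this component in the absence of dual strong monotonicity. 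Strong monotonicity of $\Fnu$ together with the monotone coupling makes the iteration a contraction in a suitable (preconditioned) norm once $\tau$ is below a threshold, giving linear convergence of $z_{(k)}$ to the unique fixed point, which by construction is the variational Nash equilibrium $x^\star_\nu$. Assumption~\ref{ass:R2} enters here to guarantee that the multiplier set is nonempty and bounded, so that this fixed point exists and the primal--dual reformulation is well posed.

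Combining the two steps: given $\varepsilon>0$, step (i) supplies $\nueps$ such that $\varepsilon_\nu\le\varepsilon$ for all $\nu>\nueps$, and for each such $\nu$ step (ii) drives $x_{(k)}$ to $x^\star_\nu$, which is then an $\varepsilon$-Nash equilibrium of $\G_\infty$. I expect the main obstacle to be step (i): proving that the equilibria converge and that $\varepsilon_\nu\to0$, precisely because the feasible set varies with the parameter $\nu$, which is what necessitates the new parametric VI result; a secondary difficulty is establishing strong monotonicity of the augmented operator $\Phi_\nu$ uniformly in $\nu$ and selecting a single step size $\tau$ valid for all $\nu>\nueps$.
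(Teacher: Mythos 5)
Your overall architecture coincides with the paper's: approximate $\Ginf$ by the auxiliary game $\Gnu$, prove (i) that the variational equilibrium $\bar x_\nu$ of $\Gnu$ is an $\varepsilon_\nu$-Nash equilibrium of $\Ginf$ with $\varepsilon_\nu \to 0$ (via uniform convergence $\Fnu \to \Finf$, a perturbation argument for strong monotonicity as in Lemma~\ref{lemma:c3}, and the parametric VI result of Appendix~B), and prove (ii) that Algorithm~\ref{alg:NAG_nu} converges to $\bar x_\nu$ as a projected primal--dual scheme (the paper does not redo the contraction estimate but invokes the asymmetric projection algorithm of~\cite{gentile2017nash}; the mechanism you describe, including the role of the dual extrapolation $2\sigma^i_{\nu,(k+1)}-\sigma^i_{\nu,(k)}$, is the right one).

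There is, however, a genuine gap in how you close step (i). Your argument --- substitute $\bar x_\nu$ into~\eqref{eq:def_GNE} and bound the mismatch $\|\sigma^i_\nu(\bar x_\nu)-\sinf(\bar x_\nu)\|$ --- only establishes the Nash inequality against deviations $x^i \in \Qnu^i(\bar x_\nu^{-i})$, because it starts from the Nash property of $\bar x_\nu$ in $\Gnu$. Definition~\ref{def:NE} requires two things more: (a) $\bar x_\nu \in \Qinf$, i.e., feasibility with respect to the coupling constraint of $\Ginf$, which does not follow automatically from $\bar x_\nu \in \Qnu$ and is obtained in the paper by summing the $N$ local constraints $\hat A\sigma^i_\nu(\bar x_\nu) \le \hat b$ and exploiting double stochasticity of $T^\nu$; and (b) the Nash inequality against \emph{every} $x^i \in \Qinf^i(\bar x_\nu^{-i})$, a set that differs from $\Qnu^i(\bar x_\nu^{-i})$. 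The paper handles (b) with Lemma~\ref{lemma:hausdorff_proj}: each $x^i\in\Qinf^i(\bar x_\nu^{-i})$ admits an approximant $\tilde x^i_\nu \in \Qnu^i(\bar x_\nu^{-i})$ within a distance that can be made uniformly small, and this rests on Hausdorff convergence of parametric polyhedral intersections (Lemma~\ref{lemma:hausdorff}), which is exactly where Assumption~\ref{ass:R2} is used. Without this set-approximation step, your cost-mismatch bound cannot exclude a profitable deviation that is feasible for $\Ginf$ but infeasible for $\Gnu$; as the paper itself stresses in Section~\ref{sec:literature_relation}, the Lipschitz argument you sketch is valid only in the absence of coupling constraints. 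Relatedly, you attribute Assumption~\ref{ass:R2} to step (ii) (boundedness of multipliers for the algorithm), but Theorem~\ref{thm:convergence_alg} does not use it; its role is entirely in the set-convergence machinery of step (i). Finally, the uniformity-in-$\nu$ of the step size that you list as a difficulty is not actually needed: the theorem fixes $\nu$ first and then chooses $\tau$, so a $\nu$-dependent bound as in~\eqref{eq:condition_tau_APA} suffices.
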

A precise bound on $\tau$ is given in Theorem \ref{thm:convergence_alg} in Section \ref{sec:iterative_scheme}.

\section{Auxiliary results}
\label{sec:auxiliary_results}
In this section we present some preliminary results needed for the proof of Theorem \ref{thm:main},
whose main idea consists in defining an auxiliary game $\mathcal{G}_\nu$ parametric in the number of communications $\nu$ and showing that
\begin{enumerate}
\item Algorithm~\ref{alg:NAG_nu}  converges to a specific Nash equilibrium of $\mathcal{G}_\nu$, called \textit{variational equilibrium};
\item the variational Nash equilibrium of $\mathcal{G}_\nu$ is an $\varepsilon_\nu$-Nash equilibrium of $\mathcal{G}_\infty$, with $\varepsilon_\nu\rightarrow 0$ as $\nu\rightarrow \infty$.
\end{enumerate}
To prep the field for these two results, which are proven in Section~\ref{sec:proof},
we here define the game $\mathcal{G}_\nu$,
introduce some basic results from the theory of variational inequalities (VI)
and study the relation between the VI operators associated with the two games $\mathcal{G}_\infty$ and $\mathcal{G}_\nu$.

\subsection{Multi-communication network aggregative games}

In each iteration of Algorithm~\ref{alg:NAG_nu} the agents need to communicate $\nu$ times over $T$;
mathematically this is equivalent to communicating once over a fictitious network with
adjacency matrix $T^\nu$.
Based on $T^\nu$, we introduce the local averages
\begin{equation*}
\sigma^i_\nu(x) \defeq \sum_{j=1}^N [T^\nu]_{ij} x\j.
\end{equation*}
We define $\Gnu$ as a game with same constraints and cost functions as in $\G_\infty$
except for the fact that each agent reacts to the local average $\sigma^i_\nu(x)$ instead of the global average $\sigma_\infty(x)$.
Specifically, upon defining
\begin{equation*}
\mc{C}_\nu \defeq \{x \in \R^{\N n}\,\vert\, \hat A \sigma^j_\nu(x) \le \hat b, \, \forall j \in \Z[1,N] \}
\end{equation*}
we formally introduce the multi-communication network aggregative game as
\begin{equation*}
 \mc{G}_\nu \defeq \left\{ 
 \begin{aligned}
&\! \textup{ agents}:  \; && (1,\dots,N) \\
&\! \textup{ cost of agent } i:\quad &&J^i(x^i,\sigma\i_\nu(x)) \\
&\! \textup{ individual constraint} : &&\mc{X}^i\\
&\! \textup{ coupling constraint} :  &&\mc{C}_\nu.
\end{aligned}\right. 
\end{equation*}
The definition of Nash equilibrium for $\Gnu$ is the analogous of Definition~\ref{def:NE} for $\Ginf$.
To analyze the relation between $\mathcal{G}_\infty$ and $\mathcal{G}_\nu$ we use the framework of variational inequalities introduced next.

\subsection{Basics of variational inequalities}
A fundamental fact used throughout the rest of the paper is that a specific class of
Nash equilibria of any convex game,
called variational Nash equilibria,
can be obtained by solving a variational inequality constructed from the game primitives.

\begin{definition}[Variational inequality]
Given a set $\K \subseteq \R^n$ and an operator $H: \K \to \R^n$,
the point $\bar x \in \R^n$ is a solution of $\textup{VI}(\K,H)$ if it satisfies
\begin{align}
H(\bar x)^\top (x-\bar x) \ge 0, \;\; \forall \, x \in \K.
\label{eq:VI_def}
\tag*{$\square$}
\end{align}
\end{definition}
\noindent A discussion on how variational inequalities generalize convex optimization programs can be found in~\cite[Section 1.3.1]{facchinei2007finite}.
In the following we report a sufficient condition for existence and uniqueness of the solution of a variational inequality.

\begin{proposition}[{\cite[Theorem 2.3.3.b]{facchinei2007finite}}]
\label{prop:uniqueness}
Consider a closed and convex set $\K$ and a strongly monotone operator $H: \K \to \R^n$.
Then VI($\K$,$H$) admits a unique solution.
\hfill $\square$
\end{proposition}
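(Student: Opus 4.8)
The plan is to prove uniqueness and existence separately, since uniqueness is an immediate consequence of strong monotonicity while existence requires a truncation argument to handle the possible unboundedness of $\K$. Throughout I assume, as in the cited reference, that $H$ is continuous; in the game setting of this paper this holds automatically because $H$ is built from gradients of continuously differentiable cost functions.

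For uniqueness, suppose $\bar x$ and $\bar y$ both solve $\textup{VI}(\K,H)$. Testing the defining inequality for $\bar x$ at the point $\bar y \in \K$, and the inequality for $\bar y$ at the point $\bar x \in \K$, I obtain $H(\bar x)^\top(\bar y - \bar x) \ge 0$ and $H(\bar y)^\top(\bar x - \bar y) \ge 0$. Summing these gives $(H(\bar x) - H(\bar y))^\top(\bar x - \bar y) \le 0$, which combined with the strong monotonicity bound $(H(\bar x) - H(\bar y))^\top(\bar x - \bar y) \ge \alpha_H \norm{\bar x - \bar y}^2$ forces $\norm{\bar x - \bar y} = 0$, i.e. $\bar x = \bar y$.

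For existence, the key observation is that strong monotonicity yields coercivity. Fixing any reference point $x_0 \in \K$, I would write
\begin{equation*}
H(x)^\top (x - x_0) \ge H(x_0)^\top(x - x_0) + \alpha_H \norm{x - x_0}^2 \ge \alpha_H \norm{x - x_0}^2 - \norm{H(x_0)}\, \norm{x - x_0},
\end{equation*}
so that $H(x)^\top(x-x_0) > 0$ whenever $\norm{x - x_0} > R_0 \defeq \norm{H(x_0)}/\alpha_H$. To construct a solution I would truncate the feasible set: let $\K_R \defeq \K \cap \{x : \norm{x - x_0} \le R\}$, which is compact and convex. Since $\textup{VI}(\K_R, H)$ is equivalent to the fixed-point equation $x = \Pi_{\K_R}[x - H(x)]$ and the right-hand side is a continuous self-map of the compact convex set $\K_R$, Brouwer's fixed-point theorem provides a solution $x_R \in \K_R$.

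The final step is to show that for $R > R_0$ this truncated solution actually solves the full VI. Testing the inequality for $x_R$ at the admissible point $x_0 \in \K_R$ gives $H(x_R)^\top(x_R - x_0) \le 0$, which by the coercivity bound forces $\norm{x_R - x_0} \le R_0 < R$; hence $x_R$ lies strictly inside the ball. Then for any $x \in \K$ the segment point $x_R + t(x - x_R)$ stays in $\K_R$ for small $t > 0$, and plugging it into the inequality for $x_R$ yields $t\, H(x_R)^\top(x - x_R) \ge 0$, so $H(x_R)^\top(x - x_R) \ge 0$ for all $x \in \K$. Thus $x_R$ solves $\textup{VI}(\K, H)$. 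The main obstacle is precisely this existence half: the truncation-plus-coercivity argument is what replaces the direct Brouwer application that would be available if $\K$ were already compact, and care is needed to verify that the truncated solution is strictly interior to the ball before it can be promoted to a solution of the original problem.
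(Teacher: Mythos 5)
Your proof is correct, and there is nothing in the paper to compare it against: the paper states this result as a proposition imported verbatim from the literature (Facchinei--Pang, Theorem 2.3.3(b)) and gives no proof of its own. Your argument --- uniqueness by summing the two VI inequalities against strong monotonicity, existence via coercivity, truncation to a compact convex set, Brouwer's fixed point applied to $x \mapsto \Pi_{\K_R}[x - H(x)]$, and promotion of the interior truncated solution to the full VI --- is essentially the standard proof underlying the cited theorem, and you correctly flag that continuity of $H$ (assumed in the reference, automatic in this paper's game setting) is needed for the Brouwer step.
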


The following lemma gives a more intuitive characterization of the strong monotonicity property.
\begin{proposition}[\textup{\cite[Proposition 2.3.2]{facchinei2007finite}}]
\label{prop:pd}
A continuously differentiable operator $H: \mc{K} \subseteq \R^\di \to \R^\di$ is strongly monotone with monotonicity constant $\alpha$ if and only if $\nabla_x H(x)\succeq \alpha I_n$ for all $x \in \mc{K}$. \hfill{$\square$}
\end{proposition}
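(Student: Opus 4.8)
The plan is to establish the equivalence by two separate arguments: the ``if'' direction via the fundamental theorem of calculus along segments, and the ``only if'' direction via a directional-derivative limit. Throughout I assume $\K$ convex (as is standard for the VI setting here), so that for any $x,y\in\K$ the segment $\{y+t(x-y):t\in[0,1]\}$lies in $\K$. I also keep the paper's convention $[\nabla_x H]_{ij}=\partial H_j/\partial x\i$ in mind: the ordinary Jacobian of $H$ equals $\nabla_x H(x)^\top$, while the quadratic form is insensitive to this transpose, since $d^\top \nabla_x H(x)\,d = d^\top \nabla_x H(x)^\top d$ for every $d$.

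For the ``if'' direction I would assume $\nabla_x H(x)\succeq \alpha I_n$ on all of $\K$. Fixing $x,y\in\K$ and setting $\gamma(t)=y+t(x-y)$, differentiating $t\mapsto H(\gamma(t))$ and integrating (justified by continuity of $H$) gives
\[
(H(x)-H(y))^\top(x-y)=\int_0^1 (x-y)^\top \nabla_x H(\gamma(t))\,(x-y)\,dt .
\]
By the hypothesis and the definition of $\succeq$ from the notation, each integrand is at least $\alpha\norm{x-y}^2$, so integrating over $[0,1]$ yields exactly the strong monotonicity inequality with constant $\alpha$ (Definition~\ref{def:SMON}).

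For the ``only if'' direction I would assume $H$ strongly monotone with constant $\alpha$. Given a point $x_0$ in the interior of $\K$ and any direction $d$, I apply strong monotonicity with $x=x_0+td$ and $y=x_0$ (so $x-y=td\in\K$ for small $t>0$), obtaining $(H(x_0+td)-H(x_0))^\top(td)\ge \alpha t^2\norm{d}^2$. Dividing by $t^2$ and letting $t\to0^+$, the left-hand side converges to $d^\top \nabla_x H(x_0)\,d$ because $\tfrac1t\big(H(x_0+td)-H(x_0)\big)\to \nabla_x H(x_0)^\top d$, which gives $d^\top \nabla_x H(x_0)\,d\ge \alpha\norm{d}^2$, i.e. $\nabla_x H(x_0)\succeq \alpha I_n$.

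The main obstacle I anticipate is the treatment of boundary points of $\K$ in the ``only if'' direction, where the segment $x_0+td$ need not remain in $\K$ for an arbitrary direction $d$ and small $t>0$. I would resolve this by first establishing $\nabla_x H\succeq \alpha I_n$ on the interior of $\K$, where the limiting argument above is unobstructed, and then extending the inequality to the whole of $\K$ by continuity of $\nabla_x H$, using that the interior is dense in $\K$ under the convexity assumption. Once the transpose bookkeeping and the justification of the integral representation are in place, the remaining steps are routine.
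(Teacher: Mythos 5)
Your proof is correct and is essentially the standard argument behind the paper's citation: the paper itself gives no proof of this proposition (it is quoted from \cite[Proposition 2.3.2]{facchinei2007finite}), and that reference's proof is exactly your two steps --- the fundamental-theorem-of-calculus representation $(H(x)-H(y))^\top(x-y)=\int_0^1 (x-y)^\top \nabla_x H(y+t(x-y))(x-y)\,dt$ along segments for sufficiency, and the difference-quotient limit for necessity --- with your handling of the transposed-Jacobian convention also being right, since the quadratic form ignores the transpose. The one caveat is that your density step silently requires $\mathrm{int}(\K)\neq\emptyset$ (for a flat convex set the ``only if'' direction is genuinely false, e.g. $H(x_1,x_2)=(x_1,-x_2)$ is strongly monotone on a horizontal segment in $\R^2$ yet $\nabla_x H$ is indefinite), which is harmless here because the cited result is stated on an open convex domain and the paper only ever invokes the proposition on sets such as $\X$, whose interior is non-empty by the Standing Assumption.
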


\subsection{Variational Nash equilibria}

To draw the connection between VIs and Nash equilibria, let us introduce the following quantities relative to $\Gnu$
\begin{subequations}
\label{eq:game_nu_nag}
\begin{align}
\Fnu(x) &\defeq [\nabla_{x^i} J^i(x^i, \snu^i(x) )]_{i=1}^N \label{eq:game_nu_nag:F},\\
\Qnu &\defeq \{ x \in \X \vert \Anu x \le \binf \}, \label{eq:game_nu_nag:Q} \\
\Qnu^i(x^{-i}) &\defeq \{ x\i \in \X^i \vert \Anu x \le \binf\}, \label{eq:game_nu_nag:Qi} \\
\Anu &\defeq T^\nu \otimes \hat A \label{eq:game_nu_nag:A},\\
b &\defeq \ones[N] \otimes \hat b \label{eq:game_aag:b}
\end{align}
\end{subequations}
and recall the corresponding quantities relative to $\Ginf$
\begin{subequations}
\label{eq:game_aag}
\begin{align}
\Finf(x) &\defeq [\nabla_{x^i} J^i(x^i, \sinf(x) )]_{i=1}^N,  \label{eq:game_aag:F}\\
\Qinf &\defeq \{ x \in \X \vert \Ainf x \le \binf\}, \label{eq:game_aag:Q} \\
\Qinf^i(x^{-i}) &\defeq \{ x\i \in \X^i \vert \Ainf x \le \binf\}, \label{eq:game_aag:Qi} \\
\Ainf &\textstyle \defeq \frac{1}{N} {\ones[N]}{\ones[N]}^\top \otimes \hat A.  \label{eq:game_aag:A}
\end{align}
\end{subequations}
The operator in~\eqref{eq:game_aag:F} is the same as~\eqref{eq:Finf} in Assumption~\ref{ass:F_Fnu_SMON}
and in~\eqref{eq:game_aag} the coupling constraint $\mathcal{C}_\infty$ is expressed in the redundant form $\Ainf x \le b$
(consisting of $N$ repetitions of the constraint
$\hat A \sinf(x) \le \hat b$) to match the structure of $\Anu x \le b$ in~\eqref{eq:game_nu_nag}.

In the following we specialize a well-known result of~\cite[Theorem 2.1]{facchinei2007generalized_2} to the two games $\G_\infty$ and $\G_\nu$.
\begin{proposition}[Variational Nash equilibrium {\cite[Theorem 2.1]{facchinei2007generalized_2}}]
\label{prop:var_GNE_is_GNE}
Every solution $\bar x_\infty$ of VI($\Qinf$,$\Finf$) is a Nash equilibrium of $\G_\infty$,
called a variational Nash equilibrium of $\G_\infty$.
Moreover, if $J\i(x\i,\sigma\i_\nu(x))$ is convex in $x\i$ for all $x^{-i}\in\X^{-i}$,
every solution $\bar x_\nu$ of VI($\Qnu$,$\Fnu$) is a Nash equilibrium of $\G_\nu$,
called a variational Nash equilibrium of $\G_\nu$.
\hfill $\square$
\end{proposition}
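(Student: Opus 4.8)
The plan is to prove both claims with a single, self-contained \emph{feasibility-slice} argument that avoids KKT systems and constraint qualifications altogether, exploiting the way the per-agent feasible sets in~\eqref{eq:game_aag:Qi} and~\eqref{eq:game_nu_nag:Qi} are defined as slices of the shared sets $\Qinf$ and $\Qnu$. The engine of the argument is the observation that a unilateral deviation which is individually feasible is automatically jointly feasible, so it may be inserted as a test point in the defining inequality~\eqref{eq:VI_def}.

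First I would treat $\Ginf$. Let $\bar x_\infty$ solve VI($\Qinf$,$\Finf$), fix an agent $i$ and take any $x^i \in \Qinf^i(\bar x_\infty^{-i})$. The key step is to note that the deviation $\tilde x \defeq (x^i, \bar x_\infty^{-i})$ lies in $\Qinf$: by the definition~\eqref{eq:game_aag:Qi}, together with the equivalence of the redundant encoding $\Ainf x \le b$ in~\eqref{eq:game_aag:A} with $\hat A \sinf(x) \le \hat b$, the membership $x^i \in \Qinf^i(\bar x_\infty^{-i})$ is \emph{exactly} the statement $\tilde x \in \Qinf$. Plugging $\tilde x$ into~\eqref{eq:VI_def} and using that $\tilde x - \bar x_\infty$ is supported only on its $i$-th block, where it equals $x^i - \bar x_\infty^i$, the inner product collapses to
\[
\nabla_{x^i} J^i(\bar x_\infty^i, \sinf(\bar x_\infty))^\top (x^i - \bar x_\infty^i) \ge 0 .
\]
This is precisely the first-order variational condition for agent $i$'s problem of minimizing $x^i \mapsto J^i(x^i, \sinf(x^i, \bar x_\infty^{-i}))$ over the convex set $\Qinf^i(\bar x_\infty^{-i})$, where I use that the $i$-th block of $\Finf$ is, by construction, the gradient of exactly this map (it accounts for the dependence of $\sinf$ on $x^i$, cf.~the primal update in Algorithm~\ref{alg:NAG_nu}). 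Since that map is convex in $x^i$ by the Standing Assumption, the first-order condition is sufficient for global optimality; comparing the optimal value with the value at $\bar x_\infty^i$ gives~\eqref{eq:def_GNE} with $\varepsilon=0$, so $\bar x_\infty^i$ is a best response. As $i$ was arbitrary, $\bar x_\infty$ is a Nash equilibrium.

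For $\Gnu$ the argument is verbatim after substituting $\Qnu$, $\Fnu$, $\sigma^i_\nu$ for $\Qinf$, $\Finf$, $\sinf$: the slice identity $x^i \in \Qnu^i(\bar x_\nu^{-i}) \Leftrightarrow (x^i, \bar x_\nu^{-i}) \in \Qnu$ again follows from~\eqref{eq:game_nu_nag:Qi} and the block structure $\Anu = T^\nu \otimes \hat A$ in~\eqref{eq:game_nu_nag:A}, and a single-agent test point isolates the block $\nabla_{x^i} J^i(\bar x_\nu^i, \sigma^i_\nu(\bar x_\nu))$. The one place where the two cases genuinely diverge is the closing ``first-order $\Rightarrow$ global'' step: the Standing Assumption guarantees convexity of $J^i(x^i, \sinf(x))$ in $x^i$ but says nothing about $J^i(x^i, \sigma^i_\nu(x))$, whose dependence on $x^i$ enters through the different weight $[T^\nu]_{ii}$. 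This is exactly why the proposition imposes the extra hypothesis that $J^i(x^i, \sigma^i_\nu(x))$ be convex in $x^i$ precisely for the $\Gnu$ statement.

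The part I expect to require the most care is the slice-feasibility identity itself, i.e.\ verifying that the per-agent set is \emph{exactly} the slice of the shared set and that the $N$-fold redundant encodings are equivalent to the single averaged constraints; everything downstream is the routine reduction of a convex program to its variational characterization. Since the statement is a specialization of~\cite[Theorem 2.1]{facchinei2007generalized_2}, an alternative would be simply to match the present $(\Qinf,\Finf)$ and $(\Qnu,\Fnu)$ to the hypotheses of that theorem; I nonetheless favour the direct argument above because it is self-contained and makes the role of the extra convexity assumption transparent.
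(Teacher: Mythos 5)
Your proof is correct, but it is worth noting that the paper itself offers no proof of this proposition: by the paper's stated convention, propositions are auxiliary results taken from the literature, and this one is simply the specialization of \cite[Theorem 2.1]{facchinei2007generalized_2} to the pairs $(\Qinf,\Finf)$ and $(\Qnu,\Fnu)$. Your self-contained argument is essentially the standard proof of that cited theorem rewritten in the paper's notation, and every step checks out: the slice identity $x^i \in \Qinf^i(\bar x_\infty^{-i}) \Leftrightarrow (x^i,\bar x_\infty^{-i}) \in \Qinf$ holds by definition~\eqref{eq:game_aag:Qi} (and likewise via~\eqref{eq:game_nu_nag:Qi} for $\Gnu$); the block structure of $\tilde x - \bar x_\infty$ collapses~\eqref{eq:VI_def} to agent $i$'s first-order condition, using that $[\Finf(x)]_i = \nabla_{x^i}J^i(x^i,\sinf(x))$ is by construction the total gradient of the composed map $x^i \mapsto J^i(x^i,\sinf(x^i,x^{-i}))$; and convexity of that map turns the first-order condition into global optimality over the convex slice, which is exactly~\eqref{eq:def_GNE} with $\varepsilon=0$. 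You also correctly identify why the extra convexity hypothesis appears only in the $\Gnu$ statement: the Standing Assumption covers $J^i(x^i,\sinf(x))$ but says nothing about $J^i(x^i,\sigma^i_\nu(x))$, whose dependence on $x^i$ enters with weight $[T^\nu]_{ii}$ rather than $\frac{1}{N}$ --- this is precisely why, in the proof of Theorem~\ref{thm:convergence}, the paper derives that convexity from strong monotonicity of $F_\nu$ via Proposition~\ref{prop:pd} before invoking this proposition. Two small points you should make explicit in a full write-up: a solution of VI($\Qinf$,$\Finf$) lies in $\Qinf$, which supplies the feasibility $\bar x_\infty \in \mc{Q}_\infty$ required by Definition~\ref{def:NE}; and the equivalence of the redundant encoding with the averaged constraint, which you flag as the delicate step, is immediate from the Kronecker identity $\Ainf x = \ones[N] \otimes \bigl(\hat A \sinf(x)\bigr)$.
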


Due to the presence of the coupling constraints the converse of Proposition~\ref{prop:var_GNE_is_GNE} does not hold.
In other words, there can be Nash equilibria of $\Ginf$ (resp. $\Gnu$) that
cannot be obtained as solutions of VI($\Qinf$,$\Finf$) (resp. VI($\Qnu$,$\Fnu$)).
Within the class of Nash equilibria, variational Nash equilibria enjoy special stability and sensitivity properties~\cite{facchinei2007generalized}.
Variational equilibria are also a subset of the \textit{normalized equilibria} defined in~\cite{rosen1965existence},
which are in most cases the only Nash equilibria that can be computed.
Variational equilibria can be interpreted as the most fair within the class of normalized equilibria,
because the burden of meeting the coupling constraints is divided equally among all the agents~\cite[Theorem 3.1]{facchinei2007generalized_2}.

We conclude by stating convergence of the operator $F_\nu$ to $F_\infty$ as $\nu$ tends to infinity.

\begin{lemma}
\label{lemma:lim_A_nu_F_nu}
Under Assumption \ref{ass:primitive_doubly}, 
\begin{align*}
\lim_{\nu\rightarrow \infty} T^\nu = \! \frac{1}{N}{\ones[N]}{\ones[N]}^\top \!\!, \;
\qquad \lim_{\nu \to \infty} A_\nu = A_\aag, \;
\end{align*}
and
\begin{equation*}
\lim_{\nu \to \infty} F_\nu(x) = F_\aag(x)
\end{equation*}
uniformly in $x$.
The operators $F_\nu$ are bounded on $\mathcal{X}$, uniformly in $\nu$. \hfill $\square$
\end{lemma}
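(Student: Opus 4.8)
The plan is to establish the three limits in turn and then the uniform bound, the whole argument resting on the compactness built into the standing assumption.

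First I would prove $\lim_{\nu\to\infty}T^\nu=\frac1N{\ones[N]}{\ones[N]}^\top$ by invoking the Perron--Frobenius theorem. Since $T$ is doubly stochastic, $\ones[N]$ is simultaneously a right and a left eigenvector for the eigenvalue $1$, so the unique stationary distribution is the uniform one $\frac1N\ones[N]$. Primitivity (Assumption~\ref{ass:primitive_doubly}) makes $1$ a simple eigenvalue whose modulus strictly dominates that of every other eigenvalue, whence the powers $T^\nu$ converge---geometrically, at the rate set by the second largest eigenvalue modulus---to the rank-one projector $\ones[N](\frac1N\ones[N])^\top=\frac1N{\ones[N]}{\ones[N]}^\top$. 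In particular $[T^\nu]_{ij}\to\frac1N$ for all $i,j$. The second limit is then immediate: $\Anu=T^\nu\otimes\hat A$ is linear, hence continuous, in $T^\nu$ for the fixed $\hat A$, so letting $\nu\to\infty$ yields $\Anu\to(\frac1N{\ones[N]}{\ones[N]}^\top)\otimes\hat A=\Ainf$.

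The key observation for the remaining claims is that each $T^\nu$, being a power of the stochastic matrix $T$, is itself row stochastic, so $\sigma_\nu^i(x)=\sum_{j}[T^\nu]_{ij}x^j$ is a convex combination of the $x^j\in\mathcal{X}^j$ and therefore lies in the fixed compact set $\text{conv}(\mathcal{X}^1,\dots,\mathcal{X}^N)$ for every $\nu$ and every $x\in\X$. Expanding the total gradient as $\nabla_{x^i}J^i(x^i,\sigma_\nu^i(x))=\nabla_{z_1}J^i(x^i,\sigma_\nu^i(x))+[T^\nu]_{ii}\,\nabla_{z_2}J^i(x^i,\sigma_\nu^i(x))$ (and analogously with $\frac1N$ replacing $[T^\nu]_{ii}$ for $\Finf$), every factor is bounded independently of $\nu$ and $x$: the coefficients lie in $[0,1]$ and the continuous partial gradients $\nabla_{z_1}J^i,\nabla_{z_2}J^i$ are evaluated on the fixed compact domain $\mathcal{X}^i\times\text{conv}(\mathcal{X}^1,\dots,\mathcal{X}^N)$. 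This gives the uniform boundedness of the family $\{\Fnu\}_\nu$ on $\X$.

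For the uniform convergence $\Fnu\to\Finf$ I would combine three facts: $\sigma_\nu^i(x)\to\sinf(x)$ uniformly in $x$ (immediate from $[T^\nu]_{ij}\to\frac1N$ and the boundedness of the $x^j$ on the compact $\mathcal{X}^j$), the scalar convergence $[T^\nu]_{ii}\to\frac1N$, and the uniform continuity of the continuous maps $\nabla_{z_1}J^i,\nabla_{z_2}J^i$ on the compact set $\mathcal{X}^i\times\text{conv}(\mathcal{X}^1,\dots,\mathcal{X}^N)$. The composition then converges uniformly, and summing over $i\in\Z[1,N]$ delivers $\Fnu\to\Finf$ uniformly on $\X$. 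The argument is essentially bookkeeping rather than a deep difficulty; the single point demanding care---and the reason the convergence is uniform and not merely pointwise---is that the second argument $\sigma_\nu^i(x)$ of each $J^i$ must range over one compact set independent of $\nu$, so that the merely continuous partial gradients may be upgraded to uniformly continuous ones. Row stochasticity of $T^\nu$ is exactly what secures this.
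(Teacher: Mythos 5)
Your proposal is correct and follows essentially the same route as the paper's proof: the same chain-rule decomposition of $\Fnu$ and $\Finf$ into $\nabla_{z_1}J^i$ and $\nabla_{z_2}J^i$ terms, the same use of row stochasticity to confine $\sigma^i_\nu(x)$ to the fixed compact set $\text{conv}(\mathcal{X}^1,\dots,\mathcal{X}^N)$, and the same combination of $[T^\nu]_{ii}\to\frac1N$ with uniform convergence of $\sigma^i_\nu$ to conclude. The only cosmetic differences are that you derive $T^\nu\to\frac1N\ones[N]\ones[N]^\top$ from Perron--Frobenius where the paper cites a consensus result, and you are slightly more explicit than the paper in upgrading continuity of the partial gradients to uniform continuity on the compact domain---a point the paper leaves implicit.
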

\begin{proof}
The fact that $\lim_{\nu\rightarrow \infty} T^\nu = \frac{1}{N}{\ones[N]}{\ones[N]}^\top$ is proven in~\cite[Theorem 2]{olfati2007consensus}.
Convergence of $A_\nu$ to  $A_\aag$ follows immediately from the definitions~\eqref{eq:game_nu_nag:A},~\eqref{eq:game_aag:A} and the properties of the Kronecker product. 
Note that
\begin{align*}
F_\aag(x)&=\textstyle [\nabla_{z_1} J^i(x^i,\sigma_\aag(x))+\hspace{0.1cm} \frac1N \hspace{0.1cm}\nabla_{z_2} J^i(x^i,\sigma_\aag(x))]_{i=1}^N,\\
F_\nu(x)&=\textstyle [\nabla_{z_1} J^i(x^i,\sigma^i_\nu(x))+[T^\nu]_{ii}\nabla_{z_2} J^i(x^i,\sigma^i_\nu(x))]_{i=1}^N.
\end{align*}
Uniform convergence of $F_\nu$ to $F_\aag$ follows by continuity of
$\nabla_{z_1} J^i(z_1,z_2)$ and $\nabla_{z_2} J^i(z_1,z_2)$ in $z_1,z_2$ for all $i$ (ensured by the Standing Assumption),
by $[T^\nu]_{ii}\rightarrow \frac 1N$, since $T^\nu\rightarrow \frac 1N \ones[N]\ones[N]^\top$,
and by $\sigma^i_\nu(x) \to \sigma_\aag(x)$ uniformly in $x$.

Finally, as $\|\nabla_{z_1} J^i(z_1,z_2)\|$ and $\|\nabla_{z_2} J^i(z_1,z_2)\|$
are continuous functions over the compact set $\mathcal{X}^i\times\textup{conv}\{\mathcal{X}^1,\ldots,\mathcal{X}^N\}$,
there exists $M > 0$ such that $\|\nabla_{z_1} J^i(z_1,z_2)\| < M$ and $\|\nabla_{z_2} J^i(z_1,z_2)\| < M$.
Note that $[T^\nu]_{ii} \le 1$ for all $i\in\Z[1,N]$ and  for all  $\nu>0$, since $T$ and thus $T^\nu$ are non-negative and doubly stochastic.
Then for all $x\in\mathcal{X}$
\begin{align*}
\|F_\nu(x)\|^2&= \sum_{i=1}^N \! \|\nabla_{\! z_1} \! J^i(x^i \!,\sigma^i_\nu(x))\!+\![T^\nu]_{ii} \nabla_{\! z_2} J^i(x^i \!,\sigma^i_\nu(x))\|^2  \\[-0.4cm]
&\le  \sum_{i=1}^N( M^2+ M^2 +2M^2) = 4N M^2,
\end{align*}
which proves that $F_\nu$ is bounded, uniformly in $\nu$. 
\end{proof}

%
%

\section{Proof of Theorem \ref{thm:main}}
\label{sec:proof}

\noindent We divide the proof of Theorem \ref{thm:main} into two parts.
\begin{enumerate}
\item
 In subsection~\ref{sec:relation} we show that
$\bar x_\nu \rightarrow \bar x_\aag$ as $\nu \to \infty$ and that $\bar x_\nu$ is an $\varepsilon_\nu$-Nash equilibrium
for $\G_\infty$, with $\varepsilon_\nu \rightarrow 0$ as $\nu\rightarrow \infty$.
To this end we
exploit a novel result on parametric convergence of variational inequalities,
which is derived in Appendix B. 
\item In subsection~\ref{sec:iterative_scheme} we show that Algorithm \ref{alg:NAG_nu}
 converges to $\bar x_\nu$.
\end{enumerate}
From these two results it follows that, for any desired $\varepsilon$ one can guarantee that the limit point $\bar x_\nu$ of Algorithm \ref{alg:NAG_nu} is an $\varepsilon$-Nash equilibrium of $\G_\infty$,
by setting a  number $\nu$ of communications large enough.
This proves Theorem \ref{thm:main}.

\subsection{Convergence of the variational Nash equilibrium of $\G_\nu$ to the variational Nash equilibrium of $\G_\infty$}
\label{sec:relation}



The next lemma provides a sufficient condition for $F_\nu$ to be strongly monotone, which is then used in Theorem~\ref{thm:convergence}.

\begin{lemma}
\label{lemma:c3}
If Assumptions~\ref{ass:primitive_doubly} and~\ref{ass:F_Fnu_SMON} hold,
there exists $\nuSMON$ such that $F_\nu$ is strongly monotone for all $\nu>\nuSMON$. \hfill $\square$
\end{lemma}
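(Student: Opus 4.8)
The plan is to establish strong monotonicity of $\Fnu$ through the Jacobian characterization of Proposition~\ref{prop:pd}, transferring the property from $\Finf$ to $\Fnu$ for large $\nu$. The crucial observation is that the uniform convergence $\Fnu\to\Finf$ from Lemma~\ref{lemma:lim_A_nu_F_nu} is by itself \emph{not} enough: strong monotonicity is a first-order (derivative-level) property, so a bound on $\|\Fnu(x)-\Finf(x)\|$ does not control the monotonicity modulus. What is needed is convergence of the \emph{Jacobians}, and this is precisely where the twice continuous differentiability postulated in Assumption~\ref{ass:F_Fnu_SMON} enters. Concretely, since each $J^i$ is twice continuously differentiable and the maps $\sigma^i_\nu,\sinf$ are linear in $x$, both $\Finf$ and $\Fnu$ are continuously differentiable. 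Thus, by Proposition~\ref{prop:pd}, the strong monotonicity of $\Finf$ in Assumption~\ref{ass:F_Fnu_SMON} is equivalent to $\nabla_x\Finf(x)\succeq\alpha I_{Nn}$ on $\X$ for some $\alpha>0$, and it suffices to exhibit $\alpha'>0$ with $\nabla_x\Fnu(x)\succeq\alpha' I_{Nn}$ on $\X$ for all large $\nu$.

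The core step is to show that $\nabla_x\Fnu\to\nabla_x\Finf$ in induced $2$-norm, uniformly on $\X$. Recalling
\begin{equation*}
\Finf(x) = \big[\nabla_{z_1} J^i(x^i,\sinf(x))+\tfrac1N\nabla_{z_2} J^i(x^i,\sinf(x))\big]_{i=1}^N,
\end{equation*}
\begin{equation*}
\Fnu(x) = \big[\nabla_{z_1} J^i(x^i,\sigma^i_\nu(x))+[T^\nu]_{ii}\nabla_{z_2} J^i(x^i,\sigma^i_\nu(x))\big]_{i=1}^N,
\end{equation*}
and differentiating via the chain rule, each block of the two Jacobians is a linear combination of the second derivatives of $J^i$, with coefficients built from the entries $[T^\nu]_{ij}$ (respectively $\tfrac1N$) arising from $\sigma^i_\nu(x)=\sum_{j}[T^\nu]_{ij}x^j$ (respectively $\sinf(x)$). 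Since $T^\nu$ is doubly stochastic, every $\sigma^i_\nu(x)$ is a convex combination of $x^1,\dots,x^N$ and hence lies in the compact set $\text{conv}(\X^1,\dots,\X^N)$; the second derivatives are therefore uniformly continuous and bounded there. Combining $[T^\nu]_{ij}\to\tfrac1N$ and $\sigma^i_\nu(x)\to\sinf(x)$ uniformly in $x$ (both from Assumption~\ref{ass:primitive_doubly} and Lemma~\ref{lemma:lim_A_nu_F_nu}) with this uniform continuity yields $\epsilon_\nu\defeq\sup_{x\in\X}\|\nabla_x\Fnu(x)-\nabla_x\Finf(x)\|\to0$ as $\nu\to\infty$.

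Finally I would combine the two facts. For every unit vector $z$ and every $x\in\X$, the bound $|z^\top(\nabla_x\Fnu(x)-\nabla_x\Finf(x))z|\le\epsilon_\nu$ gives
\begin{equation*}
z^\top\nabla_x\Fnu(x)\,z \ge z^\top\nabla_x\Finf(x)\,z-\epsilon_\nu \ge \alpha-\epsilon_\nu.
\end{equation*}
Choosing $\nuSMON$ large enough that $\epsilon_\nu<\alpha$ for all $\nu>\nuSMON$ gives $\nabla_x\Fnu(x)\succeq(\alpha-\epsilon_\nu)I_{Nn}$ with $\alpha-\epsilon_\nu>0$, and Proposition~\ref{prop:pd} then delivers strong monotonicity of $\Fnu$ with constant $\alpha-\epsilon_\nu$. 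The main obstacle is the uniform Jacobian convergence of the middle paragraph: one must carefully track the off-diagonal coupling introduced by $\sigma^i_\nu(x)=\sum_j[T^\nu]_{ij}x^j$ (which renders $\nabla_x\Fnu$ genuinely full rather than block-diagonal) and verify that each such entry converges uniformly on the compact domain. This is exactly the step that would fail under mere continuous (rather than twice continuous) differentiability of the costs, which clarifies why that regularity is imposed in Assumption~\ref{ass:F_Fnu_SMON}.
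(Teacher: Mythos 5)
Your proof is correct, and it reaches the conclusion by a mildly different route than the paper: you work entirely at the Jacobian level, while the paper works at the operator level. The paper's proof decomposes the monotonicity inner product directly,
\begin{equation*}
(\Fnu(x)-\Fnu(y))^\top(x-y) = (\Finf(x)-\Finf(y))^\top(x-y) + \bigl((\Fnu-\Finf)(x)-(\Fnu-\Finf)(y)\bigr)^\top(x-y) \ge (\alpha_{\Finf}-\ell_\nu)\|x-y\|^2,
\end{equation*}
where $\ell_\nu$ is the Lipschitz constant of $\Fnu-\Finf$, and then shows $\ell_\nu\to0$ because $\nabla_x(\Fnu-\Finf)\to0$ uniformly on $\X$. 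You instead invoke Proposition~\ref{prop:pd} in both directions: extract $\nabla_x\Finf(x)\succeq\alpha I_{Nn}$ from Assumption~\ref{ass:F_Fnu_SMON}, prove uniform convergence of the Jacobians block by block, and transfer positive definiteness by the quadratic-form bound $z^\top\nabla_x\Fnu(x)z\ge\alpha-\epsilon_\nu$. The two arguments hinge on exactly the same analytic fact---uniform convergence of $\nabla_x(\Fnu-\Finf)$ to zero, which is precisely where the twice continuous differentiability of the costs enters (your closing remark on this matches the paper's own justification)---so neither is more general, and both yield the same kind of $\nu$-dependent monotonicity constant ($\alpha-\epsilon_\nu$ versus $\alpha_{\Finf}-\ell_\nu$). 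The paper's version is slightly leaner, since strong monotonicity of $\Fnu$ is read off the definitional inequality without needing the Jacobian characterization for $\Fnu$ itself; yours is more explicit, since spelling out the block structure of $\nabla_x\Fnu$ (the coefficients $[T^\nu]_{ij}$ multiplying second derivatives of $J^i$) makes transparent why $C^1$ regularity alone would not suffice.
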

\noindent We report the proof of Lemma~\ref{lemma:c3} in Appendix C.

\begin{theorem}
\label{thm:convergence}
Suppose that Assumptions \ref{ass:primitive_doubly},~\ref{ass:R2} and~\ref{ass:F_Fnu_SMON}  hold. Then 
\begin{enumerate}
\item
The game $\Ginf$ has a unique variational Nash equilibrium $\bar x_\aag$ and for any $\nu>\nuSMON$,
$\Gnu$ has a unique variational Nash equilibrium $\bar x_\nu$. 
Moreover,
\begin{equation}
\lim_{\nu\rightarrow \infty}\bar x_\nu = \bar x_\aag.
\label{eq:xnu_tends_xinf}
\end{equation}
\item For every $\varepsilon > 0$, there exists a $\nueps > \nuSMON $ such that, for every $\nu > \nueps $, the variational Nash equilibrium $\bar x_\nu$ of $\Gnu$ 
is an  $\varepsilon$-Nash equilibrium of $\Ginf$. \hfill $\square$
\end{enumerate}
\label{convergence_x}
\end{theorem}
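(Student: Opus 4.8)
The plan is to prove the two parts separately, using variational inequality theory throughout. For Part~1, the key structural observation is that, by Proposition~\ref{prop:uniqueness}, a strongly monotone VI over a closed convex set has a unique solution. First I would establish the existence and uniqueness of $\bar x_\aag$: the set $\Qinf = \X \cap \mc{C}_\infty$ is closed and convex (intersection of the compact convex $\X$ with a polyhedron), and $\Finf$ is strongly monotone by Assumption~\ref{ass:F_Fnu_SMON}, so VI($\Qinf,\Finf$) has a unique solution, which is a variational Nash equilibrium by Proposition~\ref{prop:var_GNE_is_GNE}. For the auxiliary game, I would invoke Lemma~\ref{lemma:c3} to obtain $\nuSMON$ such that $\Fnu$ is strongly monotone for all $\nu > \nuSMON$; the set $\Qnu = \{x \in \X \mid \Anu x \le b\}$ is again closed and convex, so VI($\Qnu,\Fnu$) has a unique solution $\bar x_\nu$, and convexity of $J^i(x^i,\sigma^i_\nu(x))$ in $x^i$ (needed for Proposition~\ref{prop:var_GNE_is_GNE} on the $\nu$ side) would need to be verified or cited as following from the Standing Assumption.

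The heart of Part~1 is the limit~\eqref{eq:xnu_tends_xinf}, and this is where I would invoke the parametric VI convergence result advertised for Appendix~B. The idea is to view $\bar x_\nu$ as the solution of VI($\Qnu,\Fnu$), a perturbation of VI($\Qinf,\Finf$), where \emph{both} the operator and the feasible set vary with the parameter $\nu$. Lemma~\ref{lemma:lim_A_nu_F_nu} supplies exactly the two convergences needed: $\Fnu \to \Finf$ uniformly on $\X$ and $\Anu \to \Ainf$, so the constraint sets $\Qnu$ converge to $\Qinf$ in an appropriate (e.g.\ Painlev\'e--Kuratowski or Hausdorff) sense. Combined with the uniform strong monotonicity guaranteed for $\nu > \nuSMON$ and Assumption~\ref{ass:R2} (which I expect ensures a constraint qualification preventing degeneracy of the dual/feasibility structure as the set perturbs), the parametric VI theorem should yield $\bar x_\nu \to \bar x_\aag$.

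For Part~2, I would translate the set convergence of Part~1 into the approximate-equilibrium statement. The variational equilibrium $\bar x_\nu$ is feasible for $\Gnu$ but not necessarily for $\Ginf$, so the first task is to show that $\bar x_\nu$ becomes asymptotically feasible for $\Ginf$ (i.e.\ $\hat A \sigma_\infty(\bar x_\nu) \le \hat b$ up to a vanishing slack, which follows since $\Anu \to \Ainf$ and $\bar x_\nu \to \bar x_\aag \in \Qinf$). Then, using continuity of each $J^i$ together with the uniform boundedness of $\Fnu$ and the convergence $\bar x_\nu \to \bar x_\aag$, I would bound the suboptimality gap in~\eqref{eq:def_GNE}: for each agent $i$ and each competing strategy $x^i \in \Qinf^i(\bar x_\nu^{-i})$, the difference $J^i(x^i,\sigma_\infty(\cdot)) - J^i(\bar x_\nu^i,\sigma_\infty(\bar x_\nu))$ differs from the corresponding quantity evaluated with $\sigma_\nu^i$ by a term controlled by $\|\sigma_\nu^i(\bar x_\nu) - \sigma_\infty(\bar x_\nu)\|$, which vanishes uniformly. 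Choosing $\nueps$ large enough to make all these error terms below $\varepsilon$ completes the argument.

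The main obstacle I anticipate is the parametric VI limit in Part~1: unlike standard sensitivity results where only the operator or only the set is perturbed, here both perturb simultaneously, and one must rule out pathologies where the solution jumps as $\Qnu$ deforms into $\Qinf$. This is presumably precisely why the authors isolate a dedicated result in Appendix~B and why Assumption~\ref{ass:R2} is imposed --- I expect it to function as the constraint qualification that stabilizes the feasible-set perturbation. A secondary technical point is carefully quantifying the passage from ``$\bar x_\nu$ nearly feasible and nearly optimal for the $\nu$-game'' to ``$\varepsilon$-Nash for $\Ginf$,'' since the competing strategy sets $\Qinf^i(\bar x_\nu^{-i})$ and $\Qnu^i(\bar x_\nu^{-i})$ themselves differ.
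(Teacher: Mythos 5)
Your plan for Part~1 essentially coincides with the paper's proof: uniqueness via Proposition~\ref{prop:uniqueness} and Lemma~\ref{lemma:c3}, convexity of $J\i(x\i,\sigma\i_\nu(x))$ in $x\i$ obtained from strong monotonicity of $\Fnu$ through Proposition~\ref{prop:pd} (not from the Standing Assumption, which only concerns $\sigma_\infty$), and the limit~\eqref{eq:xnu_tends_xinf} via the parametric VI result of Appendix~B, whose hypothesis~\eqref{eq:condition_R2} the paper checks from Assumption~\ref{ass:R2} by aggregating $s=[s^1;\dots;s^N]$ into $\hat s \defeq \sum_j s\j$. That part is sound.

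Part~2, however, has a genuine gap, on two counts. First, Definition~\ref{def:NE} requires an $\varepsilon$-Nash equilibrium to lie in $\Qinf$ \emph{exactly}; the $\varepsilon$ slack appears only in the cost inequality~\eqref{eq:def_GNE}, not in the constraints. Your plan only shows that $\bar x_\nu$ satisfies $\hat A\sigma_\infty(\bar x_\nu)\le \hat b$ ``up to a vanishing slack,'' which does not establish membership in $\Qinf$ and hence does not establish the claim. The paper proves exact feasibility for \emph{every} $\nu$ by an argument you are missing: summing the $N$ constraints $\hat A\sigma^i_\nu(\bar x_\nu)\le\hat b$ over $i$ and using that $T^\nu$ is doubly stochastic (column sums equal one, Assumption~\ref{ass:primitive_doubly}) gives $\frac1N\sum_{i=1}^N \sigma^i_\nu(\bar x_\nu)=\sigma_\infty(\bar x_\nu)$, hence $\hat A\sigma_\infty(\bar x_\nu)\le\hat b$ with no error term. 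Second, the issue you dismiss as ``a secondary technical point'' --- that the deviation sets $\Qinf^i(\bar x_\nu^{-i})$ and $\Qnu^i(\bar x_\nu^{-i})$ differ --- is in fact the crux of Part~2, and your sketch offers no mechanism to close it. The Nash property of $\bar x_\nu$ in $\Gnu$ only controls deviations in $\Qnu^i(\bar x_\nu^{-i})$, whereas~\eqref{eq:def_GNE} quantifies over $x\i\in\Qinf^i(\bar x_\nu^{-i})$; swapping $\sigma^i_\nu$ for $\sigma_\infty$ inside the cost, which is all your error bound does, cannot bridge the two constraint sets. The paper resolves this with Lemma~\ref{lemma:hausdorff_proj}: exploiting $\bar x_\nu\to\bar x_\infty$ (from Part~1), Assumption~\ref{ass:R2} and the Hausdorff convergence of Lemma~\ref{lemma:hausdorff}, every $x\i\in\Qinf^i(\bar x_\nu^{-i})$ admits a nearby $\tilde x\i_\nu\in\Qnu^i(\bar x_\nu^{-i})$ with $\|x\i-\tilde x\i_\nu\|\le\varepsilon_2$ uniformly in $i$ for $\nu$ large; the $\Gnu$-Nash inequality is applied at $\tilde x\i_\nu$ and transferred back to $x\i$ by Lipschitz continuity of $J\i$. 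Without this step (or an equivalent one), your Part~2 does not go through.
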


\begin{proof}
1)
Existence and uniqueness of $ \bar x_\aag$ and $\bar x_\nu$  solutions to VI($\Qinf$,$\Finf$) and VI($\Qnu$,$\Fnu$) respectively
is guaranteed by Proposition~\ref{prop:uniqueness},
because the operator $F_\aag$ is strongly monotone by Assumption~\ref{ass:F_Fnu_SMON}
and the operator $F_\nu$ (for $\nu>\nuSMON$) is strongly monotone by Lemma~\ref{lemma:c3}.
By Proposition~\ref{prop:pd},  strong monotonicity of $F_\nu$ implies convexity of
$J\i(x\i,\sigma\i_\nu(x))$ in $x\i$. Consequently,  Proposition~\ref{prop:var_GNE_is_GNE}  guarantees that $\bar x_\infty$  and $\bar x_\nu$  are the unique variational Nash equilibrium of $\mathcal{G}_\infty$ and $\mathcal{G}_\nu$, respectively.
To show~\eqref{eq:xnu_tends_xinf},
we use Theorem~\ref{thm:convergenceVI} in Appendix B,
which is a  general result on convergence of  parametric variational inequalities.
The theorem is based on Assumption~\ref{ass:mega_param_VI} in Appendix B.

To verify such assumption note that by Lemma~\ref{lemma:lim_A_nu_F_nu} $A_\nu \to A_\aag$ and, for each $x\in\mathcal{X}$, $F_\nu(x)\rightarrow F_\aag(x)$; $b_\nu=b=b_\infty$ for all $\nu$. Moreover, in our setup~\eqref{eq:condition_R2} reads as
%
%
\begin{equation}
\label{eq:proof_show_R2}
\{  A_\aag^\top   s= 0 ,\quad    b^\top   s\le 0    , \quad   s\ge 0  \}\quad \Rightarrow  \quad   s=0.
\end{equation}
To prove~\eqref{eq:proof_show_R2}
take $s\defeq[s^1;\ldots;s^N]\in\R^{Nn}$ such that $A_\aag^\top s= 0$, $  b^\top  s\le 0$ and $ s\ge 0$ and define $\hat s\defeq\sum_{j=1}^N s^j \in\R^n$. Then
\begin{align*}
A_\aag^\top  s= 0& \quad \Leftrightarrow \quad\textstyle  \left(\frac1N\ones[N]\ones[N]^\top \otimes \hat A^\top\right) s =0 \quad \Rightarrow \quad \hat A^\top \hat s =0,\\
b^\top  s\le 0 &\quad \Leftrightarrow \quad  (\ones[N]^\top \otimes \hat b^\top) s\le 0\hspace{1.5cm}  \Rightarrow\quad  \hat b^\top \hat s\le 0,\\
s\ge 0 &\quad \Rightarrow \quad  \hat s\ge 0.
\end{align*}
By Assumption~\ref{ass:R2}, we conclude $\hat s=0$. Since $ s\ge 0$, it  must be $ s= 0$ thus proving \eqref{eq:proof_show_R2}.


2) We divide the proof of this statement into two parts:
i) we prove that $\bar x_\nu \in \Qinf$ for any $\nu>0$, and
ii) we prove that condition~\eqref{eq:def_GNE} is satisfied. \\
i) Being $\bar x_\nu $ a Nash equilibrium for $\Gnu$,
$\bar x_\nu \in\mc{Q}_\nu$, hence $\bar x_\nu \in\mc{X}$ and $\hat A \sigma_\nu^i(\bar x_\nu) \le \hat b$ for all $i$.
By summing over all $i$ and dividing by $N$, we obtain
\begin{align}
\textstyle \hat A \left( \frac{1}{N} \sum_{i=1}^N \sigma_\nu^i(\bar x_\nu) \right)\le \hat b.
\label{eq:feas1}
\end{align}
However,
\begin{equation}
\begin{aligned}
&\textstyle\sum_{i=1}^N \sigma_\nu^i(\bar x_\nu) = \sum_{i=1}^N \sum_{j=1}^N [T^\nu]_{ij} \bar x^{j}_\nu \\
&\textstyle=\sum_{j=1}^N \left( \sum_{i=1}^N [T^\nu]_{ij} \right) \bar x^{j}_\nu= \sum_{j=1}^N \bar x^{j}_\nu = N \sigma_\aag(\bar x_\nu),
\label{eq:feas2}
\end{aligned}
\end{equation}
where the second to last equality holds because, by Assumption~\ref{ass:primitive_doubly},
$T$ is doubly stochastic and so is $T^\nu$.
By substituting~\eqref{eq:feas2} into~\eqref{eq:feas1}
we obtain $\hat A  \sigma_\aag(\bar x_\nu)\le \hat b$, thus $\bar x_\nu \in\mc{Q}_\aag$ for any $\nu$.
\\
ii) Since $\bar x_\nu$ is a Nash equilibrium for $\Gnu$,
for all $i\in\Z[1,N]$  and for all $x^i_\nu\in \mc{Q}^i_\nu(\bar x^{-i}_\nu)$ it holds
\begin{equation}
J^i(\bar x^{i}_\nu,\sum_j [T^\nu]_{ij} \bar x^{j}_\nu ) \le J^i(x^{i}_\nu,[T^\nu]_{ii} x^{i}_\nu+ \sum_{j\ne i} [T^\nu]_{ij}\bar x^{j}_\nu ).
\label{eq:Nash_nu_NAG}
\end{equation}
%
For all $i$, $J^i(z_1,z_2)$ is continuously differentiable hence Lipschitz.
Then there exists a common Lipschitz constant $L$ such that for all $i$,
all $z_1,z_1^a,z_1^b \in \X\i$, and all
$z_2,z_2^a,z_2^b \in\textup{conv}(\mathcal{X}^1,\ldots,\mathcal{X}^N)$
\begin{align*}
\|J^i(z_1^a,z_2)-J^i(z_1^b,z_2)\|&\le L\|z_1^a-z_1^b\|, \\
\|J^i(z_1,z_2^a)-J^i(z_1,z_2^b)\|&\le L\|z_2^a-z_2^b\|. 
\end{align*}
Let $D\defeq\max_{z\in \textup{conv}\{\mathcal{X}^1,\ldots,\mathcal{X}^N\}}\{\|z\|\}$ and $\delta(\nu)\defeq\|\frac{1}{N}{\ones[N]}{\ones[N]}^\top-T^\nu\|_\infty$.
Set $\varepsilon_1\defeq\varepsilon/(4LD)$. By Lemma~\ref{lemma:lim_A_nu_F_nu}, there exists $\nu_1>0$ such that for all $\nu>\nu_1$, $\delta(\nu)<\varepsilon_1$. Moreover,
\begin{equation}
\label{step1}
\begin{aligned}
&\textstyle J^i(\bar x^{i}_\nu,\sigma_\aag(\bar x_\nu) )=J^i(\bar x^{i}_\nu,\sum_j \frac{1}{N} \bar x^{j}_\nu )\\ 
&\le \textstyle J^i(\bar x^{i}_\nu,\sum_j [T^\nu]_{ij} \bar x^{j}_\nu )+L\|\sum_j (1/N-[T^\nu]_{ij}) \bar x^{j}_\nu\|\\
&\le \textstyle J^i(\bar x^{i}_\nu,\sum_j [T^\nu]_{ij} \bar x^{j}_\nu )+L\sum_j |1/N-[T^\nu]_{ij}| \|\bar x^{j}_\nu\|\\
&\le \textstyle J^i(\bar x^{i}_\nu,\sum_j [T^\nu]_{ij} \bar x^{j}_\nu )+L D \max_i{\{\sum_j |1/N-[T^\nu]_{ij}|\}}\\
&=\textstyle J^i(\bar x^{i}_\nu,\sum_j [T^\nu]_{ij} \bar x^{j}_\nu )+LD\delta(\nu) \\[-0.1cm]
&\overset{\eqref{eq:Nash_nu_NAG}}{\le}\textstyle J^i(x^{i}_\nu,[T^\nu]_{ii} x^{i}_\nu+ \sum_{j\ne i} [T^\nu]_{ij}\bar x^{j}_\nu )+LD\varepsilon_1 \\
&\le \textstyle J^i(x^{i}_\nu,\frac{1}{N} x^{i}_\nu+ \sum_{j\ne i} \frac{1}{N}\bar x^{j}_\nu )+2LD\varepsilon_1, 
\end{aligned}
\end{equation}
for all $x^i_\nu\in \mc{Q}^i_\nu(\bar x^{-i}_\nu)$, for all $\nu>\nu_1$.
The last inequality in \eqref{step1} can be proven with a chain of inequalities similar to the previous ones in  \eqref{step1}.
Condition~\eqref{step1} implies that~\eqref{eq:def_GNE} holds for all $x^i_\nu\in \mc{Q}^i_\nu(\bar x^{-i}_\nu)$,
as we used the fact that $\bar x_\nu$ is a Nash equilibrium for $\Gnu$.
Since we however want to prove that $\bar x_\nu$ is an $\varepsilon$-Nash for $\Ginf$,
we need to prove that~\eqref{eq:def_GNE} holds for all $x^i \in \Qinf^i(\bar x^{-i}_\nu)$.\\
To this end, take  any such $x^i\in \Qinf^i(\bar x^{-i}_\nu)$.
Set $\varepsilon_2\defeq\varepsilon/(2L+\frac{2L}{N})$.
Since we showed in the first statement that $\bar x_\nu \rightarrow \bar x_\aag$,
by Lemma~\ref{lemma:hausdorff_proj} in Appendix C,
there exists\footnote{As Lemma~\ref{lemma:hausdorff_proj} requires uniqueness of $\bar x_\nu$, we take $\nu_2>\nuSMON$. Note than $\nu_2$ is independent from $i$.}
$\nu_2>\nuSMON$ such that for all $\nu>\nu_2$ and all $i \in \Z[1,N]$ there exists $\tilde x^i_\nu\in\mc{Q}^i_\nu(\bar x^{-i}_\nu)$  such that $\|x^i-\tilde x^i_\nu\|\le \varepsilon_2$. 
From \eqref{step1} we know that since $\tilde x^i_\nu\in\mc{Q}^i_\nu(\bar x^{-i}_\nu)$ then 
\begin{equation}\label{step2}
\begin{aligned}
& \textstyle J^i(\bar x^{i}_\nu,\sigma_\aag(\bar x_\nu) )
\le J^i(\tilde x^{i}_\nu,\frac{1}{N} \tilde x^{i}_\nu+ \sum_{j\ne i} \frac{1}{N} \bar x^{j}_\nu )+2LD\varepsilon_1\\
&\le \textstyle J^i(x^{i},\frac{1}{N}  x^{i}+ \sum_{j\ne i} \frac{1}{N} \bar x^{j}_\nu )+(L+\frac{L}{N})\varepsilon_2+ 2LD\varepsilon_1\\
&\le \textstyle J^i(x^{i},\frac{1}{N}  x^{i}+ \sum_{j\ne i} \frac{1}{N} \bar x^{j}_\nu )+\varepsilon.
\end{aligned}
\end{equation}

Since \eqref{step2} holds for all $i\in\Z[1,N]$ and for all $x^i\in \Qinf^i(\bar x^{-i}_\nu)$ and given part~i),
we have proven that $\bar x_\nu$ is an
$\varepsilon$-Nash equilibrium for $\Ginf$,
for all $\nu > \nueps \defeq\max\{\nu_1,\nu_2\}$.
\end{proof}

\subsection{Convergence of Algorithm \ref{alg:NAG_nu} }
\label{sec:iterative_scheme}

\begin{theorem}
\label{thm:convergence_alg}
Suppose that for the value of $\nu$ used in Algorithm~\ref{alg:NAG_nu}
the operator $F_\nu$ in~\eqref{eq:game_nu_nag:F} is strongly monotone with constant $\alpha_\nu>0$ and Lipschitz with constant $L_\nu>0$. Set
\begin{equation}
\label{eq:condition_tau_APA}
\textstyle \tau <\frac{-L_\nu^2+\sqrt{L_\nu^4+4\alpha_\nu^2\| A_\nu\|^2}}{2\alpha_\nu \| A_\nu\|^2 }. 
\end{equation}

Then for every initial condition $(x_{(0)},\lambda_{(0)}) \in \X \times \R^{Nm}_{\ge 0}$
the sequence $(x_{(k)})_{k=1}^\infty$ produced by Algorithm~\ref{alg:NAG_nu}
converges to the unique variational Nash equilibrium of $\G_\nu$.
\hfill $\square$
\end{theorem}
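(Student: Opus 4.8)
The plan is to rewrite Algorithm~\ref{alg:NAG_nu} in aggregated vector form and recognize it as a primal--dual projection scheme whose fixed points are exactly the KKT pairs of VI($\Qnu$,$\Fnu$); convergence then follows from a Lyapunov/Fej\'er argument built on the strong monotonicity of $\Fnu$. As a first step I would collapse the $\nu$ communication rounds: since $T^\nu$ implements $\nu$ rounds of averaging, $\sigma^i_{\nu,(k)}=\sigma^i_\nu(x_{(k)})$, so the surrogate gradient $F^i_{\nu,(k)}$ is exactly the $i$-th block of $\Fnu(x_{(k)})$ in~\eqref{eq:game_nu_nag:F}; likewise the dual averaging yields $\hat A^\top\mu^i_{\nu,(k)}=[\Anu^\top\lambda_{(k)}]_i$, using $\Anu=T^\nu\otimes\hat A$ and double stochasticity. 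Writing $z_{(k)}\defeq(x_{(k)},\lambda_{(k)})$, the iteration becomes
\begin{align*}
x_{(k+1)} &= \Pi_{\X}\bigl[\,x_{(k)}-\tau\bigl(\Fnu(x_{(k)})+\Anu^\top\lambda_{(k)}\bigr)\,\bigr],\\
\lambda_{(k+1)} &= \Pi_{\R^{Nm}_{\ge0}}\bigl[\,\lambda_{(k)}-\tau\bigl(b-\Anu(2x_{(k+1)}-x_{(k)})\bigr)\,\bigr].
\end{align*}

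\emph{Fixed points versus variational equilibrium.} Using the projection characterization of VIs, a fixed point $(\bar x,\bar\lambda)$ of this map satisfies $(\Fnu(\bar x)+\Anu^\top\bar\lambda)^\top(x-\bar x)\ge 0$ for all $x\in\X$, together with $\bar\lambda\ge0$, $\Anu\bar x\le b$ and $\bar\lambda^\top(\Anu\bar x-b)=0$, i.e.\ precisely the KKT system of VI($\Qnu$,$\Fnu$). Because the coupling constraint is affine and $\Fnu$ is strongly monotone, VI($\Qnu$,$\Fnu$) has the unique solution $\bar x_\nu$ by Proposition~\ref{prop:uniqueness}; a KKT multiplier $\bar\lambda$ then exists, and the primal part of every fixed point must equal $\bar x_\nu$, the unique variational Nash equilibrium of $\Gnu$ (Proposition~\ref{prop:var_GNE_is_GNE}).

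\emph{Convergence.} Fixing one such pair $(\bar x_\nu,\bar\lambda)$, I would study $V_k\defeq\norm{x_{(k)}-\bar x_\nu}^2+\norm{\lambda_{(k)}-\bar\lambda}^2$. Subtracting the fixed-point equations, invoking (firm) nonexpansiveness of $\Pi_{\X}$ and $\Pi_{\R^{Nm}_{\ge0}}$, the strong monotonicity ($\alpha_\nu$) and Lipschitz continuity ($L_\nu$) of $\Fnu$, and expanding the inner products, I expect an estimate of the form $V_{k+1}\le V_k-c\,\norm{x_{(k)}-\bar x_\nu}^2$ with $c>0$. The key cancellation is that the skew-symmetric coupling terms produced by $\Anu^\top$ and $-\Anu$ telescope thanks to the extrapolation $2x_{(k+1)}-x_{(k)}$ in the dual update; what remains is the strong-monotonicity gain $\alpha_\nu\norm{x_{(k)}-\bar x_\nu}^2$ minus error contributions of order $L_\nu^2\tau$ and $\alpha_\nu\norm{\Anu}^2\tau^2$. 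Requiring the gain to dominate gives exactly $\alpha_\nu>L_\nu^2\tau+\alpha_\nu\norm{\Anu}^2\tau^2$, whose positive-root threshold is the bound~\eqref{eq:condition_tau_APA}. Summing the decrements then forces $x_{(k)}\to\bar x_\nu$.

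The hard part will be the convergence estimate. The full primal--dual operator $z\mapsto(\Fnu(x)+\Anu^\top\lambda,\,b-\Anu x)$ is monotone but strongly monotone only in the primal block, so a plain Arrow--Hurwicz step need not converge: the skew-symmetric coupling induces rotation. The extrapolation acts as a Chambolle--Pock-type prediction that damps it, and the delicate point is the algebraic verification that the $x$--$\lambda$ cross terms in $V_{k+1}-V_k$ cancel, leaving a sign-definite residual whose coefficient vanishes precisely at the threshold in~\eqref{eq:condition_tau_APA}.
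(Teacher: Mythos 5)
Your reduction of Algorithm~\ref{alg:NAG_nu} to compact primal--dual form is exactly the paper's own first step: the identifications $\sigma_{\nu,(k)}=(T^\nu\otimes I_n)x_{(k)}$ and $\mu_{\nu,(k)}=(T^\nu\otimes I_m)^\top\lambda_{(k)}$, the resulting iteration \eqref{eq:alg_compact}, and the fixed-point/KKT characterization all match. Where you diverge is in how convergence of that iteration is established. The paper does not prove it at all: it recognizes \eqref{eq:alg_compact} as one iteration of the asymmetric projection algorithm of \cite[Algorithm 2]{gentile2017nash} applied to VI$(\mathcal{Q}_\nu,F_\nu)$ and invokes \cite[Theorem 3]{gentile2017nash}, which yields convergence under precisely the step bound \eqref{eq:condition_tau_APA}. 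You instead attempt to re-derive that result from scratch, and this is where your sketch has a genuine gap.

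The gap is your choice of Lyapunov function. With the unweighted candidate $V_k=\norm{x_{(k)}-\bar x_\nu}^2+\norm{\lambda_{(k)}-\bar\lambda}^2$, the cross terms generated by $A_\nu^\top$ in the primal update and $-A_\nu$ in the extrapolated dual update do \emph{not} telescope; the iteration is Fej\'er monotone only in a weighted metric. The standard (Chambolle--Pock / asymmetric-projection) analysis measures the iterates in $\norm{z}_M^2$ with $M=\bigl(\begin{smallmatrix}\frac{1}{\tau}I & -A_\nu^\top\\ -A_\nu & \frac{1}{\tau}I\end{smallmatrix}\bigr)$, i.e.\ the symmetrization of the matrix $\bigl(\begin{smallmatrix}\frac{1}{\tau}I & 0\\ -2A_\nu & \frac{1}{\tau}I\end{smallmatrix}\bigr)$ that makes \eqref{eq:alg_compact} an asymmetric projection step, and it is exactly positive definiteness of $M$ that forces $\tau<1/\norm{A_\nu}$ --- the side condition the paper explicitly notes is implied by \eqref{eq:condition_tau_APA}. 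In the plain Euclidean metric the residual coupling terms have no sign and cannot be absorbed into the strong-monotonicity gain, so the decrease estimate $V_{k+1}\le V_k-c\norm{x_{(k)}-\bar x_\nu}^2$ you posit will not materialize. Two further holes: existence of a fixed point, i.e.\ of a KKT multiplier $\bar\lambda$ for VI$(\mathcal{Q}_\nu,F_\nu)$, is asserted but requires a constraint-qualification or duality argument (the abstract constraint $\mathcal{X}$ is a general convex set, not a polyhedron, so affineness of $A_\nu x\le b$ alone does not settle it); and your step-size inequality $\alpha_\nu>L_\nu^2\tau+\alpha_\nu\norm{A_\nu}^2\tau^2$ is reverse-engineered from \eqref{eq:condition_tau_APA} rather than derived --- as you concede, the ``delicate algebraic verification'' is left undone, and that verification is the entire content of the theorem the paper cites.
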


\begin{proof}
Let us define $x_{(k)}\defeq[x^i_{(k)}]_{i=1}^N,\lambda_{(k)}\defeq[\lambda^i_{(k)}]_{i=1}^N,\sigma_{\nu,(k)}\defeq[\sigma^i_{\nu,(k)}]_{i=1}^N,\mu_{\nu,(k)}\defeq[\mu^i_{\nu,(k)}]_{i=1}^N$. Then the communication steps are equivalent to 
\begin{align*}
& \sigma_{\nu,(k)} \leftarrow \textstyle (T^\nu \otimes I_n) \ x_{(k)}, \\
& \mu_{\nu,(k)} \leftarrow \textstyle (T^\nu \otimes I_m)^\top \lambda_{(k)}.
\end{align*}
Consequently, the update steps can be rewritten as
\begin{align*}
x^{i}_{(k+1)} &\leftarrow \Pi_{\mathcal{X}^i}[x^{i }_{(k)}- \tau (F^i_{\nu,(k)}  + \textstyle \! {\hat A}^\top \!\! \sum_{j=1}^N [T^\nu]_{ji} \lambda^j_{(k)} )], \! \\
 \lambda^i_{(k+1)} &\leftarrow \! \Pi_{\mathbb{R}^{m}_{\ge0}}[\lambda^i_{(k)}-\tau \cdot\\
 & \hspace*{-0.5cm} \textstyle \cdot (\hat b-2\hat A\sum_{j=1}^N [T^\nu]_{ij}x^j_{(k+1)} +\hat A\sum_{j=1}^N [T^\nu]_{ij}x^j_{(k)}  )] \label{eq:apa_inner_c},
\end{align*}
for all $i\in\Z[1,N]$ or, in compact form,
\begin{equation}
\begin{aligned}
& x_{(k+1)}\leftarrow \! \Pi_{\mathcal{X}}[x_{(k)} \! - \! \tau \left( F_\nu(x_{(k)}) + \! { A_\nu}^\top \lambda_{(k)} \right)],\\
& \lambda_{(k+1)}  \leftarrow \! \Pi_{\mathbb{R}^{Nm}_{\ge0}}[\lambda_{(k)}-\tau (b-2 A_\nu x_{(k+1)} + A_\nu x_{(k)} )]. 
\end{aligned}
\label{eq:alg_compact}
\end{equation}
As~\eqref{eq:alg_compact} coincides with one iteration of the asymmetric projection algorithm given in~\cite[Algorithm 2]{gentile2017nash} applied to VI$(\mathcal{Q}_\nu,F_\nu)$. 
Then~\cite[Theorem 3]{gentile2017nash} shows that, by choosing $\tau$ as in~\eqref{eq:condition_tau_APA}, which also implies $\tau<1/\| A_\nu\|$,
Algorithm~\ref{alg:NAG_nu} is guaranteed to converge to the unique solution of  VI$(\mathcal{Q}_\nu,F_\nu)$.
\end{proof}
\begin{remark}
Note that $A_\nu=T^\nu \otimes \hat A$, hence $\|A_\nu\|=\|T^\nu\|\|\hat A\|$.
Under Assumption~\ref{ass:primitive_doubly},
$\|T^\nu\|=1$ since $T$ is doubly stochastic. Hence in this case, one can 
use $\|\hat A\|$  instead of  $\|A_\nu\|$ in \eqref{eq:condition_tau_APA}.
Moreover, Lemma \ref{lemma:c3} guarantees that strong monotonicity of $F_\nu$ assumed in Theorem \ref{thm:convergence_alg} is met for $\nu>\nu_{\textup{SMON}}.$
\hfill $\square$
\end{remark}

\subsection{Relation of Theorems \ref{thm:convergence} and \ref{thm:convergence_alg}  with the  literature on distributed convergence in AAG  without coupling constraints}
\label{sec:literature_relation}
As anticipated in the introduction, distributed algorithms for AAGs  without coupling constraints have already been derived in the  literature, for example in~\cite{koshal2012gossip,chen2014autonomous,parise2015networkA,koshal2016distributed}.
We highlight here how the proofs of Theorems~\ref{thm:convergence} and~\ref{thm:convergence_alg} and the steps of Algorithm~\ref{alg:NAG_nu}
greatly simplify in the absence of coupling constraints (i.e. when $\mathcal{C}_\infty=\R^{Nn}$).

Regarding the first statement of Theorem \ref{thm:convergence},
in the absence of coupling constraints the VIs of $\mathcal{G}_\nu$ and  $\mathcal{G}_\infty$ feature the same set $\mc{X}$,
which is not affected by the parameter $\nu$.
Convergence of $\bar x_\nu$ to  $\bar x_\infty$ can thus be proven by using standard sensitivity analysis results for VI,
as explained in details at the beginning of Appendix B.
In this case one can even prove Lipschitz continuity of the solution,
so it is possible to derive bounds on the minimum number of communications $\nu$ needed to achieve any desired precision in~\eqref{eq:xnu_tends_xinf}.

Regarding the second statement of Theorem~\ref{thm:convergence}, in the absence of coupling constraints  the fact that $\bar x_\nu$ is an  $\varepsilon$-Nash equilibrium of $\mathcal{G}_\infty$ is a trivial consequence of \eqref{eq:xnu_tends_xinf} and of the fact that the cost functions are Lipschitz.
The difficulty when introducing the coupling constraints are that
i) the feasibility of $\bar x_\nu$ in $\mathcal{G}_\nu$ does not imply automatically feasibility of $\bar x_\nu$ in $\mathcal{G}_\infty$ and
ii) in the definition of Nash equilibrium, the set of feasible deviations $\mathcal{Q}^i_\nu(\bar x_\nu^{-i})$  in $\mathcal{G}_\nu$ is different from the set of feasible deviations $\mathcal{Q}^i_\infty(\bar x_\nu^{-i})$ in $\mathcal{G}_\infty$ (without coupling constraints both these sets would instead be simply $\mathcal{X}^i$).
This is why to prove the second statement of Theorem~\ref{thm:convergence} one needs to show Hausdorff convergence of $\mathcal{Q}^i_\nu(\bar x_\nu^{-i})$ to $\mathcal{Q}^i_\infty(\bar x_\nu^{-i})$ as $\nu \to \infty$, as done in Lemma~\ref{lemma:hausdorff} in the Appendix.

Regarding Algorithm~\ref{alg:NAG_nu} and Theorem \ref{thm:convergence_alg},
in the absence of coupling constraints one needs to solve the VI$(\mathcal{X},F_\nu)$ in a distributed fashion.
Since the constraint set $\mathcal{X}$ can be decoupled among the agents,
the standard projection algorithm \cite[Algorithm 12.1.1]{facchinei2007finite} is distributed and it is guaranteed to converge,
because $F_\nu$ is strongly monotone.
In other words, one can run Algorithm~\ref{alg:NAG_nu} performing only the primal steps, with simplified strategy update
$x^{i}_{(k+1)}\leftarrow \! \Pi_{\mathcal{X}^i}[x^{i }_{(k)} \! - \! \tau  F^i_{\nu,(k)} ]$.

\section{Application: Cournot game with transportation costs}
\label{sec:application} 

\subsection{Game definition}
\label{sec:appl}

Consider a single-commodity\footnote{The analysis applies also to a
multi-commodity game as in~\cite[Section 7.1]{yi2017distributed},
but we consider a single-commodity game for ease of exposition and we rather focus on the transportation costs.}
Cournot game with $N$ firms and $V$ markets,
which correspond to $V$ physical locations.
Firm $i\in\Z[1,N]$ chooses to sell $y_v^i \in \R_{\ge 0}$ amount of commodity at each market $v \in \Z[1,V]$.
Each firm $i$ produces its commodity at a given location $\ell_i\in\Z[1,V]$ and
then ships its commodity to the different markets over a transportation network,
where the $V$ nodes represent market locations and a
directed edge connecting two nodes represents a road connecting two markets.
We characterize the network by its incidence matrix $B\in\{0,1,-1\}^{V\times E}$,
where $E$ is the number of edges and $B_{v,e}=-1$ if edge $e$ leaves node $v$,
$B_{v,e}=1$ if edge $e$ enters node $v$ and $B_{v,e}=0$ otherwise.
Denote by $r^i\in\R_{\ge0}$ the total amount of commodity produced and sold by firm $i$ (i.e., $r\i=\sum_{v\in V} y_v^i$)
and by $t^i_e\in\R_{\ge0}$ the amount of commodity transported by firm $i$ over edge $e$, with $t^i=[t^i_e]_{e=1}^E$.
Define the strategy vector of firm $i$ as $x^i \defeq [t^i;r^i] \in \R_{\ge 0}^{E+1}$,
which uniquely determines $y\i \defeq [y^i_v]_{v=1}^V$, due to the balance equation
\begin{equation*}
y^i = Bt^i + e_{\ell_i} r\i = H\i x^i,
\label{eq:conservation_commodity}
\end{equation*}
with $H\i\defeq[B, e_{\ell_i}]\in\R^{V \! \times \! (E+1)}$ and $e_j$ the $j^\text{th}$ canonical vector.
\subsubsection{Cost function}
We assume that at each market the commodity is sold at a price that depends on the total commodity sold by the $N$ firms.
We allow for inter-market effects and use the inverse demand function\footnote{The inverse demand function determines the price for which demand equals supply at market $v$. This is why
 we can assume that all the supply is sold.} $p: \R^V_{\ge0}\rightarrow \R^V_{\ge0}$ that maps 
the normalized vector $\sinf(x) = \frac1N \sum_{j=1}^N y^j = \frac1N \sum_{j=1}^N H\j x^j$
to the vector of prices of each market $p(\sinf(x)) \defeq [p_v(\sinf(x))]_{v=1}^V$.
Then the revenue of firm $i$ is $p(\sinf(x))^\top y^i$.
Moreover, for firm $i$ transporting $t^i_e$ commodity over an edge comes with a cost equal to
%
\begin{equation}
c_e\i(t^i_e) \defeq \beta_{e}\i t^i_e- \gamma_e\i(t^i_e),
\label{eq:cost_transport}
\end{equation}
where, for all $i$, $\gamma_e\i$ is a strongly concave, increasing function  with maximum derivative smaller than $\beta_{e}\i$.
The transportation cost in~\eqref{eq:cost_transport} can be thought of as the sum of two terms:
the first is a cost proportional to the amount shipped,
the second term is a discount that increases as the amount of shipped commodity increases.

The production cost function of firm $i$ has a similar form
\begin{equation}
a\i(r^i) \defeq \beta_{a}\i r^i - \gamma_a\i(r^i),
\label{eq:cost_production}
\end{equation}
where $\gamma_a\i$ is a strongly concave, increasing function with maximum derivative smaller than $\beta_{a}\i$.
Note that the functions~\eqref{eq:cost_transport} and~\eqref{eq:cost_production} are strongly convex,
as in~\cite[Section 1.4.3]{facchinei2007finite}.

To sum up, the cost function of firm $i$ is
\begin{equation*}
J^i(x^i,\sinf(x)) \defeq \!\!\!
\underbrace{a\i(r^i)}_{\textup{production cost}} \! + \underbrace{\textstyle \sum_{e=1}^E c_e\i(t^i_e)}_{\textup{transportation cost}}\; - \; \underbrace{p(\sinf(x))^\top y^i}_{\textup{revenue}} \! .
\end{equation*}

\subsubsection{Constraints}
The strategy of firm $i$ must satisfy the individual constraints
\begin{equation}
\mathcal{X}^i \defeq \{x^i\in\R^{E+1}_{\ge 0} \vert x\i \le \bar{r}^i \cdot \ones[E+1], y\i = H\i x\i \ge 0 \},
\label{eq:constraints_application}
\end{equation}
where $\bar{r}^i$ is the production capacity of firm $i$.
Note that ~\eqref{eq:constraints_application} implies $t\i_e \le \bar r\i$ for each $e$,
which is needed to guarantee boundedness of $\X\i$ and can be imposed without loss of generality,
as the transportation costs $c\i_e$ are increasing.

Moreover, we assume that each market $v$ is composed by retailers whose storage capacity imposes an upper bound $K_v$ on the total commodity that can be sold at market $v$, thus giving rise to the coupling constraints $\sigma_\infty(x) \le K\defeq[K_v]_{v=1}^V$.


\subsubsection{Communication network}
We assume that the firms can communicate with each other according to a sparse communication network, described by the adjacency matrix $T$,
which we assume satisfies Assumption \ref{lemma:lim_A_nu_F_nu}.
This network can model spatial proximity of firms,
or the fact that they may want to share their strategies only with firms they trust.

\subsection{Theoretical guarantees}

The cost, constraints and communication network introduced above give rise to a game as in~\eqref{eq:GNEP}, with the only difference that the aggregate $\sigma_\infty(x)$ depends on $y^i = H\i x^i$ instead of $x^i$ directly. We next show that our theory can be easily extended to cover such case.\footnote{Defining a new game with strategies
$\tilde x\i = H\i x\i$, so that the aggregate depends only on  the $\tilde x\i$, is not possible in general as the cost $J\i(x\i,\sigma_\infty(x))$ cannot  be expressed as a function of the variables $\tilde x\i$s
unless $H\i$ is full column rank for all $i$. This is not the case in the Cournot game under consideration.}

\subsubsection{Extension}
\label{sec:ext}
Set $H_\text{blkd} \defeq\mbox{blkdiag}(H^1,\ldots,H^N) \in \R^{NV \times N(E+1)}$ .
The quantities in~\eqref{eq:game_aag} relative to $\Ginf$ are
%
%
\begin{subequations}\label{eq:game_aag_dim}
\begin{align}
\Finf(x) &\defeq [\nabla_{x^i} J^i(x^i, \sinf(x) )]_{i=1}^N, \label{eq:game_aag_dim:a}\\
&= \textstyle [\nabla_{z_1} J^i(x^i,\sigma_\aag(x))+\frac1N H_i^\top \nabla_{z_2} J^i(x^i,\sigma_\aag(x))]_{i=1}^N, \notag \\
\mathcal{Q}_\aag&\defeq\{x\in \X^1 \times \dots \times \X^N \vert A_\aag x \le b\}, \label{eq:game_aag_dim:b} \\
A_\aag&\defeq \biggl(\frac{1}{N} \ones[N]\ones[N]^\top \otimes \hat A \biggr) H_\text{blkd}, \label{eq:game_aag_dim:c}\\
b&\defeq\ones[N] \otimes \hat b.
\label{eq:game_aag_dim:d} 
\end{align}
\end{subequations}
The quantities in~\eqref{eq:game_nu_nag} relative to $\Gnu$ are
\begin{subequations}\label{eq:game_nu_nag_dim}
\begin{align}
F_\nu(x) &\defeq [\nabla_{x^i} J^i(x^i, \snu^i(x) )]_{i=1}^N \label{eq:game_nu_nag_dim:a},\\
&= \textstyle [\nabla_{z_1} J^i(x^i,\sigma^i_\nu(x))+ [T^\nu]_{ii}  H_i^\top \nabla_{z_2} J^i(x^i,\sigma^i_\nu(x))]_{i=1}^N,\notag \\
\mathcal{Q}_\nu &\defeq\{x\in \X^1 \times \dots \times \X^N \vert A_\nu x \le b \}, \label{eq:game_nu_nag_dim:b} \\
A_\nu &\defeq(T^\nu \otimes \hat A) H_\text{blkd}. \label{eq:game_nu_nag_dim:c}
\end{align}
\end{subequations}
%

%
The only assumption of Sections~\ref{sec:agg_games}-\ref{sec:auxiliary_results} that needs to be modified is Assumption~\ref{ass:R2},
which, due to the presence of $H_\text{blkd}$ in~\eqref{eq:game_aag_dim:c}, cannot be expressed on the sole $\hat A$ and $\hat b$ and is hence replaced by the following one.

\textbf{Assumption 2\textup{'}} (Coupling constraints).
The matrix $\Ainf$ in~\eqref{eq:game_aag_dim:c} and the vector $b$ in~\eqref{eq:game_aag_dim:d}
are such that the following implication holds.
\begin{align*}
\{\Ainf^\top  s= 0 ,\quad  b^\top s \le 0, \quad   s\ge 0  \} \quad \Rightarrow \quad s=0.
\tag*{$\square$}
\end{align*}
The proofs of Theorems~\ref{thm:main} and~\ref{thm:convergence}, which made use of Assumption~\ref{ass:R2},
can be conducted in the same way using Assumption~2'.

\subsubsection{Verify the assumptions}

To use Theorem \ref{thm:main}  in the numerical analysis of the next subsection,
we need to verify its assumptions. 
Assumption~\ref{lemma:lim_A_nu_F_nu} holds by problem statement.
Assumption~2' is satisfied, because the implication $\{(\ones[N] \otimes K)^\top s \le 0, s \ge 0\} \Rightarrow s = 0$
is satisfied, given that $K > 0$. 
To guarantee that Assumption~\ref{ass:F_Fnu_SMON} holds we make the following  assumption, whose sufficiency is  proven in Lemma \ref{lemma:verifyAss3}.

\begin{assumption}[Cournot-game regularity conditions]\label{ass:appl}
 The cost $J\i(z_1,z_2)$ is twice continuously differentiable for all $i$,
and  the inverse demand function $p$ satisfies one of the following conditions.
\\
1)  $p$ is affine, i.e., $p(\sigma_\infty(x))= -D \sigma_\infty(x) +d$, for some $D\in\R^{V\times V}$, $d\in\R^V$  and $D\succeq 0$.
\\
2)  $p_v$  depends only on the commodity sold at $v$, i.e.,
$p(\sigma_\infty(x))\eqdef [p_v([\sigma_\infty(x)]_v)]_{v=1}^V$.
For each $v$,  $p_v$  is twice continuously differentiable, strictly decreasing and satisfies 
\begin{equation}
\label{eq:price_condition}
\min_{\substack{v \in \{1,\dots,V\} \\ z\in [0, \tilde r]}} \left( -p'_v(z)+\frac{\tilde r}{8}p''_v(z) \right)>0,\quad \tilde r \defeq \!\! \maxx{i \in \Z[1,N] } \bar r^i.
\end{equation}
\hfill $\square$
\end{assumption}

\begin{lemma}
If Assumption \ref{ass:appl} holds then the Cournot game in Section \ref{sec:appl} satisfies Assumption 3. \hfill $\square$
\label{lemma:verifyAss3}
\end{lemma}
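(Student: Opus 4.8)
The plan is to verify Assumption~\ref{ass:F_Fnu_SMON} by establishing strong monotonicity of $\Finf$ through the differential criterion of Proposition~\ref{prop:pd}. Since Assumption~\ref{ass:appl} forces each $J^i$ to be twice continuously differentiable, it suffices to find $\alpha>0$ with $\nabla_x\Finf(x)\succeq\alpha I$ for all $x\in\X$, i.e.\ $v^\top\nabla_x\Finf(x)\,v\ge\alpha\norm{v}^2$ for every $v=[v^1;\dots;v^N]$. First I would read off from~\eqref{eq:game_aag_dim:a} that the $i$-th block of $\Finf$ is $g^i(x^i)-(H^i)^\top p(\sigma)-\tfrac1N(H^i)^\top[\nabla_\sigma p(\sigma)]\,y^i$, where $\sigma\defeq\sinf(x)$, $y^i=H^ix^i$, and $g^i(x^i)\defeq\nabla_{x^i}\!\big(a^i(r^i)+\sum_e c_e^i(t_e^i)\big)$ gathers the production and transport cost gradients. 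Differentiating in $x$ then decomposes $\nabla_x\Finf$ into a block-diagonal cost term $\mathrm{blkdiag}\,(\nabla^2 g^i)$ and a revenue term that carries the first derivatives of $p$ (and, when $p$ is nonlinear, its second derivatives).

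The cost term is the benign, sign-definite part. Because the $a^i$ and $c_e^i$ in~\eqref{eq:cost_production}--\eqref{eq:cost_transport} are strongly convex, each $\nabla^2 g^i$ is diagonal with entries $a^{i\prime\prime}$ and $c_e^{i\prime\prime}$ bounded below by a common constant $\alpha_0>0$, so this block already supplies $v^\top\mathrm{blkdiag}\,(\nabla^2 g^i)\,v\ge\alpha_0\norm{v}^2$. It remains to show that the revenue term contributes a nonnegative quadratic form. Introducing the directional aggregates $w^i\defeq H^iv^i$ and $W\defeq\sum_i w^i$, and using that $\nabla_\sigma p$ is the $V\times V$ price Jacobian, I would rewrite the revenue quadratic form as a sum over markets $v\in\Z[1,V]$ of scalar terms built from $w^i_v$, $W_v$, the price data $p_v'(\sigma_v),p_v''(\sigma_v)$, and the current supplies $y^i_v$.

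For the affine case (Assumption~\ref{ass:appl}.1) this is immediate: $\nabla_\sigma p\equiv -D$ is constant with $D\succeq0$, there are no second-derivative terms, and the market-wise sum symmetrizes into $\tfrac1N\sum_i(w^i)^\top D\,w^i+\tfrac1N W^\top D\,W\ge0$. Hence $\nabla_x\Finf\succeq\alpha_0 I$ and $\Finf$ is strongly monotone. The separable case (Assumption~\ref{ass:appl}.2) is the crux. Here $\nabla_\sigma p=\mathrm{diag}\big(p_v'(\sigma_v)\big)$ yields the favorable per-market contributions $\tfrac1N(-p_v')\big(W_v^2+\sum_i(w^i_v)^2\big)\ge0$, favorable because strict monotonicity of $p_v$ gives $-p_v'>0$; but differentiating $p_v'(\sigma_v)$ through $\sigma$ produces cross terms proportional to $p_v''(\sigma_v)\,W_v\sum_i w^i_v y^i_v$ whose sign is not controlled.

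The main obstacle is to dominate these second-derivative cross terms by the favorable first-derivative terms. The structural ingredient I would use is the a priori bound $0\le y^i_v\le r^i\le\bar r^i\le\tilde r$: since the incidence matrix $B$ has columns summing to zero, the balance $y^i=Bt^i+e_{\ell_i}r^i$ underlying~\eqref{eq:constraints_application} gives $\ones[V]^\top y^i=r^i$, so each market supply is capped by the production capacity, which also pins the relevant evaluation range $\sigma_v\in[0,\tilde r]$. Bounding $|W_v\sum_i w^i_v y^i_v|$ by $\tilde r$ times quadratic expressions in $w^i_v$ and $W_v$ via Cauchy--Schwarz, and then balancing the three market-wise terms jointly rather than estimating the cross term in isolation, reduces the required positivity to the scalar inequality $-p_v'(\sigma_v)+\tfrac{\tilde r}{8}p_v''(\sigma_v)>0$ over $\sigma_v\in[0,\tilde r]$; this is exactly condition~\eqref{eq:price_condition}, with the factor $\tilde r/8$ being the constant produced by the worst-case configuration of the $y^i_v$. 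Combining the resulting nonnegative revenue form with the $\alpha_0\norm{v}^2$ from the cost block gives $\nabla_x\Finf\succeq\alpha_0 I$, which is Assumption~\ref{ass:F_Fnu_SMON}; this strategy parallels the sufficient conditions derived for related instances in~\cite{gentile2017nash}. I expect the delicate point to be precisely this balancing estimate, since the cross term must be absorbed using both the $\sum_i(w^i_v)^2$ and the aggregate $W_v^2$ contributions at once in order to reach the constant $\tilde r/8$.
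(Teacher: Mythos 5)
Your overall architecture coincides with the paper's proof: split $\nabla_x \Finf$ into the block-diagonal Hessian of the production/transportation costs (which, by strong convexity of $a\i$ and $c_e\i$, supplies a uniform constant $\alpha_0 I_{N(E+1)}$) plus the Jacobian of the revenue term $P(x) \defeq -[\nabla_{x\i}(p(\sinf(x))^\top y\i)]_{i=1}^N$, show the latter is positive semidefinite in each case of Assumption~\ref{ass:appl}, and conclude strong monotonicity via Proposition~\ref{prop:pd}; twice continuous differentiability is immediate from Assumption~\ref{ass:appl}. Your treatment of the affine case is the paper's argument verbatim: in the paper's notation $\nabla_x P(x) = \tfrac1N (H^\top D H + H_\text{blkd}^\top D^\top H_\text{blkd}) \succeq 0$, which is exactly your $\tfrac1N W^\top D W + \tfrac1N \sum_i (w^i)^\top D w^i \ge 0$. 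You also correctly note two details the paper leaves implicit: the composition with $H_\text{blkd}$ only preserves semidefiniteness (which suffices, since the cost block carries the strict constant), and the supplies and aggregates stay in $[0,\tilde r]$ because $\ones[V]^\top B = 0$ gives $\ones[V]^\top y\i = r\i \le \bar r\i$, which is what makes \eqref{eq:price_condition} applicable.

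The gap is in the separable case, and it sits at the only mathematically nontrivial point. The paper does not prove this case: it defines $\tilde P(y)\defeq -[\nabla_{y^i}(p(\tfrac1N\sum_j y^j)^\top y^i)]_{i=1}^N$, invokes \cite[Corollary 1]{gentile2017nash} to get $\nabla_y \tilde P(y) \succ \alpha' I$ under \eqref{eq:price_condition}, and transfers this through $P(x) = H_\text{blkd}^\top \tilde P(H_\text{blkd}x)$. You instead attempt to re-derive that corollary, and your per-market quadratic form is the right one, namely
\begin{equation*}
\tfrac1N(-p_v')\bigl(W_v^2+\textstyle\sum_i (w_v^i)^2\bigr) \;-\; \tfrac{1}{N^2}\,p_v''\,W_v \textstyle\sum_i w_v^i y_v^i ,
\end{equation*}
but the decisive step --- showing the cross term is absorbed by the first term with the worst case over $0\le y_v^i\le \tilde r$ producing exactly the constant $\tilde r/8$ --- is asserted rather than carried out. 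This cannot be waved through: a direct Cauchy--Schwarz/Young estimate of the cross term yields a population-dependent threshold of order $\tilde r/(2\sqrt{N})$ rather than the population-independent $\tilde r/8$, so reaching \eqref{eq:price_condition} requires a genuinely more careful balancing (the content of the cited corollary), not a routine completion. As written, your proof is therefore incomplete at exactly this point; it becomes a valid proof if you either execute that estimate in full or, as the paper does, cite \cite[Corollary 1]{gentile2017nash} for it.
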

\noindent The proof is reported in Appendix C.

\begin{remark}
If the function $p$ is as in Assumption~\ref{ass:appl}.1) and $D=D^\top$,
then $\Ginf$ is a potential game~\cite{shaply1994potentialgames}.
In other words, there exists a function $f: \mc{Q} \to \R$ such that $\nabla_x f(x) = F_\infty(x)$ and
VI($\mc Q_\infty$,$F_\infty$) is equivalent to $\argmin{x \in \mc{Q_\infty}}{f(x)}$,
as described in~\cite[Section 1.3.1]{facchinei2007finite}.
Then a Nash equilibrium can be found by solving the optimization program $\argmin{x \in \mc{Q_\infty}}{f(x)}$.
We also note that condition~\eqref{eq:price_condition} is satisfied if for each market $v$ the function $p_v$ is convex and strictly decreasing.
\hfill $\square$
\end{remark}
Regarding the Standing Assumption, Assumption~\ref{ass:F_Fnu_SMON} implies that $J\i$ is  continuously differentiable in its arguments
and that $\nabla_x[\nabla_{x\i} J\i(x\i,\sinf(x))]_{i=1}^N \succ \alpha I_{N(E+1)}$ by Proposition~\ref{prop:pd},
which in turn implies
$\nabla_{x\i} \nabla_{x\i} J\i(x\i,\sinf(x)) \succ \alpha I_{E+1}$, which implies convexity of $J\i$ in $x\i$ for all fixed $x^{-i}$. The sets $\mathcal{X}^i$ are trivially convex, compact and non-empty.

Finally, the next lemma shows that for the network, number of communications $\nu$ and price functions used in the numerical analysis of the next subsection, $F_\nu$ is strongly monotone, as required by Theorem~\ref{thm:convergence_alg}.
\begin{lemma}\label{lem:4}
Under Assumption \ref{ass:appl}.1), if $T$ is the adjacency matrix of an undirected network, so that $T=T^\top$,
then the operator $F_\nu$ is strongly monotone for any $\nu$ even.
\hfill $\square$
\end{lemma}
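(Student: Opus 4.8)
The plan is to invoke Proposition~\ref{prop:pd} and show that the symmetric part of the Jacobian $\nabla_x \Fnu$ is bounded below by $\alpha I$ for some $\alpha>0$, uniformly in $x$. The first step is to make the operator explicit. Since $p$ is affine, $p(z_2)=-Dz_2+d$, the two partial gradients of the cost $J^i(z_1,z_2)=a^i(r^i)+\sum_e c_e^i(t_e^i)-p(z_2)^\top H^i z_1$ are $\nabla_{z_1}J^i(z_1,z_2)=[(c^i)'(t^i);(a^i)'(r^i)]-(H^i)^\top p(z_2)$ and $\nabla_{z_2}J^i(z_1,z_2)=D^\top H^i z_1$. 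Substituting $z_1=x^i$ and $z_2=\snu^i(x)=\sum_j[T^\nu]_{ij}H^j x^j$ into the definition~\eqref{eq:game_nu_nag_dim:a} of $\Fnu$, with self-weight $[T^\nu]_{ii}$, yields an operator that is affine in $x$, whose $i$-th block is
$$\Fnu^i(x)=[(c^i)'(t^i);(a^i)'(r^i)]-(H^i)^\top d+(H^i)^\top D\sum_j[T^\nu]_{ij}H^j x^j+[T^\nu]_{ii}(H^i)^\top D^\top H^i x^i.$$

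The second step is to read off the Jacobian and split it into a strongly convex diagonal part and a coupling part. Writing $h^i(x^i)\defeq a^i(r^i)+\sum_e c_e^i(t_e^i)$, the $(i,j)$ block of $\nabla_x\Fnu$ is $\delta_{ij}\nabla^2 h^i(x^i)+[T^\nu]_{ij}(H^i)^\top D H^j+\delta_{ij}[T^\nu]_{ii}(H^i)^\top D^\top H^i$, so that
$$\nabla_x\Fnu=\underbrace{\textup{blkdiag}(\nabla^2 h^i)}_{\eqdef\Psi}+\underbrace{H_\text{blkd}^\top(T^\nu\otimes D)H_\text{blkd}}_{\eqdef M_\nu}+\underbrace{\textup{blkdiag}([T^\nu]_{ii}(H^i)^\top D^\top H^i)}_{\eqdef N_\nu},$$
where the block structure of $H_\text{blkd}$ has been used to collapse $M_\nu$ into Kronecker form. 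Because each $a^i$ and $c_e^i$ is strongly convex, there is $\mu>0$ with $\Psi\succeq\mu I$ uniformly in $x$, so it suffices to show that the symmetric parts of $M_\nu$ and $N_\nu$ are positive semidefinite.

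The third step, and the crux, is exactly this semidefiniteness, which is where the two hypotheses enter. Since $T=T^\top$, the matrix $T^\nu$ is symmetric, hence $M_\nu^\top=H_\text{blkd}^\top(T^\nu\otimes D^\top)H_\text{blkd}$ and $\tfrac12(M_\nu+M_\nu^\top)=H_\text{blkd}^\top\big(T^\nu\otimes\tfrac{D+D^\top}{2}\big)H_\text{blkd}$. For $\nu$ even, $T^\nu=(T^{\nu/2})^\top T^{\nu/2}\succeq0$, and $D\succeq0$ gives $\tfrac{D+D^\top}{2}\succeq0$; as the Kronecker product of two positive semidefinite matrices is positive semidefinite and the map $S\mapsto H_\text{blkd}^\top S H_\text{blkd}$ preserves this property, $\tfrac12(M_\nu+M_\nu^\top)\succeq0$. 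The same two facts, together with $[T^\nu]_{ii}\ge0$ (entrywise nonnegativity of $T^\nu$), give $\tfrac12(N_\nu+N_\nu^\top)=\textup{blkdiag}\big([T^\nu]_{ii}(H^i)^\top\tfrac{D+D^\top}{2}H^i\big)\succeq0$. Combining, the symmetric part of $\nabla_x\Fnu$ is $\succeq\mu I\succ0$, and Proposition~\ref{prop:pd} yields strong monotonicity of $\Fnu$. The main obstacle is precisely the positivity of $M_\nu$: without symmetry of $T$ one cannot symmetrize $D$ into $\tfrac{D+D^\top}{2}$, and without $\nu$ even one cannot guarantee $T^\nu\succeq0$, since the negative eigenvalues of $T$ would flip sign. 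This is what makes strong monotonicity hold for every even $\nu$ here, rather than only for $\nu$ large as in Lemma~\ref{lemma:c3}.
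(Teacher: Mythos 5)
Your proof is correct and follows essentially the same route as the paper: both reduce strong monotonicity to positive definiteness of the Jacobian via Proposition~\ref{prop:pd}, split it into the strongly convex production/transportation block-diagonal part plus the price-coupling part, and establish positive semidefiniteness of the latter from $T=T^\top$, $T^\nu=(T^{\nu/2})^\top T^{\nu/2}\succeq 0$ for $\nu$ even, $D\succeq 0$, the Kronecker-product and congruence arguments, and $[T^\nu]_{ii}\ge 0$ for the diagonal term. The only cosmetic differences are that the paper first computes the Jacobian of $\tilde P_\nu$ in the $y$-coordinates and then maps back through $H_\text{blkd}$, and that you are slightly more explicit in symmetrizing $D$ into $\tfrac{D+D^\top}{2}$, which is a welcome precision since the paper's definition of $\succeq 0$ does not require symmetry.
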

\noindent The proof is  reported in Appendix C.

\subsection{Numerical analysis}
We consider two simulation setups.
The first is a small example to develop intuition about the problem,
the second is used to illustrate the applicability of our method for a more realistic scenario.

\subsubsection{Small network}
We consider a simple chain transportation network with $V=5$ markets, $E=4$ roads and $N=3$ firms.
As illustrated in Figure \ref{fig:net} we assume that the firms $\{1,2,3\}$ are located at markets $\{1,3,5\}$, respectively, and are otherwise identical, with $\bar r\i = 5$ for all $i$.
\begin{figure}[H]
\begin{center}
\includegraphics[width=0.45\textwidth]{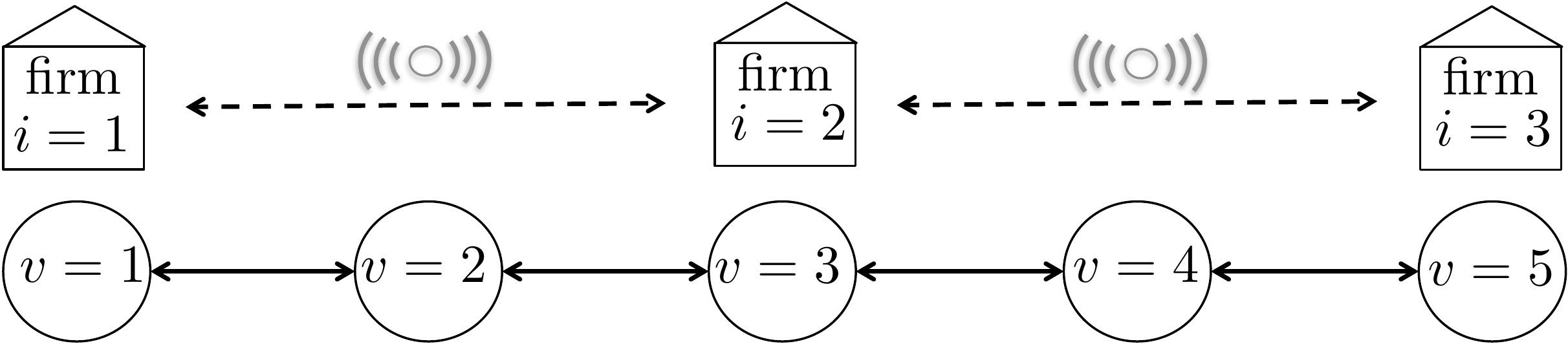}
\end{center}
\caption{Small  network with the $3$ firms located in markets $1,3,5$, respectively. The  transportation network is represented with a solid line, the communication network with a dashed line.}
\label{fig:net}
\end{figure}
\noindent
Regarding the cost functions, for each firm $i$, we set

\begin{subequations}
\begin{align}
c\i_e(t\i_e) &= c_e(t\i_e)=t\i_e-\left(1-\frac{1}{1+t\i_e}\right) \forall e, \label{eq:small_network_c_e} \\
a\i(r\i) &= a(r\i)=2\left[r\i-\left(1-\frac{1}{1+r\i}\right)\right] \! . \label{eq:small_network_a_p}
\end{align}
\end{subequations}

For each market $v$,
we consider the inverse demand function $p_v$ to be affine and independent from the commodity sold at other markets.
Specifically, for all $v$, $p_v(\sigma) = 10-\sigma_v$.
We assume that firm $2$ bidirectionally communicates with firms $1$ and $3$, while $1$ and $3$ do not communicate, according to the communication matrix
$$T=\left[\begin{array}{ccc}2/3 & 1/3 & 0 \\1/3 & 1/3 & 1/3 \\0 & 1/3 & 2/3\end{array}\right],$$
which is primitive and doubly stochastic, hence satisfies Assumption \ref{ass:primitive_doubly}.

We run Algorithm~\ref{alg:NAG_nu} with $\tau=0.005$, $\nu=10$ and initial conditions all equal to zero\footnote{The values in~\eqref{eq:condition_tau_APA} can be shown to be $\alpha_\nu = 4/(1+\tilde r)^3 = 0.0185$,
$L_\nu = \lambda_\text{max}(H_\text{blkd}^\top [(I_N \otimes D)(T^\nu \otimes I_E) + \text{diag}(T^\nu)\otimes D^\top] H_\text{blkd}) = 9.9124$
and $\| A_\nu \| = 1$;
then~\eqref{eq:condition_tau_APA} reads $\tau < 1.8 \cdot 10^{-4}$.
This is a conservative bound, we verified by simulations that  the algorithm converges also for $\tau = 0.005$.}.
We use $\max\{\|x_{(k)}-x_{(k-1)}\|_\infty,\|\lambda_{(k)}-\lambda_{(k-1)}\|_\infty\}< 10^{-4}$ as stopping criterion.
Figure~\ref{fig:y} (top) reports
the sales $y^i$ for each firm in the $5$ markets at the variational Nash equilibrium of $\Gnu$ (with $\nu=10$),
for the case when there are no coupling constraints (i.e., $K$ is chosen so large that it has no effect).
Figure~\ref{fig:y} (bottom) reports how the equilibrium changes if we introduce the coupling constraint $[\sigma_\infty(x)]_3 \le 1/3$, so that the total capacity of market $3$ is $1$.
In both cases the $\G_\nu$ (with $\nu=10$) variational equilibrium is an $\varepsilon_\nu$-Nash equilibrium for $\Ginf$,
as by the second statement of Theorem~\ref{thm:convergence}.
The value of $\varepsilon_\nu$ can be computed after convergence according to Definition~\ref{def:NE}.
A more descriptive quantity is the relative maximum improvement $\hat \varepsilon_\nu$, defined as\footnote{Note that for any fixed $\bar x_\nu$, $\hat \varepsilon_\nu$ in \eqref{eq:relative_eps} can be computed by solving the $N$ optimization problems $\{ \min_{ x\i \in \mc{Q}\i_\infty(\bar x_\nu^{-i})} J\i(x\i,\frac{1}{N}x\i \! + \! \sum_{j\neq i}\frac1N \bar x_\nu\j)\}_{i=1}^N$. }
\begin{equation}
\hat \varepsilon_\nu \defeq \!\!\!\!\!\! \maxx{i, x\i \in \mc{Q}_{\infty}\i(\bar x_\nu^{-i})}{\!\!\!\!\!\!\!\! \frac{\! J\i(\bar x\i_\nu,\sigma_\infty(\bar x_\nu))-J\i(x\i,\frac{1}{N}x\i \! + \! \sum_{j\neq i}\frac{1}{N} \bar x_\nu\j) \! }{J\i(\bar x\i_\nu,\sigma_{\infty}(\bar x_\nu))}},
\label{eq:relative_eps}
\end{equation}
which equals $0.0014$ (for the game without coupling constraint) and $0.0035$ (for the game  with coupling constraint).
%
\begin{figure} [h!]
\begin{center}
\newlength\figureheight 
\newlength\figurewidth 
\setlength\figureheight{1.5cm} 
\setlength\figurewidth{8cm} 
%
%
\definecolor{mycolor1}{rgb}{0.13725,0.36078,0.43922}%
\definecolor{mycolor2}{rgb}{0.09020,0.58039,0.69020}%
\definecolor{mycolor3}{rgb}{0.56078,0.72941,0.00000}%
\begin{tikzpicture}

\begin{axis}[%
width=\figurewidth,
height=\figureheight,
at={(1.011in,0.295in)},
scale only axis,
bar shift auto,
xmin=0.5,
xmax=5.5,
xtick={1,2,3,4,5},
xticklabels={,,,,},
ymin=0,
ymax=3.5,
axis background/.style={fill=white},
xmajorgrids,
font=\footnotesize,
ymajorgrids,
]
\addplot[ybar, bar width=0.178, fill=mycolor1, draw=black, area legend] table[row sep=crcr] {%
1	3.29425274158038\\
2	1.49205180615129\\
3	0.114078278385911\\
4	0.0995819408862545\\
5	3.52156456322455e-05\\
};
\addplot[forget plot, color=white!15!black] table[row sep=crcr] {%
0.5	0\\
5.5	0\\
};

\addplot[ybar, bar width=0.178, fill=mycolor2, draw=black, area legend] table[row sep=crcr] {%
1	0.0442578236052739\\
2	1.02552970067784\\
3	2.8604249348974\\
4	1.02552970067784\\
5	0.044257823605274\\
};
\addplot[forget plot, color=white!15!black] table[row sep=crcr] {%
0.5	0\\
5.5	0\\
};

\addplot[ybar, bar width=0.178, fill=mycolor3, draw=black, area legend] table[row sep=crcr] {%
1	3.52156456322455e-05\\
2	0.0995819408862544\\
3	0.114078278385911\\
4	1.49205180615129\\
5	3.29425274158038\\
};
\addplot[forget plot, color=white!15!black] table[row sep=crcr] {%
0.5	0\\
5.5	0\\
};

\end{axis}
\end{tikzpicture}%
%
%
\definecolor{mycolor1}{rgb}{0.13725,0.36078,0.43922}%
\definecolor{mycolor2}{rgb}{0.09020,0.58039,0.69020}%
\definecolor{mycolor3}{rgb}{0.56078,0.72941,0.00000}%
\begin{tikzpicture}

\begin{axis}[%
width=\figurewidth,
height=\figureheight,
at={(1.011in,0.295in)},
scale only axis,
bar shift auto,
xmin=0.5,
xmax=5.5,
xtick={1,2,3,4,5},
xticklabels={{market 1},{market 2},{market 3},{market 4},{market 5}},
ymin=0,
font=\footnotesize,
ymax=3.5,
axis background/.style={fill=white},
xmajorgrids,
ymajorgrids,
legend style={at={(0.335,0.43)}, anchor=south west, legend cell align=left, align=left, draw=white!15!black, font=\footnotesize},
]
\addplot[ybar, bar width=0.178, fill=mycolor1, draw=black, area legend] table[row sep=crcr] {%
1	3.44356081028404\\
2	1.41722429616991\\
3	3.61474362928216e-06\\
4	0.137946778993244\\
5	0.00126433281767803\\
};
\addplot[forget plot, color=white!15!black] table[row sep=crcr] {%
0.5	0\\
5.5	0\\
};

\addplot[ybar, bar width=0.178, fill=mycolor2, draw=black, area legend] table[row sep=crcr] {%
1	0.248983568785253\\
2	1.73600789535927\\
3	1.03001689461111\\
4	1.73600789535927\\
5	0.248983568785253\\
};
\addplot[forget plot, color=white!15!black] table[row sep=crcr] {%
0.5	0\\
5.5	0\\
};

\addplot[ybar, bar width=0.178, fill=mycolor3, draw=black, area legend] table[row sep=crcr] {%
1	0.00126433281767803\\
2	0.137946778993244\\
3	3.61474362928151e-06\\
4	1.41722429616991\\
5	3.44356081028404\\
};
\addplot[forget plot, color=white!15!black] table[row sep=crcr] {%
0.5	0\\
5.5	0\\
};

\end{axis}
\end{tikzpicture}%
\caption{Production per firm and market without coupling constraints (top)
and with the coupling constraint $[\sigma\i(x)]_3 \le 1/3$ (bottom).
In both cases the total production at the equilibrium is $r\i=\bar r^i= 5$ for all $i$.
Both simulations are obtained with $\nu=10$ communications.
}
\label{fig:y}
\end{center}
\end{figure}
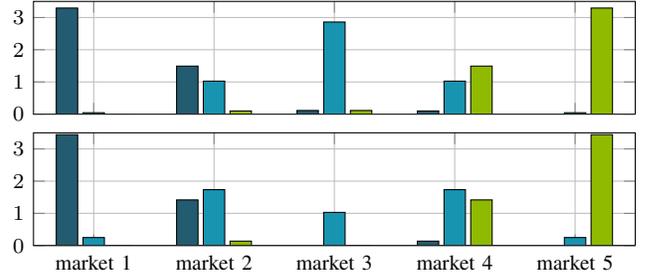

\subsubsection{Large network} \label{sec:large}
As a more realistic example we consider the transportation network illustrated in Figure \ref{fig:net_large} which consists of
$V=43$ possible markets and $E=51$ (bidirectional) edges connecting them.
The network is taken from the data set~\cite{OldenburgDataset}, which provides also the Cartesian coordinates of the vertexes.
We consider $5$ firms that differ only for their locations $\ell\i$,
which are $\ell_1 = 37$, $\ell_2 = 20$, $\ell_3 = 11$, $\ell_4 = 6$, $\ell_5 = 35$
as indicated in Figure~\ref{fig:net_large}.
Each firm has a production capacity of $\bar r\i = 10$, while we consider a  capacity of $1.5$ for each market (i.e. $K=1.5/5$).
The production cost is as in~\eqref{eq:small_network_a_p}
while the transportation cost for edge $e$ is the same for each firm $i$ and is 
\begin{equation*}
c\i_e(t\i_e) = c_e(t\i_e)= \rho_e \left(t\i_e-\left(1-\frac{1}{1+t\i_e}\right)\right),
\end{equation*}
where $\rho_e\in]0,1]$ is the normalized\footnote{The normalized length of a road is defined as the absolute length divided by the maximum length road in the network.} length of road $e$.
The inverse demand function $p$ is affine, i.e. $p(\sigma) = 10 \cdot \ones[45] - D \sigma$ and it encodes intra-market competition via the matrix $D$ whose component in position $(h,k)$ is $[D]_{h,k}=1$ if $h=k$,
$[D]_{h,k}=0.3(1-\rho_e)$, if there is a road $e=(h,k)$ between markets $h$ and $k$, while $[D]_{h,k}=0$ otherwise.
In words, the price $p_v$ at market $v$ not only decreases when more commodity is sold at $v$,
but also when more commodity is sold at the neighboring markets, with physically close markets being more influential.
We verified numerically that $D \succeq 0$.
We use the communication matrix $T$ that corresponds to a symmetric ring, i.e.,
\begin{equation*}
T = \begin{bmatrix}
0 \!\! & \!\! 0.5 \!\! & \!\! 0 \!\! & \!\! 0 \!\! & \!\! 0.5 \\[-0.05cm]
0.5 \!\! & \!\! 0 \!\! & \!\! 0.5 \!\! & \!\! 0 \!\! & \!\! 0 \\[-0.05cm]
0 \!\! & \!\! 0.5 \!\! & \!\! 0 \!\! & \!\! 0.5 \!\! & \!\! 0 \\[-0.05cm]
0 \!\! & \!\! 0 \!\! & \!\! 0.5 \!\! & \!\! 0 \!\! & \!\! 0.5 \\[-0.05cm]
0.5 \!\! & \!\! 0 \!\! & \!\! 0 \!\! & \!\! 0.5 \!\! & \!\! 0
\end{bmatrix},
\end{equation*}
which satisfies Assumption~\ref{ass:primitive_doubly}.
We run Algorithm~\ref{alg:NAG_nu} with $\tau=0.05$, initial conditions all equal to zero and different values of $\nu$\footnote{The values in~\eqref{eq:condition_tau_APA} can be shown to be
$\alpha_\nu = 4/(1+\tilde r)^3 = 0.003$,
$L_\nu = \lambda_\text{max}(H_\text{blkd}^\top [(I_N \otimes D)(T^\nu \otimes I_E) + \text{diag}(T^\nu)\otimes D^\top] H_\text{blkd}) = 12.89$
and $\| A_\nu \| = 1$;
then~\eqref{eq:condition_tau_APA} reads $\tau < 1.8 \cdot 10^{-5}$. This is a conservative bound, we verified by simulations that  the algorithm converges also for $\tau = 0.05$.}.
As for the small network, we use $\max\{\|x_{(k)}-x_{(k-1)}\|_\infty,\|\lambda_{(k)}-\lambda_{(k-1)}\|_\infty\}< 10^{-4}$ as stopping criterion.
We consider even values of $\nu$ between $2$ and 20.
For each $\nu$ we run Algorithm~\ref{alg:NAG_nu} and find the variational Nash equilibrium of $\Gnu$,
which is an $\varepsilon_\nu$-Nash equilibrium for $\Ginf$,
as by the second statement of Theorem~\ref{thm:convergence}.
After convergence $\varepsilon_\nu$ can be computed according to Definition~\ref{def:NE} and $\hat \varepsilon_\nu$ according to~\eqref{eq:relative_eps}.
Figure~\ref{fig:nu} (top) reports the value of $\hat \varepsilon_\nu$ as function of $\nu$,
thus numerically verifying the second statement of Theorem~\ref{thm:convergence}.
Figure~\ref{fig:nu} (bottom) reports the value of $\| \bar x_\nu - \bar x_\infty \|_2$ as function of $\nu$,
thus numerically verifying the first statement~\eqref{eq:xnu_tends_xinf} of Theorem~\ref{thm:convergence}.
In Figure \ref{fig:net_large} we illustrate the variational Nash equilibrium of $\Ginf$
obtained by setting $\nu=1$ and $T=\frac{1}{N}\ones[N]\ones[N]^\top$.
We note that each firm is the only seller at the location where it produces,
and more in general firms tend to sell close to their production location, as expected.

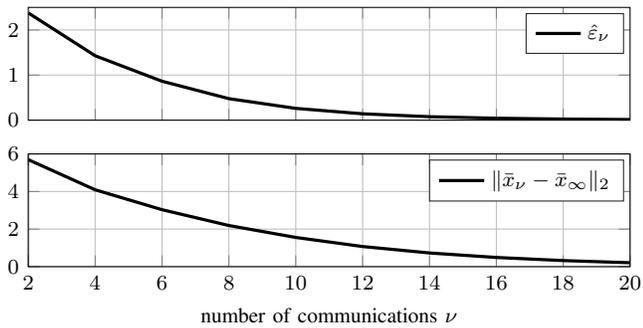
\begin{figure}[h]
\begingroup
\centering
\setlength\figureheight{1.5cm} 
\setlength\figurewidth{8cm} 
%
%
\definecolor{mycolor1}{rgb}{0.00000,0.44700,0.74100}%
\begin{tikzpicture}

\begin{axis}[%
width=\figurewidth,
height=\figureheight,
at={(1.011in,0.642in)},
scale only axis,
xmin=2,
xmax=20,
xlabel style={font=\color{white!15!black}},
xtick={4,6,8,10,12,14,16,18},
xticklabels={},
ymin=0,
ymax=2.5,
ylabel style={font=\color{white!15!black}},
axis background/.style={fill=white},
xmajorgrids,
ymajorgrids,
font=\footnotesize,
legend style={at={(0.99,0.95)}, legend cell align=left, align=left, draw=white!15!black}
]
\addplot [color=black, line width=1.2pt]
  table[row sep=crcr]{%
2	2.37825466840761\\
4	1.42839666384774\\
6	0.863306632010472\\
8	0.477179216315378\\
10	0.261496779023576\\
12	0.141008500899686\\
14	0.0768126086920488\\
16	0.042892832233248\\
18	0.0246255797134968\\
20	0.0140885604869569\\
};
\addlegendentry{$\hat \varepsilon_\nu$}

\end{axis}
\end{tikzpicture}
%
%
\begin{tikzpicture}

\begin{axis}[%
width=\figurewidth,
height=\figureheight,
at={(1.011in,0.642in)},
scale only axis,
xmin=2,
xmax=20,
xlabel style={font=\color{white!15!black},font=\footnotesize,},
xlabel={$\text{number of communications } \nu$},
xtick={2,4,6,8,10,12,14,16,18,20},
ymin=0,
ymax=6,
ylabel style={font=\color{white!15!black}},
axis background/.style={fill=white},
xmajorgrids,
ymajorgrids,
font=\footnotesize,
legend style={at={(0.99,0.95)}, legend cell align=left, align=left, draw=white!15!black}
]
\addplot [color=black, line width=1.2pt]
  table[row sep=crcr]{%
2	5.69373801994183\\
4	4.08972385359315\\
6	3.03158336306071\\
8	2.18972654373934\\
10	1.55507808149399\\
12	1.07202528420688\\
14	0.729371482391083\\
16	0.48955103094438\\
18	0.324540624122954\\
20	0.208989648690819\\
};
\addlegendentry{$\| \bar x_\nu - \bar x_\infty \|_2$}

\end{axis};
\end{tikzpicture}%
\endgroup
\caption{Relative maximum cost improvement $\hat \varepsilon_\nu$ given in~\eqref{eq:relative_eps} as a function of $\nu$ for the large network.
}
\label{fig:nu}
\end{figure}

\begin{figure}[h]
\begingroup
\centering
\setlength\figureheight{2cm} 
\setlength\figurewidth{8.3cm} 
\hspace{0.4cm}\includegraphics[width=0.45\textwidth]{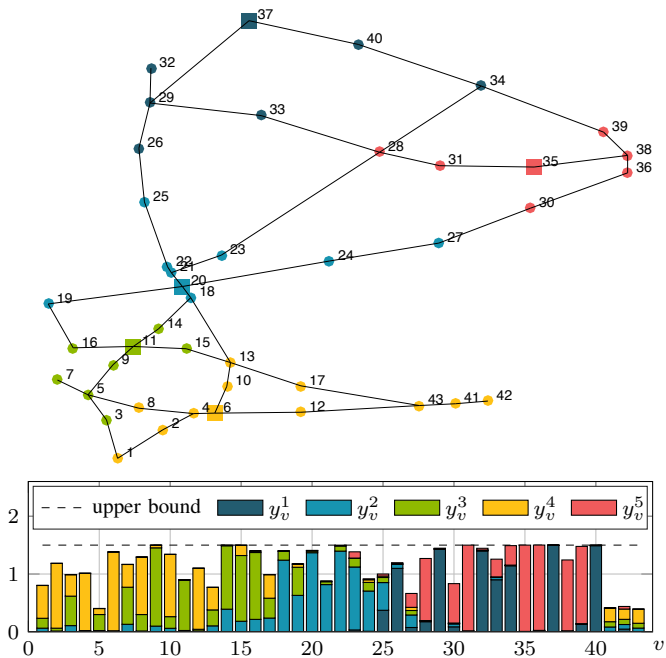}\\[0.2cm]
%
%
\definecolor{mycolor1}{rgb}{0.13725,0.36078,0.43922}%
\definecolor{mycolor2}{rgb}{0.09020,0.58039,0.69020}%
\definecolor{mycolor3}{rgb}{0.56078,0.72941,0.00000}%
\definecolor{mycolor4}{rgb}{0.99608,0.75686,0.08235}%
\definecolor{mycolor5}{rgb}{0.94118,0.35686,0.36078}%
\begin{tikzpicture}

\begin{axis}[%
width=\figurewidth,
height=\figureheight,
at={(1.011in,0.721in)},
scale only axis,
bar width=0.8,
xmin=0,
xmax=44,
ymin=0,
ymax=2.6,
ylabel style={font=\color{white!15!black}},
ylabel style={rotate=-90},
xlabel={$v$},
every axis x label/.style={
    at={(ticklabel* cs:1.01)},
    anchor=near ticklabel,
},
axis background/.style={fill=white},
font=\footnotesize,
xmajorgrids,
ymajorgrids,
legend style={legend columns=-1, at={(0.007,0.68)}, anchor=south west, legend cell align=left, align=left, draw=white!15!black, font=\footnotesize},
]

\addplot [color=black, line width=0.38pt, dashed]
  table[row sep=crcr]{%
1	1.5 \\
2	1.5 \\
3	1.5 \\
4	1.5 \\
5	1.5 \\
6	1.5 \\
7	1.5 \\
8	1.5 \\
9	1.5 \\
10	1.5 \\
11	1.5 \\
12	1.5 \\
13	1.5 \\
14	1.5 \\
15	1.5 \\
16	1.5 \\
17	1.5 \\
18	1.5 \\
19	1.5 \\
20	1.5 \\
21	1.5 \\
22	1.5 \\
23	1.5 \\
24	1.5 \\
25	1.5 \\
26	1.5 \\
27	1.5 \\
28	1.5 \\
29	1.5 \\
30	1.5 \\
31	1.5 \\
32	1.5 \\
33	1.5 \\
34	1.5 \\
35	1.5 \\
36	1.5 \\
37	1.5 \\
38	1.5 \\
39	1.5 \\
40	1.5 \\
41	1.5 \\
42	1.5 \\
43	1.5 \\
};
\addlegendentry{upper bound \,};

\addplot[ybar stacked, fill=mycolor1, draw=black, area legend] table[row sep=crcr] {%
1	0.00296838996673355\\
2	0.00272066052544265\\
3	0.00301753048547747\\
4	0.00252553149211568\\
5	0.00282506538329901\\
6	0.0025417301448517\\
7	0.00303573334327925\\
8	0.00275308438934816\\
9	0.00272051924579048\\
10	0.00287805161972066\\
11	0.00261916030848974\\
12	0.00332931421851783\\
13	0.00305999461177792\\
14	0.00282156694952471\\
15	0.00289128762281998\\
16	0.00322352041202714\\
17	0.00386881272794273\\
18	0.00331179336950194\\
19	0.00399990782113004\\
20	0.00400603988232238\\
21	0.00606322130260498\\
22	0.00679135250001976\\
23	0.0330615647753374\\
24	0.010003908866891\\
25	0.373222674196518\\
26	1.09803182183648\\
27	0.0752521231244626\\
28	0.175135217410455\\
29	1.42639903928355\\
30	0.0865941037985999\\
31	0.00798608272995371\\
32	1.3982359690942\\
33	0.900698275200258\\
34	1.14231208178992\\
35	0.00242470042314227\\
36	0.0116162903296386\\
37	1.49524501727814\\
38	0.0080584984981005\\
39	0.131104853449724\\
40	1.49061173983537\\
41	0.00797815902660749\\
42	0.0474946363043118\\
43	0.00451031927195992\\
};
\addlegendentry{$y^1_v \;\;\!$};
\addplot[forget plot, color=white!15!black] table[row sep=crcr] {%
0	0\\
45	0\\
};
\addplot[ybar stacked, fill=mycolor2, draw=black, area legend] table[row sep=crcr] {%
1	0.060726827074943\\
2	0.011879701665274\\
3	0.104648491198276\\
4	0.00768368053556144\\
5	0.0192289953225505\\
6	0.00699904678463413\\
7	0.127692664873805\\
8	0.0126364803304252\\
9	0.0965495754203081\\
10	0.0585376920969138\\
11	0.0210863325112013\\
12	0.0127812928999279\\
13	0.100463760836057\\
14	0.389352092270931\\
15	0.180449273801115\\
16	0.21253308340232\\
17	0.234387667048454\\
18	1.23640001335092\\
19	0.626735167371896\\
20	1.3646150013885\\
21	0.811079619503209\\
22	1.38938079180193\\
23	1.08844694887495\\
24	0.695634625009926\\
25	0.478819415581024\\
26	0.0681462871695456\\
27	0.216416795959673\\
28	0.0124594540156075\\
29	0.00649244044192526\\
30	0.0495868452977229\\
31	0.0046002978795681\\
32	0.00956305940388625\\
33	0.0418313774412897\\
34	0.00569018826742583\\
35	0.00202848104108673\\
36	0.00635181221394567\\
37	0.00119110299802443\\
38	0.00423696373020194\\
39	0.00475959499832154\\
40	0.00235013211147471\\
41	0.0759375318626747\\
42	0.0792721251927081\\
43	0.060288863252403\\
};
\addlegendentry{$y^2_v \;\;\!$};
\addplot[forget plot, color=white!15!black] table[row sep=crcr] {%
0	0\\
45	0\\
};
\addplot[ybar stacked, fill=mycolor3, draw=black, area legend] table[row sep=crcr] {%
1	0.171027202433406\\
2	0.0489688288465771\\
3	0.510071706798914\\
4	0.0120081880484357\\
5	0.277121484880068\\
6	0.0094983785031716\\
7	0.641353953982212\\
8	0.283666594808447\\
9	1.35376999451956\\
10	0.20118206712854\\
11	0.86837038404007\\
12	0.0281889324276827\\
13	0.275838537504501\\
14	1.0961964970205\\
15	1.13653884425222\\
16	1.15606516704631\\
17	0.343879698718563\\
18	0.151839336273415\\
19	0.486371182919066\\
20	0.0291992779066407\\
21	0.049513676992865\\
22	0.0882515941775989\\
23	0.148443020509218\\
24	0.149321378354831\\
25	0.087781467596395\\
26	0.00661417260026264\\
27	0.0774222790709668\\
28	0.00426616810828266\\
29	0.00347470042322792\\
30	0.00887530629908401\\
31	0.00295712896850128\\
32	0.00414549180148418\\
33	0.00452584129336795\\
34	0.00343888233892502\\
35	0.00169075777043915\\
36	0.0040360365790934\\
37	0.00112372070198537\\
38	0.00308648573374495\\
39	0.00329263769012039\\
40	0.00187576617260611\\
41	0.0913695946524459\\
42	0.0895786915073126\\
43	0.0837108909987597\\
};
\addlegendentry{$y^3_v \;\;\!$};
\addplot[forget plot, color=white!15!black] table[row sep=crcr] {%
0	0\\
45	0\\
};
\addplot[ybar stacked, fill=mycolor4, draw=black, area legend] table[row sep=crcr] {%
1	0.566899274781912\\
2	1.11970309626613\\
3	0.369992852414854\\
4	0.989766412948579\\
5	0.103125747887453\\
6	1.36179595340488\\
7	0.392648042401582\\
8	0.992790289105328\\
9	0.0445716509512533\\
10	1.07607172973479\\
11	0.00961824583729945\\
12	1.0544188947715\\
13	0.38890371527968\\
14	0.00807359593285997\\
15	0.176729036542483\\
16	0.0258673845945769\\
17	0.405589219574975\\
18	0.00756807512174034\\
19	0.0547034351627665\\
20	0.00734698377515753\\
21	0.00667672270927311\\
22	0.00812138371647857\\
23	0.00601697693762141\\
24	0.0454439498842883\\
25	0.0244204914394932\\
26	0.0051702682951211\\
27	0.0544871625612491\\
28	0.00329621647341055\\
29	0.00299296611368503\\
30	0.00746762396207604\\
31	0.00255868419932481\\
32	0.00349162391953227\\
33	0.00359932430046483\\
34	0.00293004267151511\\
35	0.00159731771394957\\
36	0.00359039551914438\\
37	0.00109987685809617\\
38	0.00277104372448127\\
39	0.00289745976132857\\
40	0.00174413684525003\\
41	0.233294792330414\\
42	0.177977655674843\\
43	0.242123721061202\\
};
\addlegendentry{$y^4_v \;\;\!$};
\addplot[forget plot, color=white!15!black] table[row sep=crcr] {%
0	0\\
45	0\\
};
\addplot[ybar stacked, fill=mycolor5, draw=black, area legend] table[row sep=crcr] {%
1	0.00294361128865224\\
2	0.00270161378090584\\
3	0.00299102646223127\\
4	0.0025104467031488\\
5	0.00280327802540092\\
6	0.00252593535926707\\
7	0.00300863575971764\\
8	0.00273359093095324\\
9	0.00270427410169931\\
10	0.002862373995199\\
11	0.00261106822114279\\
12	0.0033065869855285\\
13	0.00304795211406055\\
14	0.00281545286573367\\
15	0.0028802205807229\\
16	0.00322358884312531\\
17	0.00384689803022477\\
18	0.00330627058098998\\
19	0.0040272792407145\\
20	0.00406075306700465\\
21	0.00593259057203949\\
22	0.00672364578610244\\
23	0.107939835955017\\
24	0.0170800813067366\\
25	0.0360506353360346\\
26	0.01796406481144\\
27	0.240393125341483\\
28	1.07369296717045\\
29	0.00743972789487135\\
30	0.681411687978805\\
31	1.48156980091893\\
32	0.0298188950104647\\
33	0.306359311269809\\
34	0.33679211992386\\
35	1.49221031949628\\
36	1.4754164073055\\
37	0.00125707924655332\\
38	1.22508050776359\\
39	1.33564089853423\\
40	0.00329005967673801\\
41	0.00768498601210643\\
42	0.0448074320896749\\
43	0.00448557537566883\\
};
\addlegendentry{$y^5_v \!\!\!$};
\addplot[forget plot, color=white!15!black] table[row sep=crcr] {%
0	0\\
45	0\\
};

\end{axis}
\end{tikzpicture}%
\endgroup
\caption{Variational Nash equilibrium of $\Ginf$ for the large network, computed by  Algorithm~1 for $\nu=1$ and $T=\frac1N\ones[N]\ones[N]^\top$.
In the top plot, each market takes the color of the firm that sells the most commodity in that market.
The production locations of the firms are denoted by squares.
The bottom plot reports $y\i_v$ for each agent $i$ and market $v$.
}
\label{fig:net_large}
\end{figure}

\section{Conclusions and future work}
\label{sec:conclusion} 

In this paper we derived an algorithm to steer the strategies of competitive agents
to an almost Nash equilibrium for any average aggregative game
i) by using only distributed communications and
ii) in the presence of affine coupling constraints expressed on the average population strategy.
The agents synchronously update their strategies and communicate a fixed number $\nu$ of times with their in-neighbors and out-neighbors between two strategy updates.
Our findings are based on a novel convergence result for parametric VI, which is of independent interest.

In the following subsections, we briefly comment  on some immediate generalizations of the results above, that were omitted to keep the exposition simple, and on possible future works.

\subsection{Immediate generalizations}
\subsubsection{Distributed  algorithm for network aggregative games}
 Algorithm~\ref{alg:NAG_nu} is used here to find an $\varepsilon$-Nash of $\Ginf$. However,
 if  we assume that  $F_\nu$ in~\eqref{eq:game_nu_nag:F} is strongly monotone when $\nu=1$,
then Algorithm~\ref{alg:NAG_nu} can be used to find
the variational Nash equilibrium of any network aggregative game,
as defined in \cite{parise2015network} with network $T$, by setting  $\nu=1$.  Algorithm~\ref{alg:NAG_nu} 
thus constitutes an alternative to the distributed algorithms derived for generic games in \cite{yi2017distributed,frazzoli}. 
Moreover, note that if we set $T= \frac{1}{N}\ones[N]\ones[N]^\top$ and $\nu=1$ then Algorithm~\ref{alg:NAG_nu} achieves
the variational Nash equilibrium of $\Ginf$, but communications among all agents are required.

\subsubsection{Weighted average}
The above results  can be immediately generalized to aggregative games that depend on a weighted average  of the population strategies
$\sigma_\infty(x) = \sum_{i=1}^N w_i x^i$,
for some $w_i > 0$ instead of the average
$\sigma_\infty(x) = \frac{1}{N}\sum_{i=1}^N x^i$ used above.
We can impose $\sum_i w_i=1$ without loss of generality.
Then Assumption~\ref{ass:primitive_doubly} can be modified to require
$T$ to be a primitive matrix with $w > 0$ as  left eigenvector  relative to the eigenvalue $1$ (normalized such that $w^\top \ones[N] =1$).

\subsubsection{Local strategy sets of different dimensions}
\label{subsec:diff_dim}

In the previous sections we have assumed that the strategy set of each agent has $n$ components,
i.e, $\X\i \subset \R^n$. Following the same arguments as in Section \ref{sec:ext}), our results can be generalized to the case
in which each agent  features a strategy set of different dimension,
i.e, $\X\i \subset \R^{n_i}$, as in~\cite{jensen2010aggregative,yi2017distributed} and the
aggregate strategy  is
$\sigma^i(x)=\frac{1}{N}\sum_{j=1}^N [H\j x^j+ h\j]\in\R^n,$
for some matrices $H\j\in\R^{n\times n_j}$ and vectors $h\j\in\R^{n}$.

\subsubsection{Wardrop instead of Nash equilibrium}
The focus of this paper is on Nash equilibrium,
but the setup and the results extend to another important concept in game theory,
namely the Wardrop equilibrium as defined in~\cite[Definition 2]{gentile2017nash},
often used in transportation with the name of traffic equilibrium~\cite{dafermos1980traffic} and
economics with the name of competitive equilibrium~\cite{dafermos1987oligopolistic}.
If in the primal update of Algorithm~\ref{alg:NAG_nu} we use
$F^i_{\nu,(k)} \leftarrow \nabla_{\!\! z_1} \!  J^i(x^i_{(k)},\sigma^i_{\nu,(k)})$ (neglecting thus the second summand)
then Algorithm~\ref{alg:NAG_nu} converges to a Wardrop equilibrium.

\subsection{Future directions}

There are a number of research directions that can be investigated as future work.
Firstly, in our algorithm the agents communicate over a fixed network $T$.
It would be valuable to extend our results to the case of time-varying or state-varying communication networks.
It may also be interesting to relax the assumption of synchronous updates.
Moreover, the result on convergence of  parametric VI could be extended to the case of continuous parameters
and to establish Lipschitz continuity of the solution.
The latter would be valuable to derive bounds on the number of communications $\nu$ needed to approximate the Nash equilibrium with any desired precision.
To the best of our knowledge,
the only results guaranteeing Lipschitz continuity require the linear independence constraint qualification.
%
Finally, as for all distributed communication schemes, it would be important to assess the algorithm performance in the presence of delay and packet loss.

\appendix
\renewcommand{\thesubsection}{\Alph{subsection}}

\subsection{Definitions}

\label{AppendixA}

\begin{definition}[{Kuratowski convergence of sets~\cite[(2.1)]{salinetti1979convergence}}]
\label{def:convergence_of_sets}
A sequence of sets $(S_\nu \subseteq \R^n)_{\nu = 1}^\infty$ is said to Kuratowski converge to a set $S \subseteq \mathbb{R}^n$, in symbols $S_\nu\rightarrow S$, if
\vspace*{-0.2cm}
\begin{equation}
\lim\sup S_\nu \subseteq S \subseteq \lim\inf S_\nu\\[-0.3cm]
\label{eq:def_convergence_of_sets}
\end{equation}
\vspace*{-0.2cm}
where

\begin{align*}
\lim\inf S_\nu &\defeq\{ x\in \mathbb{R}^n \vert x =\lim_{\nu \to \infty} x_\nu, \, x_\nu \in S_\nu \}, \quad \\
\lim\sup S_\nu &\defeq\{ x\in \mathbb{R}^n \vert x =\lim_{k \to \infty} x_{\nu_k}, \, x_{\nu_k} \in S_{\nu_k} \}, \quad 
\end{align*}
and $\nu_k$ is a subsequence of $\mathbb{N}$.
\hfill $\square$
\end{definition}
\noindent Note that $\lim\inf S_\nu \subseteq \lim\sup S_\nu$ by definition.
Condition~\eqref{eq:def_convergence_of_sets} requires the opposite inclusion to hold.
\begin{definition}[{Hausdorff convergence of sets~\cite[p. 22]{salinetti1979convergence}}]
\label{def:H_convergence_of_sets}
The Hausdorff distance between two non-empty subsets $R$ and $S$ of $\R^n$ is defined as
\begin{equation}
d_H(R,S) \defeq \textup{max}\{\supsup{r \in R} \infinf{\substack{\, \\[0.03cm] s \in S}} \|r-s\|_2, \supsup{\substack{\, \\[-0.025cm] s \in S}} \infinf{\substack{\\[0.035cm] r \in R}} \|r-s\|_2\} \,.
\end{equation}

A sequence of sets $(S_\nu \subseteq \R^n)_{\nu = 1}^\infty$ is said to Hausdorff converge to $S \subseteq \mathbb{R}^n$
if $\lim_{\nu \to \infty} d_H(S_\nu,S) = 0$.
\hfill $\square$
\end{definition}
\vspace*{0.2cm}

We point out that Hausdorff set convergence (Definition~\ref{def:H_convergence_of_sets}) is in general
stronger than Kuratowski set convergence (Definition~\ref{def:convergence_of_sets}) as shown in Theorems 2 and 3 of~\cite{salinetti1979convergence}.

\subsection{A convergence result for parametric variational inequalities}
\label{appendix:B}

The notation used in this section is disjoint from the rest of the paper.
We study the convergence of the solution $\bar x_\t$ of VI$(\mc{Q}_\t,F_\t)$
to the solution $\bar x_{\tt}$ of VI$(\mc{Q}_{\tt},F_{\tt})$
when $\t \to \tt$ and both the set and the operator are affected by the parameter $\t$.
In the literature on convergence of solutions of parametric variational inequalities
it is common to assume that $F_{\tt}$ is strongly monotone
and that $F_\t$ converges uniformly to $F_{\tt}$ as $\t \to \tt$.
Besides that, the literature on the topic can be divided into three classes, based on the assumptions on the sets.

1) The first class of results focuses on sets that do not change,
so that only the operator  is affected by the parameter,
and studies convergence of the solution $\bar x_\t$ of VI$(\mc{Q},F_\t)$
to the solution $\bar x_{\tt}$ of VI$(\mc{Q},F_{\tt})$.
If the set $\mathcal{Q}$ is closed and convex, $F_{\t}$ is Lipschitz in $\theta$ uniformly in $x$ and $F_{\hat \t}$ is strongly monotone,
then the solution is Lipschitz continuous \cite[Theorem 1.14]{nagurney2013network}, \cite[Section 5.3]{facchinei2007finite}.
Strong monotonicity of $F_{\hat \t}$ can be relaxed if the set $\mc{Q}$ is a polyhedron \cite{qiu1989sensitivity}.

2) The second class of results~\cite{tobin1986sensitivity,kyparisis1987sensitivity,dafermos1988sensitivity}
focuses on parametric sets that can be described as $\mc{Q}_\t\defeq\{x\in \R^n \, \vert \, g(x,\t) \le 0\}$ for a suitable parametric function $g(x,\t)$.
Assuming that $g(x,\t)$ converges uniformly in $x$ to $g(x,\tt)$ as $\t \to \tt$
and that at $\bar x_{\tt}$ the linear independence constraint qualification holds,
it can be shown that the parametric solution $\bar x_\t$ is locally Lipschitz continuous around $\tt$.
Such results have been applied to games, as for example in \cite{patriksson2003sensitivity,tobin1990sensitivity}.

3) The third class of results~\cite{dafermos1988sensitivity,mosco1969convergence} is the most general and only assumes that
$\mc{Q}_\t$ converges to $\mc{Q}_{\tt}$ according to the Kuratowski set convergence definition. In this case one can
 prove continuity of $\bar x_\t$ around $\tt$.
We are not aware of results proving local Lipschitz continuity in this case.

Here we do not assume the linear independence constraint qualification,
because this is difficult to guarantee a priori for $\Ginf$.
Instead, we focus on a specific form of the sets and prove  convergence in Kuratowski as well as Hausdorff distance.
We then exploit the results of~\cite{dafermos1988sensitivity}, \cite{mosco1969convergence} to show continuity of the VI solution.
It is important to highlight that we do not consider a continuous parameter $\t$ tending to $\tt$,
but we rather focus on the slightly less general case of a discrete parameter $\nu\in\mathbb{N}$ that tends to infinity,
since this is what is needed in Theorem~\ref{thm:convergence}.
Specifically, we consider sets $\mc{Q}_\nu$ and $\mc{Q}_{\infty}$ that take the form
\begin{align*}
\mc{Q}_\nu &\defeq \{x\in\mathcal{X} \subset \R^n \vert A_\nu x \le b_\nu \}, \\ 
\mc{Q}_{\infty} &\defeq \{x\in\mathcal{X} \subset \R^n \vert A_{\infty} x \le b_{\infty} \},
\end{align*}
with $\mc{X}$ convex and compact, $A_\nu,A_{\infty} \in \R^{m \times n}$, $b_\nu,b_{\infty} \in \R^{m}$,
and consider operators $F_\nu: \X \to \R^n$, $F_{\infty}: \X \to \R^n$.
Our result can also be interpreted as an extension of~\cite{Best1995,boot1963sensitivity} on parametric quadratic programs, and of~\cite{qiu1989sensitivity,yen1995lipschitz}
on parametric variational inequalities over polyhedral sets,
in that we consider sets that are obtained as the intersection of a parametric polyhedron with a generic convex and compact set $\mathcal{X}$.
Finally, we note that the parameter appears also in the matrix $A_\nu$ and not only in $b_\nu$,
as in~\cite{yen1995lipschitz,hadigheh2007sensitivity,bemporad2002explicit}.
The following assumption summarizes the specifics of our setup.

\begin{assumption} Suppose that \\
a)
The set $\mathcal{X} \subset \mathbb{R}^{n}$ is convex, compact and has non-empty interior.
Moreover, $\lim_{\nu \to \infty} A_\nu= A_{\infty}$, $\lim_{\nu \to \infty} b_\nu= b_{\infty}$ and 
\begin{equation}
\{A^\top_\infty s= 0 , b^\top_\infty s\le 0    ,  s\ge 0  \}\quad \Rightarrow  \quad s=0.
\label{eq:condition_R2}
\end{equation}
\label{ass:mega_param_VI}
b) The operator $F_{\infty}: \mc{X} \to \R^n$ is continuous, strongly monotone and there exists $\nu_{\textup{SMON}} > 0$ such that
$F_\nu: \mc{X} \to \R^n$ is continuous, strongly monotone for each $\nu > \nu_\textup{SMON}$.
For each $x\in\mathcal{X}$,  $\lim_{\nu \to \infty} F_\nu(x) = F_{\infty}(x)$.
\hfill $\square$
\end{assumption}
We note here that  \eqref{eq:condition_R2} is less restrictive than the assumption that $A_\infty$ has full row rank (i.e., $A^\top_\infty s= 0  \Rightarrow   s=0$),
which is usually imposed to guarantee the linear independence constraint qualification a priori, see for example \cite[Remark 2.2]{dafermos1988sensitivity}.
 
The next Lemma~\ref{lemma:hausdorff} proves Kuratowski and Hausdorff convergence of the sets.
Kuratowski convergence of $\mc{Q}_\nu\rightarrow \mc{Q}_\infty$  is a key part of the proof of Theorem~\ref{thm:convergenceVI}. The fact that $d_H(\mc{Q}_\nu,\mc{Q}_\infty)\rightarrow 0$ is instead used in Lemma~\ref{lemma:hausdorff_proj} of Appendix~C,
which is needed for the proof of Theorem~\ref{thm:convergence}.

%

\begin{lemma}
\label{lemma:hausdorff}
If Assumption \ref{ass:mega_param_VI}a holds then as $\nu \to \infty$ we have
\begin{equation*}
\mc{Q}_\nu \rightarrow \mc{Q}_{\infty} \quad \mbox{ and } \quad d_H(\mc{Q}_\nu,\mc{Q}_{\infty})\rightarrow 0.
\tag*{$\square$}
\end{equation*}
\end{lemma}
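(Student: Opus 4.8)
The plan is to establish Kuratowski convergence $\mc{Q}_\nu \to \mc{Q}_\infty$ by proving the two inclusions in \eqref{eq:def_convergence_of_sets} separately, and then to upgrade this to Hausdorff convergence using the compactness of $\mathcal{X}$. The outer inclusion $\limsup \mc{Q}_\nu \subseteq \mc{Q}_\infty$ is the routine direction and uses neither the interior of $\mathcal{X}$ nor \eqref{eq:condition_R2}: given $x = \lim_k x_{\nu_k}$ with $x_{\nu_k} \in \mc{Q}_{\nu_k}$, closedness of $\mathcal{X}$ gives $x \in \mathcal{X}$, while passing to the limit in $A_{\nu_k} x_{\nu_k} \le b_{\nu_k}$ and invoking $A_\nu \to A_\infty$, $b_\nu \to b_\infty$ from Assumption~\ref{ass:mega_param_VI}a yields $A_\infty x \le b_\infty$, hence $x \in \mc{Q}_\infty$.

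The inner inclusion $\mc{Q}_\infty \subseteq \liminf \mc{Q}_\nu$ is where \eqref{eq:condition_R2} enters and is the crux of the argument. The idea is to produce a strictly feasible point: I would invoke a theorem of the alternative of Gordan--Motzkin type to recast \eqref{eq:condition_R2} as the solvability of the strict system $A_\infty \bar x < b_\infty$, and then use the non-empty interior of $\mathcal{X}$ to argue that such a $\bar x$ can be taken inside $\mathcal{X}$. For an arbitrary $x \in \mc{Q}_\infty$, the segment $x_t \defeq (1-t)\,x + t\,\bar x$ then lies in $\mathcal{X}$ by convexity and satisfies $A_\infty x_t \le (1-t)b_\infty + t\,A_\infty \bar x < b_\infty$ for every $t \in (0,1]$. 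Strict feasibility is \emph{robust}: since $A_\nu \to A_\infty$ and $b_\nu \to b_\infty$, for each fixed $t$ there is a threshold $\nu(t)$ with $A_\nu x_t \le b_\nu$, i.e.\ $x_t \in \mc{Q}_\nu$, for all $\nu > \nu(t)$. Letting $t \downarrow 0$ through a diagonal argument produces a sequence $x_\nu \in \mc{Q}_\nu$ with $x_\nu \to x$, establishing $x \in \liminf \mc{Q}_\nu$. Combined with the outer inclusion, this gives $\mc{Q}_\nu \to \mc{Q}_\infty$.

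Finally I would upgrade to the Hausdorff statement. Since every $\mc{Q}_\nu$ and $\mc{Q}_\infty$ is a closed, hence compact, subset of the compact set $\mathcal{X}$, it suffices to show that both one-sided distances vanish: $\supsup{r \in \mc{Q}_\infty}\textup{dist}(r,\mc{Q}_\nu) \to 0$ follows from a \emph{uniform} version of the inner inclusion, obtained by covering the compact set $\mc{Q}_\infty$ and making the shrinking parameter $t$ uniform over $x$, while $\supsup{s \in \mc{Q}_\nu}\textup{dist}(s,\mc{Q}_\infty) \to 0$ follows from the outer inclusion together with boundedness of all the sets inside $\mathcal{X}$. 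I expect the delicate point throughout to be the inner inclusion: one must secure a strict Slater point genuinely \emph{within} $\mathcal{X}$ (not merely in $\R^n$) and make the approximation $x_t \to x$ uniform over $x \in \mc{Q}_\infty$ for the Hausdorff estimate; this is precisely where \eqref{eq:condition_R2} and the non-empty interior of $\mathcal{X}$ are indispensable, since without them a boundary-touching feasible set could collapse under perturbation and break lower semicontinuity.
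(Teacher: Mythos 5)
Your outer inclusion is correct, your Gordan--Motzkin reading of \eqref{eq:condition_R2} is accurate (it is indeed equivalent to solvability of the strict system $A_\infty \bar x < b_\infty$ in $\R^n$), and, \emph{granted} a strictly feasible point $\bar x \in \mathcal{X}$, your segment-plus-diagonal argument and the compactness upgrade from Kuratowski to Hausdorff convergence are both fine. The genuine gap is the single unargued claim that the Slater point ``can be taken inside $\mathcal{X}$.'' This does not follow from Assumption~\ref{ass:mega_param_VI}a, and it can fail: take $n=m=1$, $\mathcal{X}=[0,1]$, $A_\nu=A_\infty=1$, $b_\nu=b_\infty=0$. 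Then \eqref{eq:condition_R2} holds trivially (since $A_\infty^\top s = s = 0$ already forces $s=0$) and $\mathcal{X}$ has non-empty interior, yet $\{x\in\R \,\vert\, A_\infty x< b_\infty\}\cap\mathcal{X} = \emptyset$: the Motzkin point exists only outside $\mathcal{X}$, so your proof of the inner inclusion cannot even start, although here the conclusion is trivially true ($\mc{Q}_\nu=\mc{Q}_\infty=\{0\}$). Worse, the same data with $b_\nu=-1/\nu$ shows that no argument from Assumption~\ref{ass:mega_param_VI}a alone can close the gap: there $\mc{Q}_\nu=\emptyset$ for every $\nu$ while $\mc{Q}_\infty=\{0\}$, so the lower inclusion $\mc{Q}_\infty\subseteq\liminf\mc{Q}_\nu$ itself fails. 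What your strategy really needs is an extra compatibility hypothesis, such as $\{x \,\vert\, A_\infty x< b_\infty\}\cap \textup{int}\,\mathcal{X}\neq\emptyset$, which is not implied by \eqref{eq:condition_R2} together with non-emptiness of $\textup{int}\,\mathcal{X}$.

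The paper's proof is organized so as to confine the delicate interaction between the polyhedron and $\mathcal{X}$ to a cited lemma. It first proves Kuratowski convergence of the polyhedra $S_\nu \defeq \{x\in\R^n \,\vert\, A_\nu x\le b_\nu\}$ to $S_\infty \defeq \{x\in\R^n \,\vert\, A_\infty x\le b_\infty\}$ \emph{without} $\mathcal{X}$: the inner inclusion is obtained not from a Slater point but by projecting, $x_\nu \defeq \Pi_{S_\nu}[\hat x]$, and invoking continuity of parametric quadratic programs \cite[Theorem 2.2]{Best1995}, whose regularity hypotheses are exactly what \eqref{eq:condition_R2} supplies; note that this polyhedral step survives both configurations above. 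The intersection with $\mathcal{X}$ is then handled by citing \cite[Lemma 1.4]{mosco1969convergence} (this is where the non-empty interior of $\mathcal{X}$ enters), and the Kuratowski-to-Hausdorff upgrade by citing \cite[Theorem 3]{salinetti1979convergence}, which is the statement you re-derive by hand with finite coverings. So your outer inclusion and Hausdorff step match the paper's in substance, but your inner inclusion takes a different route that, as written, is unsound; it would need either the projection-based argument or an explicit additional Slater-type hypothesis linking $S_\infty$ to $\textup{int}\,\mathcal{X}$.
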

\begin{proof}
We define $S_\nu \defeq \{x\in \R^n \vert A_\nu x \le b_\nu \}$, $S_\infty \defeq \{x\in \R^n \vert A_\infty x \le b_\infty \}$
and start by showing that $S_\nu\rightarrow \Sinf$.
To show $\lim\sup S_\nu \subseteq \Sinf$,
consider an arbitrary $\hat x\in\lim\sup S_\nu$, a sequence $(\nu_k)_{k=1}^\infty$, $\nu_k \rightarrow \infty$ and
points $x_{\nu_k}\in S_{\nu_k}$ such that $x_{\nu_k} \rightarrow \hat x$.
Since $A_{\nu_k}x_{\nu_k} \le b_{\nu_k}$ for all $k>0$, passing to the limit as $\nu_k\rightarrow \infty$ we obtain $A_\infty x\le b_\infty$, hence $\hat x\in \Sinf$. 

Conversely, we show $\Sinf \subseteq \lim\inf S_\nu$.
Consider an arbitrary $\hat x\in \Sinf$, to show that $\hat x\in\lim\inf S_\nu$ one needs to construct a sequence $(x_\nu)_{\nu=1}^\infty$ with $x_\nu\in S_{\nu}$ and such that $x_\nu\rightarrow \hat x$.
To this end, define $\tilde A: [0,1] \to \R^{m \times n}$, $\tilde b: [0,1] \to \R^{m \times 1}$ by
\begin{align*}
&\tilde A(t) \defeq \begin{cases} A_{\floor{{1}/{|t|}}} \quad &\text{if} \; t \in (0,1]
\\ A_\infty \quad &\text{if} \; t = 0
\end{cases}
\\
&\tilde b(t) \defeq \begin{cases} b_{\floor{{1}/{|t|}}} \quad &\text{if} \; t \in (0,1]
\\ b_\infty \quad &\text{if} \; t = 0.
\end{cases}
\end{align*}
Note that $A_\nu=\tilde A(1/\nu)$, $b_\nu=\tilde b(1/\nu)$ for all $\nu \in \mathbb{N}$.
By Assumption~\ref{ass:mega_param_VI}a,
\begin{align*}
\tilde A(t) &\xrightarrow[t\to 0]{} \tilde A(0) = A_\infty, \\ 
\tilde b(t) &\xrightarrow[t\to 0]{} \tilde b(0) = b_\infty.
\end{align*}
Let
\begin{equation*}
\begin{aligned}
\label{proj}
\tilde x(t)\defeq  \Pi_{\tilde S(t)} [\hat x] = \argmin{x\in\R^n } &\quad \| x-\hat x\|^2 \\
\mbox{s.t.} &\quad \tilde A(t) x\le \tilde b(t)
\end{aligned}
\end{equation*}
be the projection of $\hat x$ onto $\tilde S(t) \defeq \{x \in \R^n \vert \tilde A(t) x \le \tilde b(t)\}$.
%
Assumption~\ref{ass:mega_param_VI}a implies that the regularity conditions required by~\cite[Theorem 2.2]{Best1995} are met,
hence $\tilde x(t)$ is continuous at $t=0$, that is

\begin{equation*}
\tilde x(t) \to \tilde x(0) = \Pi_{\tilde S(0)} [\hat x] = \Pi_{S_\infty} [\hat x] = \hat x \,.
\end{equation*}
Consider now the sequence $(x_\nu \defeq \tilde x(1/\nu))_{\nu =1}^\infty$.
Clearly $x_\nu \in \tilde S(1 / \nu) = S_\nu$ and $\lim_{\nu \to \infty} x_\nu = \hat x$, thus proving that $\hat x \in \lim\inf S_\nu$.
We have thus shown that $S_\nu\rightarrow S_\infty$.
Since
$\mc{Q}_\nu = S_\nu \cap \mathcal{X}$ and $\mc{Q}_\infty = S_\infty \cap \mathcal{X}$,
$\mathcal{X}$ is closed and convex with non-empty interior and $S_\nu$ is closed and convex for all $\nu$, by \cite[Lemma 1.4]{mosco1969convergence} we have that
$\mc{Q}_\nu\rightarrow \mc{Q}_\infty$.

To conclude,
since $\mc{Q}_\nu$ are closed subsets of $\mathbb{R}^{n}$ for all $\nu$ and $\mc{Q}_\infty$ is compact and non-empty,
using \cite[Theorem 3]{salinetti1979convergence}
we obtain that $\mc{Q}_\nu\rightarrow \mc{Q}_\infty$ implies $d_H(\mc{Q}_\nu,\mc{Q}_\infty)\rightarrow 0$,
thus completing the proof.
\end{proof}

We use Lemma~\ref{lemma:hausdorff} and~\cite[Theorem A(b)]{mosco1969convergence} to show that
the solution of VI$(\mc{Q}_\nu, F_\nu)$ converges to the solution of VI$(\mc{Q}_\infty,F_\infty)$.

\begin{theorem}\label{thm:convergenceVI}
If Assumption~\ref{ass:mega_param_VI} holds then VI$(\mc{Q}_\infty, F_\infty)$ has a unique solution $\bar x_\infty$ and,
for $\nu > \nu_\textup{SMON}$, VI$(\mc{Q}_\nu, F_\nu)$ has a unique solution $\bar x_\nu$. 
Moreover
\begin{equation*}
\bar x_\nu \rightarrow \bar x_\aag.
\tag*{$\square$}
\end{equation*}
\end{theorem}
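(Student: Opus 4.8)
The plan is to split the statement into an existence/uniqueness part and a convergence part, dispatching the former with the monotone-VI machinery already recalled in the excerpt and the latter by feeding the set convergence of Lemma~\ref{lemma:hausdorff} into a classical stability result for monotone variational inequalities.

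For existence and uniqueness, I would first observe that $\mc{Q}_\nu = \{x \in \mc{X} \mid A_\nu x \le b_\nu\}$ and $\mc{Q}_\infty$ are closed and convex, being the intersection of the convex compact set $\mc{X}$ with a polyhedron. Assumption~\ref{ass:mega_param_VI}b supplies strong monotonicity of $F_\infty$ and of each $F_\nu$ with $\nu > \nuSMON$, so Proposition~\ref{prop:uniqueness} immediately yields the unique solutions $\bar x_\infty$ of VI$(\mc{Q}_\infty, F_\infty)$ and $\bar x_\nu$ of VI$(\mc{Q}_\nu, F_\nu)$. Non-emptiness of $\mc{Q}_\infty$, needed for a solution to exist, is exactly what Lemma~\ref{lemma:hausdorff} guarantees once the $\mc{Q}_\nu$ are non-empty, so no separate feasibility argument is required.

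For the convergence $\bar x_\nu \to \bar x_\infty$, I would argue by subsequences. Since every $\bar x_\nu$ lies in the compact set $\mc{X}$, any subsequence of $(\bar x_\nu)$ admits a further subsequence $\bar x_{\nu_k} \to x^\star \in \mc{X}$; the goal is to show $x^\star = \bar x_\infty$, after which uniqueness of the limit forces the whole sequence to converge. To identify $x^\star$ I would exploit both halves of the Kuratowski convergence from Lemma~\ref{lemma:hausdorff}: the inclusion $\lim\sup \mc{Q}_\nu \subseteq \mc{Q}_\infty$ gives $x^\star \in \mc{Q}_\infty$, while $\mc{Q}_\infty \subseteq \lim\inf \mc{Q}_\nu$ lets me approximate any test point $y \in \mc{Q}_\infty$ by a recovery sequence $y_{\nu_k} \in \mc{Q}_{\nu_k}$ with $y_{\nu_k} \to y$. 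Inserting $y_{\nu_k}$ into the variational inequality satisfied by $\bar x_{\nu_k}$ and passing to the limit should yield $F_\infty(x^\star)^\top(y - x^\star) \ge 0$ for every $y \in \mc{Q}_\infty$, i.e. $x^\star$ solves VI$(\mc{Q}_\infty, F_\infty)$ and hence equals $\bar x_\infty$.

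The delicate point, which I expect to be the main obstacle, is this limit passage: the operator is evaluated along the moving arguments $\bar x_{\nu_k}$ (or, after a Minty reformulation, $y_{\nu_k}$), whereas Assumption~\ref{ass:mega_param_VI}b supplies only \emph{pointwise} convergence $F_\nu(x) \to F_\infty(x)$ together with monotonicity, so one cannot naively control $F_{\nu_k}$ along a varying sequence. This is precisely the configuration handled by \cite[Theorem~A(b)]{mosco1969convergence}: rewriting the VI in its equivalent Minty form uses monotonicity to test the operator only at fixed points, and the combination of Kuratowski (equivalently Mosco) set convergence with pointwise operator convergence then closes the argument. My plan is therefore to discharge this step by verifying the hypotheses of \cite[Theorem~A(b)]{mosco1969convergence}, whose set-convergence input is exactly Lemma~\ref{lemma:hausdorff} and whose operator input is Assumption~\ref{ass:mega_param_VI}b, rather than re-deriving the stability estimate by hand.
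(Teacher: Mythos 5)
Your proposal is correct and follows essentially the same route as the paper: existence and uniqueness via Proposition~\ref{prop:uniqueness} from strong monotonicity of $F_\infty$ and $F_\nu$, and convergence by invoking \cite[Theorem A(b)]{mosco1969convergence}, whose set-convergence hypothesis is supplied by Lemma~\ref{lemma:hausdorff} and whose operator hypotheses follow from Assumption~\ref{ass:mega_param_VI}. The subsequence/Minty sketch you include is an accurate exposition of the mechanism inside Mosco's theorem rather than a different argument, so the two proofs coincide.
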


\begin{proof}
The fact that VI$(\mc{Q}_\infty, F_\infty)$ and VI$(\mc{Q}_\nu, F_\nu)$ (for $\nu > \nu_\textup{SMON}$) have a unique solution is an immediate consequence of Proposition~\ref{prop:uniqueness}. 
To prove convergence we apply~\cite[Theorem A(b)]{mosco1969convergence} to the sequence $(\text{VI}(\mc{Q}_\nu, F_\nu))_{\nu=1}^\infty$ and to
VI$(\mc{Q}_\infty, F_\infty)$.
All the assumptions of \cite[Theorem A(b)]{mosco1969convergence} are direct consequences of Assumption~\ref{ass:mega_param_VI},
except for $\mc{Q}_\nu\rightarrow \mc{Q}_\infty$, which is proven in Lemma~{\ref{lemma:hausdorff}}.
We can conclude that $\bar x_\nu \to \bar x_\infty$.
\end{proof}

\subsection{Auxiliary result and omitted proofs}
\label{appendix:C}

The following Lemma~\ref{lemma:hausdorff_proj} is used in the proof of Theorem~\ref{thm:convergence},
but it is reported here for ease of readability, as it uses the definitions of Appendix A and Lemma~\ref{lemma:hausdorff} of Appendix B.

\begin{lemma}
Consider the sequence $(\bar x_\nu \in \R^{Nn})_{\nu = 1}^\infty$,
let Assumption~\ref{ass:R2} hold and
suppose that $\bar x_\nu \rightarrow \bar x$.
Then, for every $\varepsilon>0$ there exists $\tilde \nu>0$ such that for all $\nu>\tilde \nu$, all $i \in \Z[1,N]$ and all $x^i\in\Qinf^i(\bar x^{-i}_\nu)$ there exists an $\tilde x^i_\nu\in \Qnu^i(\bar x^{-i}_\nu)$  such that $\|x^i-\tilde x^i_\nu\|\le \varepsilon$,
where $\Qinf^i(\bar x^{-i}_\nu)$ and $\Qnu^i(\bar x^{-i}_\nu)$
are defined in~\eqref{eq:game_aag:Qi} and~\eqref{eq:game_nu_nag:Qi} respectively. \hfill $\square$
\label{lemma:hausdorff_proj}
\end{lemma}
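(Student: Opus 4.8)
The claim is exactly one half of the Hausdorff convergence $d_H(\Qinf^i(\bar x^{-i}_\nu),\Qnu^i(\bar x^{-i}_\nu))\to 0$, so the plan is to establish this convergence (uniformly over the finitely many agents $i$) and then read off the one-sided bound, since $\sup_{x^i\in\Qinf^i(\bar x^{-i}_\nu)}\inf_{\tilde x\in\Qnu^i(\bar x^{-i}_\nu)}\|x^i-\tilde x\|$ is precisely one of the two terms defining $d_H$. Fixing $i$ and regarding $x^i$ as the only free variable, I would absorb the frozen blocks $\bar x^{-i}_\nu$ into the right-hand sides, so that each feasible set becomes the intersection of the fixed compact convex set $\X^i$ with a parametric polyhedron: $\Qnu^i(\bar x^{-i}_\nu)=\{x^i\in\X^i \mid [T^\nu]_{ki}\hat A\,x^i\le \hat b-\hat A\sum_{j\ne i}[T^\nu]_{kj}\bar x^j_\nu,\ \forall k\}$ and $\Qinf^i(\bar x^{-i}_\nu)=\{x^i\in\X^i \mid \frac1N\hat A\,x^i\le \hat b-\frac1N\hat A\sum_{j\ne i}\bar x^j_\nu\}$. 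This recasts the statement into the exact format of Appendix~B.

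By Lemma~\ref{lemma:lim_A_nu_F_nu} we have $[T^\nu]_{ki}\to\frac1N$ and $T^\nu\to\frac1N\ones[N]\ones[N]^\top$, and by hypothesis $\bar x_\nu\to\bar x$; hence both the constraint matrices and the right-hand sides of the two families above converge, as $\nu\to\infty$, to data defining the same limit set $\Qinf^i(\bar x^{-i})$. The plan is then to apply Lemma~\ref{lemma:hausdorff} separately to the sequence $(\Qnu^i(\bar x^{-i}_\nu))_\nu$ and to the sequence $(\Qinf^i(\bar x^{-i}_\nu))_\nu$, obtaining $d_H(\Qnu^i(\bar x^{-i}_\nu),\Qinf^i(\bar x^{-i}))\to0$ and $d_H(\Qinf^i(\bar x^{-i}_\nu),\Qinf^i(\bar x^{-i}))\to0$, and to combine the two through the triangle inequality for the Hausdorff distance. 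Since there are only $N$ agents, the resulting thresholds $\tilde\nu_i$ can be merged into a single $\tilde\nu=\max_i\tilde\nu_i$, yielding the uniformity in $i$.

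The main obstacle is to discharge the hypotheses of Lemma~\ref{lemma:hausdorff}, i.e.\ Assumption~\ref{ass:mega_param_VI}a, for the limit data of each family --- in particular the regularity condition~\eqref{eq:condition_R2}. I would verify it by repeating the cancellation used for~\eqref{eq:proof_show_R2}: for a dual vector $s\ge0$ annihilating the limit matrix one gets $\hat A^\top\hat s=0$, where $\hat s$ is the (block) sum of the components of $s$, and \emph{because} $\hat A^\top\hat s=0$ the inhomogeneous shift drops out, since $(\frac1N\hat A\sum_{j\ne i}\bar x^j)^\top\hat s=\frac1N(\sum_{j\ne i}\bar x^j)^\top\hat A^\top\hat s=0$; the inequality on the right-hand side therefore collapses to $\hat b^\top\hat s\le0$, and Assumption~\ref{ass:R2} forces $\hat s=0$, whence $s=0$. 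This cancellation is the crux: it is what prevents the affine shift coming from $\bar x^{-i}_\nu$ from breaking the constraint qualification. The only remaining caveat is that Lemma~\ref{lemma:hausdorff} needs its limit set $\Qinf^i(\bar x^{-i})$ to be non-empty; this holds in the intended use inside Theorem~\ref{thm:convergence}, where $\bar x=\bar x_\infty\in\mc{Q}_\infty$ is feasible so that $\bar x^i\in\Qinf^i(\bar x^{-i})$, and whenever $\Qinf^i(\bar x^{-i}_\nu)$ happens to be empty the assertion holds vacuously.
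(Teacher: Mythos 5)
Your proof is correct and takes essentially the same approach as the paper: both pivot through the fixed set $\Qinf^i(\bar x^{-i})$, apply Lemma~\ref{lemma:hausdorff} to the two families $\Qinf^i(\bar x^{-i}_\nu)$ and $\Qnu^i(\bar x^{-i}_\nu)$ after verifying~\eqref{eq:condition_R2} via the same cancellation argument (the shift term vanishes against $\hat A^\top \hat s = 0$, reducing to Assumption~\ref{ass:R2}), and conclude with the triangle inequality for the Hausdorff distance. Your explicit remarks on non-emptiness of the limit set and on uniformity in $i$ are points the paper leaves implicit, but they do not change the argument.
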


\begin{proof}
We show this statement in two steps.
Specifically, we show that for every $\varepsilon>0$ there exists $\tilde \nu>0$ such that for all $\nu> \tilde \nu$  and all $i\in\Z[1,N]$
\begin{enumerate}
\item $d_H( \Qinf^i(\bar x^{-i }) ,\Qinf^i(\bar x^{-i }_\nu)  )\le \varepsilon/2,$ and
\item $d_H( \Qinf^i(\bar x^{-i }) ,\mc{Q}^i_\nu(\bar x^{-i }_\nu)  )\le \varepsilon/2.$
\end{enumerate}
The conclusion then follows by the triangular inequality of the Hausdorff distance. \\
1) Note that
\begin{align*}
\Qinf^i(\bar x^{-i})&\defeq\{ x^i\in \X\i \vert \hat A x^i \le N \hat b-  \sum_{j\neq i} \hat A \bar x^{j}\eqdef b^i \},\\[-0.1cm]
\Qinf^i(\bar x^{-i}_\nu) &\defeq\{ x^i\in \X\i \vert  \hat A x^i \le N \hat b-  \sum_{j\neq i} \hat A \bar x^{j}_\nu\eqdef b^i_\nu\}.
\end{align*}
By assumption $\bar x_\nu \rightarrow \bar x$.
Consequently, $b^i_\nu\rightarrow b^i$.
We now show that the implication 
\begin{equation*}
\{ \hat A^\top  \hat s= 0 ,\quad   (b^i)^\top  \hat s\le 0    , \quad  \hat s\ge 0  \} \quad \Rightarrow  \quad  \hat s=0
\end{equation*}
holds.
The inequalities $(b^i)^\top  \hat s\le 0$ and $\hat A^\top  \hat s= 0$ imply
\begin{align*}
&N \hat b^\top \hat s \le \, \Bigr(\sum_{j\neq i}\hat A \bar x^{j} \Bigl)^{\!\! \top} \!\hat s = \sum_{j\neq i} (\bar x^j)^\top (\hat A  ^\top \hat s) =0 \\[-0.1cm]
&\Rightarrow\quad \hat b^\top \hat s\le 0.
\end{align*}
By Assumption~\ref{ass:R2} we obtain $\hat s=0$.
Consequently, the sets $\Qinf^i(\bar x^{-i}),\Qinf^i(\bar x^{-i}_\nu)$ satisfy~\eqref{eq:condition_R2}
and hence Assumption~\ref{ass:mega_param_VI}a,
so the conclusion follows by Lemma~\ref{lemma:hausdorff}. \\
2) It can be proven similarly as the previous one.
%
Note that the value of $\tilde \nu$ used in the proof might depend on $\hat A,\hat b,\bar x$, but not on $\bar x_\nu$.
This comes from proving the statement in two steps instead of applying Lemma~\ref{lemma:hausdorff} directly to $\Qnu\i(\bar x^{-i}_\nu) ,\Qinf\i(\bar x^{-i}_\nu)$.
\end{proof}

\subsection*{Proof of Lemma \ref{lemma:c3}}

\noindent Observe that
\begin{align*}
&(\Fnu(x)-\Fnu(y))^\top(x-y) = (\Finf(x)-\Finf(y))^\top(x-y) + \\
&(\Fnu(x) - \Finf(x)-(\Fnu(y)-\Finf(y)))^\top(x-y) \ge \\
&(\alpha_{\Finf} -\ell_\nu) \| x-y \|^2,
\end{align*}
where $\ell_\nu$ is the Lipschitz constant of $\Fnu - \Finf$ (recall that $J^i$ are twice continuously differentiable.
As $J\i(z_1,z_2)$ is twice-continuously differentiable in $z_1,z_2$
and $\snu(x) \to \sinf(x)$ uniformly in $x$,
then $\nabla_x (\Fnu(x) - \Finf(x)) \to 0$ uniformly in $x$
and hence $\ell_\nu\to 0$,
thus concluding the proof. \hfill $\smallblacksquare$

\subsection*{Proof of Lemma \ref{lemma:verifyAss3}}

We start by computing the operator $F_\infty(x)$.
\begin{equation*}
F_\infty(x)= [\nabla_{x^i} \biggl( a\i(r^i) + \sum_{e=1}^E c_e\i(t^i_e) \biggr)]_{i=1}^N + P(x),
\end{equation*}
where $P(x) \defeq -[\nabla_{x^i} (p(\sigma_\infty(x))^\top y^i)]_{i=1}^N$.
Since for each $i$ the functions $a\i$ and $c\i_e$ are strongly convex and continuously differentiable,
by~Proposition~\ref{prop:pd} and \cite[equation (12)]{scutari2010convex} there exists $\alpha > 0$ such that
\begin{equation*}
\nabla_x ([\nabla_{x^i} \biggl( a\i(r^i) + \sum_{e=1}^E c_e\i(t^i_e) \biggr)]_{i=1}^N) \succ \alpha I_{N (E+1)}, \; \forall \, x \in \X.
\end{equation*}
We now prove that $\nabla_x P(x)\succeq 0$ under either of the two conditions stated.
\\
1) We have
\begin{align*}
N \cdot P(x)& \defeq  \Bigl[ \sum_j H_i^\top D H_jx^j  + H_i^\top D^\top H_i x^i \! - \! N H_i^\top d \Bigr]_{i=1}^N \\[-0.1cm]
&= [H^\top D H + H_\text{blkd}^\top D^\top H_\text{blkd} ] x - N [H_i^\top d]_{i=1}^N,
\end{align*}
with $H\defeq([H_i^\top]_{i=1}^N)^\top$ and $H_\text{blkd}=\mbox{blkdiag}(H_1, \ldots, H_N)$. Moreover, since $D\succeq 0$, then
\begin{equation}
\nabla_x P(x)= \frac1N(H^\top D H + H_\text{blkd}^\top D^\top H_\text{blkd}) \succeq 0.
\end{equation}
2) Let $\tilde P(y)\defeq-[\nabla_{y^i} (p(\frac1N\sum_{j=1}^N y^j)^\top y^i) ]_{i=1}^N$.
It was shown in \cite[Corollary 1]{gentile2017nash}, that if~\eqref{eq:price_condition} holds,
then there exists $\alpha' > 0$ such that
$\nabla_y \tilde P(y) \succ \alpha' I_{N \! E}$.
Moreover, from $p(\sigma_\infty(x))^\top y^i = p(\frac1N\sum_{j=1}^N H^jx^j)^\top H^ix^i$ one immediately gets that
$P(x)=H_\text{blkd}^\top \tilde P(H_\text{blkd} x)$.
It follows that for any $x$ and corresponding $y = H_\text{blkd} x$,
\begin{equation}
\nabla_x P(x) = (H_\text{blkd}^\top \nabla_y \tilde P(y) H_\text{blkd})_{|y=H_\text{blkd} x} \succeq 0.
\label{eq:lemma2_temp}
\end{equation}
%
%
We have proven that $\nabla_x F_\infty(x)\succ \alpha I_{N(E+1)}$ for all $x$. Consequently,
$F_\infty$ is strongly monotone by Proposition~\ref{prop:pd} and the first statement of Assumption 3 holds.
The second statement in Assumption 3 can be proven by using Lemma~\ref{lemma:c3}.
\hfill $\smallblacksquare$

\subsection*{Proof of Lemma \ref{lem:4}}

The expression of $F_\nu(x)$ is very similar to $F_\infty(x)$ in Lemma~\ref{lemma:verifyAss3}
\begin{equation}
F_\nu(x)=[\nabla_{x^i} \biggl( a\i(r^i) + \sum_{e=1}^E c_e\i(t^i_e) \biggr)]_{i=1}^N  + P_\nu(x),
\end{equation}
where $P_\nu(x) \defeq -[\nabla_{x^i} (p(\snu^i(x))^\top y^i)]_{i=1}^N$.
Since $F_\nu$ is continuously differentiable, we can prove its strong monotonicity by showing that there exists $\alpha>0$ such that
$\nabla_x F_\nu(x) \succ \alpha I_{N(E+1)}$, thanks to Proposition~\ref{prop:pd}.
As in Lemma~\ref{lemma:verifyAss3}, there exists $\alpha > 0$ such that
$\nabla_x [\nabla_{x^i} \biggl( a\i(r^i) + \sum_{e=1}^E c_e\i(t^i_e) \biggr)]_{i=1}^N  \succ \alpha I_{N (E+1)}$ for all $x$,
hence the proof is concluded upon showing that for all $\nu$ even, $\nabla_x P_\nu(x)\succeq 0$.
If we denote $\tilde  P_\nu(y) \defeq -[\nabla_{y^i}( p(\sigma_\nu^i(y))^\top y^i )]_{i=1}^N$ then simple algebraic computations show that
\begin{equation}
\nabla_y\tilde P_\nu(y)= \mbox{blkdiag}([T^\nu]_{11} D, \ldots, [T^\nu]_{NN} D) +T^\nu\otimes D.
\label{eq:expression_Ptilde_nu}
\end{equation}
Note that $[T^\nu]_{ii} \ge 0$ for all $i$ and for all $\nu$, hence the first summand in~\eqref{eq:expression_Ptilde_nu} is positive semidefinite.
Note that, as the matrix $T$ is symmetric, 
$T^\nu\succeq 0$ for all $\nu$ even
and thus $T^\nu\otimes D \succeq 0$ since it is the Kronecker product of two symmetric positive semidefinite matrices.
Overall, we have $\nabla_y\tilde  P_\nu(y)\succeq 0$ for all $\nu$ even.
Finally, as in~\eqref{eq:lemma2_temp}, we have
\begin{equation*}
\nabla_x P_\nu(x) = (H_\text{blkd}^\top \nabla_y \tilde P_\nu(y) H_\text{blkd})_{|y=H_\text{blkd} x} \succeq 0.
\tag*{\smallblacksquare}
\end{equation*} 

\bibliographystyle{IEEEtran}
\bibliography{library.bib}

\begin{thebibliography}{10}
\providecommand{\url}[1]{#1}
\csname url@samestyle\endcsname
\providecommand{\newblock}{\relax}
\providecommand{\bibinfo}[2]{#2}
\providecommand{\BIBentrySTDinterwordspacing}{\spaceskip=0pt\relax}
\providecommand{\BIBentryALTinterwordstretchfactor}{4}
\providecommand{\BIBentryALTinterwordspacing}{\spaceskip=\fontdimen2\font plus
\BIBentryALTinterwordstretchfactor\fontdimen3\font minus
  \fontdimen4\font\relax}
\providecommand{\BIBforeignlanguage}[2]{{%
\expandafter\ifx\csname l@#1\endcsname\relax
\typeout{** WARNING: IEEEtran.bst: No hyphenation pattern has been}%
\typeout{** loaded for the language `#1'. Using the pattern for}%
\typeout{** the default language instead.}%
\else
\language=\csname l@#1\endcsname
\fi
#2}}
\providecommand{\BIBdecl}{\relax}
\BIBdecl

\bibitem{correa2004selfish}
J.~R. Correa, A.~S. Schulz, and N.~E. Stier-Moses, ``Selfish routing in
  capacitated networks,'' \emph{Mathematics of Operations Research}, vol.~29,
  no.~4, pp. 961--976, 2004.

\bibitem{alpcan2002cdma}
T.~Alpcan, T.~Ba{\c{s}}ar, R.~Srikant, and E.~Altman, ``{CDMA} uplink power
  control as a noncooperative game,'' \emph{Wireless Networks}, vol.~8, no.~6,
  pp. 659--670, 2002.

\bibitem{ma2013decentralized}
Z.~Ma, D.~S. Callaway, and I.~A. Hiskens, ``Decentralized charging control of
  large populations of plug-in electric vehicles,'' \emph{IEEE Transactions on
  Control Systems Technology}, vol.~21, no.~1, pp. 67--78, 2013.

\bibitem{johari2004efficiency}
R.~Johari and J.~N. Tsitsiklis, ``Efficiency loss in a network resource
  allocation game,'' \emph{Mathematics of Operations Research}, vol.~29, no.~3,
  pp. 407--435, 2004.

\bibitem{kar2012distributed}
S.~Kar and G.~Hug, ``Distributed robust economic dispatch in power systems: A
  consensus + innovations approach,'' in \emph{IEEE Power and Energy Society
  General Meeting}, 2012.

\bibitem{pan2009games}
Y.~Pan and L.~Pavel, ``Games with coupled propagated constraints in optical
  networks with multi-link topologies,'' \emph{Automatica}, vol.~45, no.~4, pp.
  871--880, 2009.

\bibitem{grammatico:parise:colombino:lygeros:14}
S.~Grammatico, F.~Parise, M.~Colombino, and J.~Lygeros, ``Decentralized
  convergence to {N}ash equilibria in constrained deterministic mean field
  control,'' \emph{IEEE Transactions on Automatic Control}, vol.~61, no.~11,
  pp. 3315--3329, 2016.

\bibitem{parise2015network}
F.~Parise, B.~Gentile, S.~Grammatico, and J.~Lygeros, ``Network aggregative
  games: {D}istributed convergence to {N}ash equilibria,'' in \emph{Proceedings
  of the IEEE Conference on Decision and Control}, 2015, pp. 2295--2300.

\bibitem{facchinei2007finite}
F.~Facchinei and J.~Pang, \emph{{Finite-dimensional variational inequalities
  and complementarity problems}}.\hskip 1em plus 0.5em minus 0.4em\relax
  Springer Science {\&} Business Media, 2007.

\bibitem{jensen2010aggregative}
M.~K. Jensen, ``Aggregative games and best-reply potentials,'' \emph{Economic
  theory}, vol.~43, no.~1, pp. 45--66, 2010.

\bibitem{yi2017distributed}
P.~Yi and L.~Pavel, ``A distributed primal-dual algorithm for computation of
  generalized {N}ash equilibria with shared affine coupling constraints via
  operator splitting methods,'' \emph{arXiv preprint arXiv:1703.05388}, 2017.

\bibitem{dario2015aggregative}
D.~Paccagnan, M.~Kamgarpour, and J.~Lygeros, ``On aggregative and mean field
  games with applications to electricity markets,'' in \emph{Proceedings of the
  IEEE European Control Conference}, 2016, pp. 196--201.

\bibitem{koshal2012gossip}
J.~Koshal, A.~Nedi\'c, and U.~V. Shanbhag, ``A gossip algorithm for aggregative
  games on graphs,'' in \emph{Proceedings of the IEEE Conference on Decision
  and Control}, 2012, pp. 4840--4845.

\bibitem{chen2014autonomous}
H.~Chen, Y.~Li, R.~H. Louie, and B.~Vucetic, ``Autonomous demand side
  management based on energy consumption scheduling and instantaneous load
  billing: An aggregative game approach,'' \emph{IEEE Transactions on Smart
  Grid}, vol.~5, no.~4, pp. 1744--1754, 2014.

\bibitem{parise2015networkA}
F.~Parise, S.~Grammatico, B.~Gentile, and J.~Lygeros, ``Network aggregative
  games and distributed mean field control via consensus theory,'' \emph{arXiv
  preprint arXiv:1506.07719}, 2015.

\bibitem{koshal2016distributed}
J.~Koshal, A.~Nedi{\'c}, and U.~V. Shanbhag, ``Distributed algorithms for
  aggregative games on graphs,'' \emph{Operations Research}, vol.~64, no.~3,
  pp. 680--704, 2016.

\bibitem{paccagnan2016distributed}
D.~Paccagnan, B.~Gentile, F.~Parise, M.~Kamgarpour, and J.~Lygeros,
  ``Distributed computation of generalized {N}ash equilibria in quadratic
  aggregative games with affine coupling constraints,'' in \emph{Proceedings of
  the IEEE Conference on Decision and Control}, 2016, pp. 6123--6128.

\bibitem{gentile2017nash}
B.~Gentile, F.~Parise, D.~Paccagnan, M.~Kamgarpour, and J.~Lygeros, ``Nash and
  {W}ardrop equilibria in aggregative games with coupling constraints,''
  \emph{arXiv preprint arXiv:1702.08789}, 2017.

\bibitem{rosen1965existence}
J.~B. Rosen, ``Existence and uniqueness of equilibrium points for concave
  {N}-person games,'' \emph{Econometrica: Journal of the Econometric Society},
  pp. 520--534, 1965.

\bibitem{facchinei2007generalized_2}
F.~Facchinei, A.~Fischer, and V.~Piccialli, ``On generalized {N}ash games and
  variational inequalities,'' \emph{Operations Research Letters}, vol.~35,
  no.~2, pp. 159--164, 2007.

\bibitem{liang2016distributed}
S.~Liang, P.~Yi, and Y.~Hong, ``Distributed {N}ash equilibrium seeking for
  aggregative games with coupled constraints,'' \emph{arXiv preprint
  arXiv:1609.02253}, 2016.

\bibitem{frazzoli}
M.~Zhua and E.~Frazzoli, ``Distributed robust adaptive equilibrium computation
  for generalized convex games,'' \emph{Automatica}, vol.~63, pp. 82--91, 2016.

\bibitem{yin2011nash}
H.~Yin, U.~V. Shanbhag, and P.~G. Mehta, ``Nash equilibrium problems with
  scaled congestion costs and shared constraints,'' \emph{IEEE Transactions on
  Automatic Control}, vol.~56, no.~7, pp. 1702--1708, 2011.

\bibitem{huang2007large}
M.~Huang, P.~E. Caines, and R.~P. Malham{\'e}, ``{Large-population cost-coupled
  LQG problems with nonuniform agents: Individual-mass behavior and
  decentralized $\varepsilon$-{N}ash equilibria},'' \emph{IEEE Transactions on
  Automatic Control}, vol.~52, no.~9, pp. 1560--1571, 2007.

\bibitem{lasry2007mean}
J.-M. Lasry and P.-L. Lions, ``Mean field games,'' \emph{Japanese Journal of
  Mathematics}, vol.~2, no.~1, pp. 229--260, 2007.

\bibitem{bensoussan2016linear}
A.~Bensoussan, K.~Sung, S.~C.~P. Yam, and S.-P. Yung, ``Linear-quadratic mean
  field games,'' \emph{Journal of Optimization Theory and Applications}, vol.
  169, no.~2, pp. 496--529, 2016.

\bibitem{huang2012social}
M.~Huang, P.~E. Caines, and R.~P. Malham{\'e}, ``{Social optima in mean field
  LQG control: {C}entralized and decentralized strategies},'' \emph{IEEE
  Transactions on Automatic Control}, vol.~57, no.~7, pp. 1736--1751, 2012.

\bibitem{bensoussan2013mean}
A.~Bensoussan, J.~Frehse, and P.~Yam, \emph{Mean field games and mean field
  type control theory}.\hskip 1em plus 0.5em minus 0.4em\relax Springer, 2013,
  vol. 101.

\bibitem{facchinei2007generalized}
F.~Facchinei and C.~Kanzow, ``Generalized {N}ash equilibrium problems,''
  \emph{4OR}, vol.~5, no.~3, pp. 173--210, 2007.

\bibitem{olfati2007consensus}
R.~Olfati-Saber, J.~A. Fax, and R.~M. Murray, ``Consensus and cooperation in
  networked multi-agent systems,'' \emph{Proceedings of the IEEE}, vol.~95,
  no.~1, pp. 215--233, 2007.

\bibitem{shaply1994potentialgames}
D.~Monderer and L.~S. Shapley, ``Potential games,'' \emph{Games and Economic
  Behavior}, vol.~14, pp. 124--143, 1996.

\bibitem{OldenburgDataset}
T.~Brinkhoff, ``A framework for generating network-based moving objects,''
  \emph{GeoInformatica}, vol.~6, no.~2, pp. 153--180, 2002.

\bibitem{dafermos1980traffic}
S.~Dafermos, ``Traffic~equilibrium~and~variational inequalities,''
  \emph{Transportation science}, vol.~14, no.~1, pp. 42--54, 1980.

\bibitem{dafermos1987oligopolistic}
S.~Dafermos and A.~Nagurney, ``Oligopolistic and competitive behavior of
  spatially separated markets,'' \emph{Regional science and urban economics},
  vol.~17, no.~2, pp. 245--254, 1987.

\bibitem{salinetti1979convergence}
R.~W. {G. Salinetti}, ``{On the convergence of sequences of convex sets in
  finite dimensions.}'' \emph{SIAM Review}, vol.~22, no.~4, pp. 18--33, 980.

\bibitem{nagurney2013network}
A.~Nagurney, \emph{Network economics: A variational inequality approach}.\hskip
  1em plus 0.5em minus 0.4em\relax Springer Science \& Business Media, 2013,
  vol.~10.

\bibitem{qiu1989sensitivity}
Y.~Qiu and T.~L. Magnanti, ``Sensitivity analysis for variational inequalities
  defined on polyhedral sets,'' \emph{Mathematics of Operations Research},
  vol.~14, no.~3, pp. 410--432, 1989.

\bibitem{tobin1986sensitivity}
R.~Tobin, ``Sensitivity analysis for variational inequalities,'' \emph{Journal
  of Optimization Theory and Applications}, vol.~48, no.~1, pp. 191--204, 1986.

\bibitem{kyparisis1987sensitivity}
J.~Kyparisis, ``Sensitivity analysis framework for variational inequalities,''
  \emph{Mathematical Programming}, vol.~38, no.~2, pp. 203--213, 1987.

\bibitem{dafermos1988sensitivity}
S.~Dafermos, ``Sensitivity analysis in variational inequalities,''
  \emph{Mathematics of Operations Research}, vol.~13, no.~3, pp. 421--434,
  1988.

\bibitem{patriksson2003sensitivity}
M.~Patriksson and R.~T. Rockafellar, ``Sensitivity analysis of aggregated
  variational inequality problems, with application to traffic equilibria,''
  \emph{Transportation Science}, vol.~37, no.~1, pp. 56--68, 2003.

\bibitem{tobin1990sensitivity}
R.~Tobin, ``Sensitivity analysis for a {C}ournot equilibrium,''
  \emph{Operations research letters}, vol.~9, no.~5, pp. 345--351, 1990.

\bibitem{mosco1969convergence}
U.~Mosco, ``Convergence of variational of convex sets and of solutions
  inequalities,'' \emph{Advances in Mathematics}, vol.~3, no.~4, pp. 510--585,
  1969.

\bibitem{Best1995}
M.~J. Best and B.~Ding, ``On the continuity of the minimum in parametric
  quadratic programs,'' \emph{Journal of Optimization Theory and Applications},
  vol.~86, no.~1, pp. 245--250, 1995.

\bibitem{boot1963sensitivity}
J.~C. Boot, ``On sensitivity analysis in convex quadratic programming
  problems,'' \emph{Operations Research}, vol.~11, no.~5, pp. 771--786, 1963.

\bibitem{yen1995lipschitz}
N.~D. Yen, ``Lipschitz continuity of solutions of variational inequalities with
  a parametric polyhedral constraint,'' \emph{Mathematics of Operations
  Research}, vol.~20, no.~3, pp. 695--708, 1995.

\bibitem{hadigheh2007sensitivity}
A.~G. Hadigheh, O.~Romanko, and T.~Terlaky, ``Sensitivity analysis in convex
  quadratic optimization: {S}imultaneous perturbation of the objective and
  right-hand-side vectors,'' \emph{Algorithmic Operations Research}, vol.~2,
  no.~2, p.~94, 2007.

\bibitem{bemporad2002explicit}
A.~Bemporad, M.~Morari, V.~Dua, and E.~N. Pistikopoulos, ``The explicit linear
  quadratic regulator for constrained systems,'' \emph{Automatica}, vol.~38,
  no.~1, pp. 3--20, 2002.

\bibitem{scutari2010convex}
G.~Scutari, D.~P. Palomar, F.~Facchinei, and J.~Pang, ``Convex optimization,
  game theory, and variational inequality theory,'' \emph{Signal Processing
  Magazine, IEEE}, vol.~27, no.~3, pp. 35--49, 2010.

\end{thebibliography}

\end{document}